\newcommand{\vc}[1]{{\boldsymbol #1}}
\newcommand{\abs}[1]{\left|#1\right|}
\newcommand{\set}[1]{\left\{#1\right\}}
\newcommand{\floor}[1]{\left\lfloor #1 \right\rfloor}
\newcommand{\norm}[1]{\left\|#1\right\|}
\newcommand{\bra}[1]{\left\langle#1\right|}
\newcommand{\ket}[1]{\left|#1\right\rangle}
\newcommand{\braket}[2]{\left\langle#1|#2\right\rangle}
\newcommand{\C}{\mathbb{C}}
\newcommand{\R}{\mathbb{R}}
\newcommand{\NOT}{\textstyle\mathop{\mathrm{NOT}}}
\newcommand{\XOR}{\oplus}
\newcommand{\V}{\checkmark}
\DeclareMathOperator*{\E}{\mathbb{E}}
\DeclareMathOperator*{\tr}{Tr}
\newcommand{\mx}[1]{\begin{pmatrix}#1\end{pmatrix}}
\newcommand{\smx}[1]{\bigl(\begin{smallmatrix}#1\end{smallmatrix}\bigr)}
\newcommand{\rac} [1]{\texorpdfstring{$#1 \mapsto 1$}{#1 --> 1}}                     % n -> 1 RAC
\newcommand{\racm}[2]{\texorpdfstring{$#1 \overset{p}{\mapsto} #2$}{#1 -p-> #2}}     % n -p-> m RAC 
\newcommand{\racp}[1]{\racm{#1}{1}}                                                  % n -p-> 1 RAC 
\newcommand{\racx}[2]{\texorpdfstring{$#1 \overset{#2}{\longmapsto} 1$}{#1 -#2-> 1}} % n -0.85-> 1 RAC 
\newcommand{\pure}{\mathcal{P}}             % pure strategy
\newcommand{\rand}{\mathcal{S}}             % strategy with SR
\newcommand{\uniform}{\eta}                 % uniform input distribution
\newcommand{\vis}{{\set{\vc{v}_i}}}         % the set of all v's
\newcommand{\rxs}{{\set{\vc{r}_x}}}         % the set of all r's
 \newcommand{\Svr}{S_{\vc{v},\vc{r}}}
 \newcommand{\Sv}{S_{\vc{v}}}
\newcommand{\base}{\mathcal{B}}
\newcommand{\spacer}{\ }
\newcommand{\p}[1]
{\ifthenelse{\equal{#1}{2}}      {0.853553} % {0.8535533906}
{\ifthenelse{\equal{#1}{3}}      {0.788675} % {0.7886751346}
{\ifthenelse{\equal{#1}{4}}      {0.741481} % {0.7414814566}
{\ifthenelse{\equal{#1}{4[sym]}} {0.733253} % {0.7332531755}
{\ifthenelse{\equal{#1}{5}}      {0.713578} % {0.7135779205}
{\ifthenelse{\equal{#1}{6}}      {0.694046} % {0.6940463870}
{\ifthenelse{\equal{#1}{6[sym]}} {0.694042} % {0.6940418856}
{\ifthenelse{\equal{#1}{6[ort]}} {0.686973} % {0.6869727395}
{\ifthenelse{\equal{#1}{9}}      {0.656893} % {0.6568927813}
{\ifthenelse{\equal{#1}{9[sym]}} {0.656393} % {0.6563927998}
{\ifthenelse{\equal{#1}{15}}     {0.620355} % {0.6203554614}
{\ifthenelse{\equal{#1}{15[sym]}}{0.620183} % {0.6201829084}
{}}}}}}}}}}}}}
\newcommand{\vertskip}{\vspace{1.3ex}}
\newcommand{\subsubsubsection}[1]{{\vertskip\noindent\indent\textbf{#1}\nopagebreak[1]}}
\newcommand{\image}[5]{
  \begin{figure}
    \begin{center}
      % \ifthenelse{\equal{#2}{}}
      % {\includegraphics[width=#1\textwidth]{Smiley}}
         \includegraphics[width=#1\textwidth]{#2}
      \caption[#3]{#4}
      \label{#5}
    \end{center}
  \end{figure}
}
\newcommand{\doubleimage}[8]{
  \begin{figure}
    \centering
    \begin{minipage}[t]{0.4\linewidth} %0.48
      \centering
      \includegraphics[width=\textwidth]{#1}
      \caption[#2]{#3}
      \label{#4}
    \end{minipage}
    \hspace{0.06\linewidth} %0.02
    \begin{minipage}[t]{0.4\linewidth}
      \centering
      \includegraphics[width=\textwidth]{#5}
      \caption[#6]{#7}
      \label{#8}
    \end{minipage}
  \end{figure}
}
\newtheorem{theorem}{Theorem}
\newtheorem{lemma}{Lemma}
\newtheorem*{problem}{Problem}
\newtheorem*{conjecture}{Conjecture}
\theoremstyle{definition}
\newtheorem*{definition}{Definition}
\newtheorem*{example}{Example}
\begin{document}

%%%%%%%%%
% TITLE %
%%%%%%%%%

\begin{titlepage}
\thispagestyle{plain}
\newcommand{\aff}[2]{\mbox{#1$^\text{#2}$}}

\begin{center}
  \vspace*{0.5in}
  {\LARGE Quantum Random Access Codes\\with Shared Randomness}\\[0.5cm]
  \vspace*{0.2in}

  {\large
    \aff{Andris Ambainis}{\hyperlink{Riga}{a}},
    \aff{Debbie Leung}{\hyperlink{Waterloo}{b}},\\[0.1cm]
    \aff{Laura Mancinska}{\hyperlink{Riga}{a},\hyperlink{Waterloo}{b}},
    \aff{Maris Ozols}{\hyperlink{Riga}{a},\hyperlink{Waterloo}{b}}}\\[0.5cm]
  \hypertarget{Riga}{$^a$}
    \textit{Department of Computing, University of Latvia,\\
    Raina bulv. 19, Riga, LV-1586, Latvia}\\[0.2cm]
  \hypertarget{Waterloo}{$^b$}
    \textit{Department of Combinatorics and Optimization, and Institute for Quantum\\ Computing,
    University of Waterloo, Waterloo, Ontario, N2L 3G1, Canada}\\[0.5cm]
  {\large June 12, 2009}\\[1.0cm]
\end{center}

\makeatletter
\def\support{\xdef\@thefnmark{}\@footnotetext}
\makeatother
\support{AA was supported by University of Latvia Grant \mbox{ZP01-100} (at University of Latvia) and CIAR, NSERC, ARO, MITACS and IQC University Professorship (at University of Waterloo). DWL was funded by the CRC, \mbox{ORF-RI}, NSERC, CIFAR, MITACS, and \mbox{Quantum-Works}.}

\begin{abstract}
We consider a communication method, where the sender encodes $n$ classical bits into $1$ qubit and sends it to the receiver who performs a certain measurement depending on which of the initial bits must be recovered. This procedure is called \racm{n}{1} quantum random access code (QRAC) where $p>1/2$ is its success probability. It is known that \racx{2}{0.85} and \racx{3}{0.79} QRACs (with no classical counterparts) exist and that \racp{4} QRAC with $p>1/2$ is not possible.

We extend this model with shared randomness (SR) that is accessible to both parties. Then \racp{n} QRAC with SR and $p>1/2$ exists for any $n \geq 1$. We give an upper bound on its success probability (the known \racx{2}{0.85} and \racx{3}{0.79} QRACs match this upper bound). We discuss some particular constructions for several small values of $n$.

We also study the classical counterpart of this model where $n$ bits are encoded into $1$ bit instead of $1$ qubit and SR is used. We give an optimal construction for such codes and find their success probability exactly---it is less than in the quantum case.

\end{abstract}

\begin{center}
  Supplementary materials are available on-line at \\
  \href{http://home.lanet.lv/~sd20008/racs}{\texttt{http://home.lanet.lv/$\sim$sd20008/racs}}
\end{center}

\end{titlepage}

\newpage
\setcounter{page}{2}
\tableofcontents
\newpage

 % INPUT

%%%%%%%%%%%%%%%%%%%%%%
\section{Introduction}
%%%%%%%%%%%%%%%%%%%%%%

%------------------------------%
\subsection{Random access codes}
%------------------------------%

%Let us consider the following situation: Alice wants to send an $n$ bit message to Bob, but she knows that Bob actually does not need the whole message but only some part of it, lets say, $1$ bit. Unfortunately there is no way for Bob to tell which bit he needs. In such case Alice must send the whole message, if she wants to be certain that Bob will get his bit correctly whichever it is. But if we relax the condition about certainty, then a question arises---can she send less than $n$ bits? This is what random access codes are about.

In general \emph{random access code} (or simply RAC) stands for ``encoding a long message into fewer bits with the ability to recover (decode) any one of the initial bits (with some probability of success)''. A random access code can be characterized by the symbol ``\racm{n}{m}'' meaning that $n$ bits are encoded into $m$ and any one of the initial bits can be recovered with probability at least $p$. We require that $p > 1/2$ since $p = 1/2$ can be achieved by guessing. In this paper we consider only the case when $m = 1$. So we have the following problem:

\begin{problem}[Classical]
There are two parties---Alice and Bob. Alice is asked to encode some classical \mbox{$n$-bit} string into $1$ bit and send this bit to Bob. We want Bob to be able to recover any one of the $n$ initial bits with high success probability.
\end{problem}

Note that Alice does not know in advance which bit Bob will need to recover, so she cannot send only that bit. If they share a quantum channel then we have the quantum version of the previous problem:

\begin{problem}[Quantum]
Alice must encode her classical \mbox{$n$-bit} message into $1$ \emph{qubit} (quantum bit) and send it to Bob. He performs some measurement on the received qubit to extract the required bit (the measurement that is used depends on which bit is needed).
\end{problem}

Both problems look similar, however the quantum version has an important feature. In the classical case the fact that Bob can recover any one of the initial bits implies that he can actually recover all of them---each with high probability of success. Surprisingly in the quantum case this is not true, because after the first measurement the state of the qubit will be disturbed and further attempts to extract more information can fail.

%-----------------------------------%
\subsection{History and applications} \label{sect:History}
%-----------------------------------%

As noted in \cite{Galvao,Severini}, the idea behind \emph{quantum random access codes} or QRACs is very old (relative to quantum information standards). It first appeared in a paper by Stephen Wiesner \cite{Wiesner} published in 1983 and was called \textit{conjugate coding}. Later these codes were \mbox{re-discovered} by Ambainis et al. in \cite{DenseCoding1,DenseCoding2}. They show that there exists \racx{2}{0.85} QRAC and mention its immediate generalization to \racx{3}{0.79} QRAC due to Chuang (see also \cite{No41} and \cite{Severini} for more details). However, Hayashi et al. \cite{No41} show that it is impossible to construct a \racp{4} QRAC with $p>1/2$. We will discuss these results more in Sect.~\ref{sect:KnownQRACs}.

There has also been work on \racm{n}{m} codes with $m>1$, see \cite{DenseCoding1,DenseCoding2,Nayak}. Ambainis et al. \cite{DenseCoding1} show that if a \racm{n}{m} QRAC with $p>1/2$ exists, then $m = \Omega(n / \log n)$, which was later improved by Nayak \cite{Nayak,DenseCoding2} to $m \geq (1-H(p)) n$, where $H(p) = -p \log p - (1-p) \log (1-p)$ is the \emph{binary entropy function}. Other generalizations include: considering \mbox{$d$-valued} bits instead of qubits \cite{Galvao,Severini} and recovering several rather than a single bit \cite{Hypercontractive}.

Originally quantum random access codes were studied in the context of quantum finite automata \cite{DenseCoding1,DenseCoding2,Nayak}. However, they also have applications in quantum communication complexity \cite{Galvao,Klauck,Aaronson,Gavinsky}, in particular for \emph{network coding} \cite{No41,NetworkCoding} and \emph{locally decodable codes} \cite{Kerenidis,KerenidisDeWolf,Wehner,Hypercontractive}. Recently results on quantum random access codes have been applied for quantum state learning \cite{AaronsonLearnability}.

Experimental feasibility of QRACs and their relation to \emph{contextuality} and \emph{non-locality} has been discussed in \cite[Chapter 7]{Galvao}. Recently a similar protocol called \emph{parity-oblivious multiplexing} has been considered in \cite{Spekkens}. It has an additional cryptographic constraint that Alice is not allowed to transmit any information about the parity of the input string. In addition \cite{Spekkens} also discuss the first experimental demonstration of \rac{2} and \rac{3} QRACs.

We want to emphasize the setting in which the impossibility of \racp{4} QRAC with $p>1/2$ was proved in \cite{No41}: Alice is allowed to perform a locally randomized encoding of the given string into a one-qubit state and Bob is allowed to perform different \emph{positive operator-valued measure} (POVM) measurements to recover different bits. This is the most general setting when information is encoded into a one-qubit state and both parties are allowed to use randomized strategies, but only have access to local coins. However, we can consider an even more general setting---when both parties share a common coin. This means that Alice and Bob are allowed to cooperate by using some shared source of randomness to agree on which strategy to use. We will refer to this source as a \emph{shared random string} or \emph{shared randomness} (SR). Note that shared randomness is a more powerful resource than local randomness, since parts of the shared random string can be exclusively used only by Alice or Bob to simulate local coins. It turns out that in this new setting \racp{4} QRAC is possible with $p>1/2$. In fact, \racp{n} QRACs with $p > 1/2$ can be constructed for all $n \geq 1$ (see Sect.~\ref{sect:LowerBounds}).

%-----------------------------%
\subsection{Outline of results}
%-----------------------------%

In Sect.~\ref{sect:ClassicalRACs} we study classical \rac{n} random access codes with shared randomness. In Sect.~\ref{sect:Yao} we introduce Yao's principle that is useful for understanding both classical and quantum codes. A classical code that is optimal for all $n$ is presented in Sect.~\ref{sect:OptimalClassical} and the asymptotic behavior of its success probability is considered in Sect.~\ref{sect:ClassicalBound}.

In Sect.~\ref{sect:QuantumRACs} we study quantum random access codes with shared randomness. In Sect.~\ref{sect:KnownQRACs} we discuss what is known in the case when shared randomness is not allowed, i.e., \rac{2} and \rac{3} QRACs and the impossibility of \rac{4} QRAC. In Sect.~\ref{sect:UpperBound} we give an upper bound of success probability of QRACs with SR and generalize it in Sect.~\ref{sect:GeneralUpperBound} for POVM measurements. In Sect.~\ref{sect:LowerBounds} we give two constructions of \racp{n} QRAC with SR and $p > 1/2$ for all $n \geq 2$ that provide a lower bound for success probability.

In Sect.~\ref{sect:Constructions} we try to find optimal QRACs with SR for several small values of $n$. In particular, in Sect.~\ref{sect:Numerical} we discuss QRACs obtained by numerical optimization, and in Sect.~\ref{sect:Symmetric} we consider symmetric constructions.

Finally, we conclude in Sect.~\ref{sect:Conclusion} with a summary of the obtained results (Sect. \ref{sect:Summary}), a list of open problems (Sect.~\ref{sect:OpenProblems}) and possible generalizations (Sect.~\ref{sect:Generalizations}).

%%%%%%%%%%%%%%%%%%%%%%%%%%%%%%%%%%%%%%%
\section{Classical random access codes} \label{sect:ClassicalRACs}
%%%%%%%%%%%%%%%%%%%%%%%%%%%%%%%%%%%%%%%

%----------------------------------------------------------%
\subsection{Types of classical encoding-decoding strategies}
%----------------------------------------------------------%

As a synonym for random access code we will use the term \emph{strategy} to refer to the joint \emph{encoding-decoding scheme} used by Alice and Bob. Two measures of how good the strategy is will be used: the \emph{worst case success probability} and the \emph{average success probability}. Both probabilities must be calculated over all possible pairs $(x,i)$ where $x\in\set{0,1}^n$ is the input and $i\in\set{1,\dotsc,n}$ indicates which bit must be recovered. We are interested in the worst case success probability, but in our case according to Yao's principle (introduced in Sect.~\ref{sect:Yao}) the average success probability can be used to estimate it.

Depending on the computational model considered, different types of strategies are allowed. The simplest type corresponds to Alice and Bob acting deterministically and independently.

\begin{definition}
A \emph{pure classical \rac{n} encoding-decoding strategy} is an ordered tuple $(E,D_1,\dotsc,D_n)$ that consists of an \emph{encoding function} $E: \set{0,1}^n \mapsto \set{0,1}$ and $n$ \emph{decoding functions} $D_i: \set{0,1} \mapsto \set{0,1}$.
\end{definition}

These limited strategies yield RACs with poor performance. This is because Bob can recover all bits correctly for no more than two input strings, since he receives either $0$ or $1$ and acts deterministically in each case. For all other strings at least one bit will definitely be recovered incorrectly, therefore the worst case success probability is $0$. If we allow Alice and Bob to act probabilistically but without cooperation, then we get mixed strategies.

\begin{definition}
A \emph{mixed classical \rac{n} encoding-decoding strategy} is an ordered tuple $(P_E,P_{D_1},\dotsc,P_{D_n})$ of probability distributions. $P_E$ is a distribution over encoding functions and $P_{D_i}$ over decoding functions.
\end{definition}

It is obvious that in this setting the worst case probability is at least $1/2$. This is obtained by guessing---we output either $0$ or $1$ with probability $1/2$ regardless of the input. Formally this means that for each $i$, $P_{D_i}$ is a uniform distribution over two constant decoding functions $0$ and $1$. It has been shown that in this setting for \rac{2} case one cannot do better than guessing, i.e., there is no \racp{2} RAC with worst case success probability $p > 1/2$ \cite{DenseCoding1,DenseCoding2}.

However, we can allow cooperation between Alice and Bob---they can use a shared random string to agree on some joint strategy. 

\begin{definition}
A \emph{classical \rac{n} encoding-decoding strategy with shared randomness} is a probability distribution over pure classical strategies.
\end{definition}

Note that this is the most general randomized setting, since both randomized cooperation and local randomization are possible. This is demonstrated in the following example.

\begin{example}
Consider the following strategy: randomly agree on $i\in\set{1,\dotsc,n}$ and send the \mbox{$i$th} bit; if the \mbox{$i$th} bit is requested, output the received bit, otherwise guess. This strategy can formally be specified as follows: uniformly choose a pure strategy from the set
\begin{equation}
  \bigcup_{i \in \set{1,\dotsc,n}}
  \bigl\{
    (e_i, c_1, \dotsc, c_{i-1}, d, g_1, \dotsc, g_{n-i}) \mid
    c \in \set{d_0,d_1}^{i-1},
    g \in \set{d_0,d_1}^{n-i}
  \bigr\},
\end{equation}
where the encoding function $e_i$ is given by $e_i(x) = x_i$ and decoding functions $d_0$, $d_1$, and $d$ are given by $d_0(b) = 0$, $d_1(b) = 1$, and $d(b) = b$, where $b$ is the received bit. The total amount of required randomness is $n - 1 + \log n$ bits, because one out of $n \cdot 2^{n-1}$ pure strategies must be selected. Note that only $\log n$ of these bits must be shared among Alice and Bob, so that they can agree on the value of $i$. The remaining $n - 1$ random bits are needed only by Bob for choosing random decoding functions $c \in \set{d_0,d_1}^{i-1}$ and $g \in \set{d_0,d_1}^{n-i}$.
\end{example}

Note that the amount of randomness used in the above example can be reduced. Since only one bit must be recovered, there is no need to choose each of the decoding functions independently. Thus Bob needs only one random bit that he will output whenever some bit other than the \mbox{$i$th} bit is requested. This is illustrated in the next example.

\begin{example}
Alice and Bob uniformly sample a pure strategy from the following set:
\begin{equation}
  \bigl\{
    (e_i, \underbrace{c, \dotsc, c}_{i-1}, d, \underbrace{c, \dotsc, c}_{n-i}) \mid
    1 \leq i \leq n,
    c \in \set{d_0,d_1}
  \bigr\}.
\end{equation}
This requires $\log n$ random bits to be shared among Alice and Bob and $1$ private random bit for Bob, i.e., $1 + \log n$ random bits in total.
\end{example}

We are interested in classical strategies with SR, because they provide a classical analogue of QRACs with SR. However, in this setting finding the optimal strategy seems to be hard, therefore we will turn to Yao's principle for help.

%--------------------------%
\subsection{Yao's principle} \label{sect:Yao}
%--------------------------%

When dealing with randomized algorithms, it is hard to draw general conclusions (like proving optimality of a certain randomized algorithm) because the possible algorithms may form a continuum. In such situations it is very helpful to apply Yao's principle \cite{Yao}. This allows us to shift the randomness in the algorithm to the input and consider only deterministic algorithms.

Let $\rand$ be a classical strategy with SR. One can think of it as a stochastic process consisting of applying the encoding map $E$ to the input $x$, followed by applying the decoding map $D_i$ to the \mbox{$i$th} bit. Both of these maps depend on the value of the shared random string. The result of $\rand$ is $\rand(x,i) = D_i(E(x))$, which is a stochastic variable over the set $\set{0,1}$. Let $\Pr[\rand(x,i) = x_i]$ denote the probability that the stochastic variable $\rand(x,i)$ takes value $x_i$. Then the worst case success probability of the optimal classical strategy with SR is given by
\begin{equation}
  \max_{\rand} \min_{x,i} \Pr[\rand(x,i) = x_i].
  \label{eq:YaoRandomized}
\end{equation}

Let $\mu$ be some distribution over the input set $\set{0,1}^n \times \set{1,\dotsc,n}$ and let ${\textstyle\Pr_{\mu}[\pure(x,i) = x_i]}$ denote the expected success probability of a pure (deterministic) strategy $\pure$. If the ``hardest'' input distribution is chosen as $\mu$, then the expected success probability of the best pure strategy for this distribution is
\begin{equation}
  \min_{\mu} \max_{\pure} {\textstyle\Pr_{\mu}[\pure(x,i) = x_i]}.
  \label{eq:YaoPure}
\end{equation}

\emph{Yao's principle} states that the quantities given in (\ref{eq:YaoRandomized}) and (\ref{eq:YaoPure}) are equal \cite{Yao}:
\begin{equation}
  \max_{\rand} \min_{x,i} \Pr[\rand(x,i) = x_i] =
  \min_{\mu} \max_{\pure} {\textstyle\Pr_{\mu}[\pure(x,i) = x_i]}.
  \label{eq:Yao}
\end{equation}
Thus Yao's principle provides us with an upper bound for the worst case probability (\ref{eq:YaoRandomized}). All we have to do is to choose an arbitrary input distribution $\mu_0$ and find the best pure strategy $\pure_0$ for it. Then according to Yao's principle we have
\begin{equation}
  {\textstyle\Pr_{\mu_0}[\pure_0(x,i) = x_i]} \geq \max_{\rand} \min_{x,i} \Pr[\rand(x,i) = x_i],
  \label{eq:YaoBound}
\end{equation}
with equality if and only if $\mu_0$ is the ``hardest'' distribution. It turns out that for random access codes the uniform distribution $\uniform$ is the ``hardest''. To prove it, we must first consider the randomization lemma.

\begin{lemma}
$\forall \pure \exists \rand: \min_{x,i} \Pr[\rand(x,i) = x_i] = {\textstyle\Pr_{\uniform}[\pure(x,i) = x_i]},$ where $\uniform$ is the uniform distribution. In other words: the worst case success probability of $\rand$ is the same as the average case success probability of $\pure$ with uniformly distributed input.
\label{lem:Randomization}
\end{lemma}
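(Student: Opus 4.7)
The plan is, given a pure strategy $\pure = (E, D_1, \dotsc, D_n)$, to construct a randomized strategy $\rand$ with shared randomness whose success probability is the \emph{same constant} on every input $(x, i)$, and equal to the uniform-average success probability of $\pure$. The idea is to symmetrize $\pure$ by randomly relabeling both the bit values and the bit positions, with the shared random string used to coordinate Alice and Bob so that the relabelings can be undone.

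Concretely, I would let the shared random string consist of a uniformly random mask $r \in \set{0,1}^n$ together with a uniformly random permutation $\sigma \in S_n$. Given input $x$, Alice sends $E(y)$, where $y$ is defined by $y_{\sigma(k)} = x_k \XOR r_k$ for every $k$ (i.e.\ she XORs by $r$ and then permutes positions by $\sigma$). On query $i$, Bob outputs $D_{\sigma(i)}$ applied to the received bit, XORed with $r_i$. Since $y_{\sigma(i)} = x_i \XOR r_i$, Bob's output equals $x_i$ precisely when $D_{\sigma(i)}(E(y)) = y_{\sigma(i)}$, i.e., precisely when $\pure$ succeeds on the transformed input $(y, \sigma(i))$.

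The crucial step is then to check that, for \emph{every} fixed $(x, i)$, the pair $(y, j) := (y, \sigma(i))$ is uniformly distributed over $\set{0,1}^n \times \set{1, \dotsc, n}$ as $(r, \sigma)$ ranges uniformly over $\set{0,1}^n \times S_n$. A quick counting argument does this: for any target $(y, j)$, the constraint $\sigma(i) = j$ picks out $(n-1)!$ permutations, and for each such $\sigma$ the mask $r$ is uniquely determined by $r_k = x_k \XOR y_{\sigma(k)}$; so every target pair $(y, j)$ has exactly $(n-1)!$ preimages, giving the uniform distribution. Hence $\Pr[\rand(x,i) = x_i] = \Pr_\uniform[\pure(x,i) = x_i]$ independently of $(x, i)$, and the minimum over $(x, i)$ is just this common value.

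The only real obstacle is notational bookkeeping: making sure that Alice's permutation-plus-XOR is exactly inverted by Bob's choice of decoder index $\sigma(i)$ and post-XOR by $r_i$, and that the counting of $(r, \sigma)$-preimages is clean. Once these are in place the lemma follows directly; no additional machinery is needed.
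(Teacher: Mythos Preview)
Your proposal is correct and follows essentially the same approach as the paper: construct $\rand$ by input randomization using shared randomness (an XOR mask plus a random relabeling of bit positions), so that every input $(x,i)$ is mapped to a uniformly random input for $\pure$. The only difference is that the paper uses a random \emph{cyclic shift} $d\in\set{0,\dotsc,n-1}$ for position relabeling rather than a full permutation $\sigma\in S_n$; this costs only $n+\log n$ shared random bits instead of your $n+\log(n!)$, but the argument and conclusion are identical.
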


\begin{proof}
This can be achieved by randomizing the input with the help of the shared random string. Alice's input can be randomized by \mbox{XOR-ing} it with an \mbox{$n$-bit} random string $r$. But Bob's input can be randomized by adding (modulo $n$) a random number $d\in\set{0,\dotsc,n-1}$ to it (assume for now that bits are numbered from $0$ to $n-1$). To obtain a consistent strategy, these actions must be identically performed on both sides, thus a shared random string of $n + \log n$ bits\footnote{We will not worry about how Bob obtains a uniformly distributed $d$ from a string of random bits when $n \neq 2^k$.} is required. Assume that $E$ and $D_i$ are the encoding and decoding functions of the pure strategy $\pure$; then the new strategy $\rand$ is
\begin{align}
  E'(x) &= E(\mathrm{Shift}_d(x \oplus r)), \\
  D'_i(b) &= D_{i + d \bmod n}(b) \oplus r_{i},
\end{align}
where $\mathrm{Shift}_d(s)$ substitutes $s_{i + d \bmod n}$ by $s_i$ in string $s$. Due to input randomization, this strategy has the same success probability for all inputs $(x,i)$, namely
\begin{equation}
  \Pr[\rand(x,i) = x_i] \spacer
    = \sum_{y\in\set{0,1}^n} \sum_{j=0}^{n-1} \frac{1}{2^n \cdot n} \Pr[\pure(y,j) = y_j]
    = {\textstyle\Pr_{\uniform}[\pure(y,j) = y_j]},
\end{equation}
coinciding with the average success probability of the pure strategy $\pure$.
\end{proof}

Now we will show that inequality (\ref{eq:YaoBound}) becomes an equality when $\mu_0 = \uniform$, meaning that the uniform distribution $\uniform$ is the ``hardest''.

\begin{lemma}
The minimum of (\ref{eq:YaoPure}) is reached at the uniform distribution $\uniform$, i.e.,
\begin{equation}
  \min_{\mu} \max_{\pure} {\textstyle\Pr_{\mu}[\pure(x,i) = x_i]} =
  \max_{\pure} {\textstyle\Pr_{\uniform}[\pure(x,i) = x_i]}.
  \label{eq:HardestDistribution}
\end{equation}
\end{lemma}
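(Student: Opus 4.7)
The plan is to prove the equality by sandwiching it between two easy inequalities, using Yao's principle and the randomization lemma already established.

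First, I would observe the trivial direction: since $\uniform$ is a particular choice of input distribution, we immediately get
\begin{equation}
  \min_{\mu} \max_{\pure} {\textstyle\Pr_{\mu}[\pure(x,i) = x_i]} \leq
  \max_{\pure} {\textstyle\Pr_{\uniform}[\pure(x,i) = x_i]}.
\end{equation}

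Next, I would establish the reverse inequality by combining Lemma~\ref{lem:Randomization} with Yao's principle~(\ref{eq:Yao}). For any pure strategy $\pure$, Lemma~\ref{lem:Randomization} produces a strategy with shared randomness $\rand$ satisfying $\min_{x,i} \Pr[\rand(x,i) = x_i] = {\textstyle\Pr_{\uniform}[\pure(x,i) = x_i]}$. Taking the maximum over $\pure$ on the right and bounding by the maximum over all $\rand$ on the left gives
\begin{equation}
  \max_{\pure} {\textstyle\Pr_{\uniform}[\pure(x,i) = x_i]} \leq
  \max_{\rand} \min_{x,i} \Pr[\rand(x,i) = x_i].
\end{equation}
By Yao's principle, the right-hand side equals $\min_{\mu} \max_{\pure} {\textstyle\Pr_{\mu}[\pure(x,i) = x_i]}$, closing the sandwich and forcing equality throughout.

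Since both ingredients are already in hand, there is no real obstacle; the only subtlety worth double-checking is that the randomized strategy produced by Lemma~\ref{lem:Randomization} is genuinely admissible in the $\max_{\rand}$ on the right-hand side of Yao (i.e., it is a legitimate classical strategy with shared randomness), which is clear from its explicit construction via XOR-ing with $r$ and shifting by $d$.
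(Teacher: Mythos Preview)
Your proof is correct and is essentially the same as the paper's: both combine the trivial specialization $\mu=\uniform$, the randomization lemma, and Yao's principle to sandwich the two quantities. The only cosmetic difference is ordering---the paper first proves $\max_{\pure}\Pr_{\uniform}=\max_{\rand}\min_{x,i}$ as an intermediate equality (using the Yao bound~(\ref{eq:YaoBound}) for one direction and Lemma~\ref{lem:Randomization} for the other) and then invokes Yao's principle at the end, whereas you invoke the trivial direction first and Yao last; the ingredients and logic are identical.
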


\begin{proof}
From the previous Lemma we know that there exists a strategy with SR $\rand_0$ such that
\begin{equation}
  \min_{x,i} \Pr[\rand_0(x,i) = x_i] = \max_{\pure} {\textstyle\Pr_{\uniform}[\pure(x,i) = x_i]}
\end{equation}
($\rand_0$ is obtained from the best pure strategy by prepending it with input randomization). However, among all strategies with SR there might be one that is better than $\rand_0$, thus
\begin{equation}
  \max_{\rand} \min_{x,i} \Pr[\rand(x,i) = x_i] \geq  \max_{\pure} {\textstyle\Pr_{\uniform}[\pure(x,i) = x_i]}.
  \label{eq:FromRandomization}
\end{equation}
But if we put $\mu_0=\uniform$ into inequality (\ref{eq:YaoBound}), we obtain
\begin{equation}
  \max_{\pure} {\textstyle\Pr_{\uniform}[\pure(x,i) = x_i]} \geq \max_{\rand} \min_{x,i} \Pr[\rand(x,i) = x_i],
\end{equation}
which is the same as (\ref{eq:FromRandomization}), but with reversed sign. This means that both sides are actually equal:
\begin{equation}
  \max_{\pure} {\textstyle\Pr_{\uniform}[\pure(x,i) = x_i]} = \max_{\rand} \min_{x,i} \Pr[\rand(x,i) = x_i].
  \label{eq:TightBound}
\end{equation}
Applying Yao's principle to the right hand side of (\ref{eq:TightBound}) we obtain the desired equation (\ref{eq:HardestDistribution}).
\end{proof}

\begin{theorem}
For any pure strategy $\pure$
\begin{equation}
  {\textstyle\Pr_{\uniform}[\pure(x,i) = x_i]} \leq \max_{\rand} \min_{x,i} \Pr[\rand(x,i) = x_i],
\end{equation}
with equality if and only if $\pure$ is optimal for the uniform distribution $\uniform$.
\label{thm:SRtoPure}
\end{theorem}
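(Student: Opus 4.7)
The plan is to derive this theorem as an immediate corollary of the preceding lemma, which established equation~(\ref{eq:TightBound}):
\begin{equation*}
  \max_{\pure} {\textstyle\Pr_{\uniform}[\pure(x,i) = x_i]} = \max_{\rand} \min_{x,i} \Pr[\rand(x,i) = x_i].
\end{equation*}
All of the substantive content (input randomization via shared randomness, Yao's principle) has already been absorbed into the previous two lemmas, so only a short argument remains.

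First, for any particular pure strategy $\pure$, we trivially have ${\textstyle\Pr_{\uniform}[\pure(x,i) = x_i]} \leq \max_{\pure'} {\textstyle\Pr_{\uniform}[\pure'(x,i) = x_i]}$, since the right-hand side is the maximum of a quantity we are merely evaluating at one point. Chaining this with the equality from the preceding lemma yields the desired inequality
\begin{equation*}
  {\textstyle\Pr_{\uniform}[\pure(x,i) = x_i]} \leq \max_{\rand} \min_{x,i} \Pr[\rand(x,i) = x_i].
\end{equation*}

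For the equality condition, I would observe that the chain above becomes an equality precisely when the first step does, i.e., when ${\textstyle\Pr_{\uniform}[\pure(x,i) = x_i]} = \max_{\pure'} {\textstyle\Pr_{\uniform}[\pure'(x,i) = x_i]}$. By definition, this says that $\pure$ attains the best possible expected success probability over pure strategies under the uniform input distribution, which is exactly the meaning of $\pure$ being optimal for $\uniform$. So both directions of the ``if and only if'' follow immediately.

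There is essentially no obstacle: the real work was done in Lemma~\ref{lem:Randomization} (shifting shared randomness onto the input) and in the subsequent lemma identifying $\uniform$ as the hardest distribution. The only thing to be careful about is phrasing the equality case so that ``optimal for $\uniform$'' unambiguously refers to maximizing the average success probability over all pure strategies, which is precisely how the right-hand side of~(\ref{eq:TightBound}) is structured. After that observation, the statement of the theorem is a one-line consequence.
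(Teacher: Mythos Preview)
Your proposal is correct and takes essentially the same approach as the paper: both observe that the theorem follows immediately from equation~(\ref{eq:TightBound}) by replacing the maximum over $\pure$ with an arbitrary $\pure$, and that equality holds precisely when $\pure$ attains that maximum, i.e., is optimal for $\uniform$.
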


\begin{proof}
To obtain the required inequality, do not maximize the left hand side of equation (\ref{eq:TightBound}), but put an arbitrary $\pure$. It is obvious that we will obtain equality if and only if $\pure$ is optimal.
\end{proof}

This theorem has important consequences---it allows us to consider pure strategies with uniformly distributed input rather than strategies with SR. If we manage to find the optimal pure strategy, then we can also construct an optimal strategy with SR using input randomization\footnote{If the encoding function depends only on the Hamming weight of the input string $x$ (e.g., majority function) and the decoding function does not depend on $i$, there is no need to randomize over $i$, so $n$ instead of $n + \log n$ shared random bits are enough.}. If the pure strategy is not optimal, then we get a lower bound for the strategy with SR.

%--------------------------------%
\subsection{Classical \rac{n} RAC}
%--------------------------------%

Before considering \rac{n} QRACs with shared randomness, we will find an optimal classical \rac{n} RAC with shared randomness and derive bounds for it. 

%..............................%
\subsubsection{Optimal strategy} \label{sect:OptimalClassical}
%..............................%

According to Theorem~\ref{thm:SRtoPure} we can consider only pure strategies. As a pure strategy is deterministic, for each input it gives either a correct or a wrong answer. To maximize the average success probability we must find a pure strategy that gives the correct answer for as many of the $n \cdot 2^n$ inputs as possible---such a strategy we will call an \emph{optimal pure strategy}.

Let us first consider the problem of finding an optimal decoding strategy, when the encoding strategy is fixed. An encoding function $E: \set{0,1}^n \mapsto \set{0,1}$ divides the set of all strings into two parts:
\begin{equation}
  \begin{aligned}
    X_0 &= \set{x\in\set{0,1}^n \mid E(x) = 0}, \\
    X_1 &= \set{x\in\set{0,1}^n \mid E(x) = 1}.
  \end{aligned}
\end{equation}
If Bob receives bit $b$, he knows that the initial string was definitely from the set $X_b$, but there is no way for him to tell exactly which string it was. However, if he must recover only the \mbox{$i$th} bit, he can check whether there are more zeros or ones among the \mbox{$i$th} bits of strings from set $X_b$. More formally, we can introduce the symbol $N_i^b(k)$ that denotes the number of strings from set $X_b$ that have the bit $k$ in \mbox{$i$th} position:
\begin{equation}
  N_i^b(k) = \abs{\set{x \in X_b \mid x_i = k}},
\end{equation}
Therefore the optimal decoding strategy $D_i: \set{0,1} \mapsto \set{0,1}$ for the \mbox{$i$th} bit is
\begin{equation}
  D_i(b) =
  \begin{cases}
    0& \text{if $N_i^b(0) \geq N_i^b(1)$,}\\
    1& \text{otherwise}.
  \end{cases}
  \label{eq:OptimalClassicalDecoding}
\end{equation}
Of course, if $N_i^b(0) = N_i^b(1)$, Bob can output $1$ as well. For pure strategies there are only $4$ possible decoding functions for each bit: $0$, $1$, $b$, or $\NOT b$. But this is still quite a lot so we will consider the two following lemmas. The first lemma will rule out the \emph{constant decoding functions} $0$ and $1$.

\begin{lemma}
For any $n$ there exists an optimal pure classical \rac{n} RAC that does not use constant decoding functions $0$ and $1$ for any bits.
\label{lm:NoConst}
\end{lemma}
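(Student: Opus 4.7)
The plan is to start from any optimal pure strategy $\pure=(E,D_1,\dotsc,D_n)$ and, for each index $i$ at which the decoder is constant, swap $D_i$ for a non-constant decoder (either $b\mapsto b$ or $b\mapsto\NOT b$) without decreasing the average success probability under the uniform distribution $\uniform$ on $(x,i)$. By Theorem~\ref{thm:SRtoPure} this is the correct notion of optimality for a pure strategy, so the modified strategy is also optimal.

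The key calculation compares the three candidate decoders index-by-index. If $D_i$ is the constant $c\in\set{0,1}$, then
\begin{equation}
  \Pr_{x\sim\uniform}[D_i(E(x))=x_i]=\Pr_{x\sim\uniform}[x_i=c]=\tfrac{1}{2},
\end{equation}
independent of $E$. Meanwhile the identity decoder $b\mapsto b$ achieves success $\Pr_{x\sim\uniform}[E(x)=x_i]$ and the negation decoder $b\mapsto\NOT b$ achieves $\Pr_{x\sim\uniform}[E(x)\ne x_i]$; these two probabilities sum to $1$, so at least one of them is $\ge \tfrac{1}{2}$. Hence replacing the constant $D_i$ with whichever of the two non-constant options achieves at least $\tfrac{1}{2}$ cannot decrease the contribution of bit $i$ to the average.

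The observation that turns the local comparison into a proof is separability of the objective: for a pure strategy the average success probability decomposes as
\begin{equation}
  {\textstyle\Pr_{\uniform}[\pure(x,i)=x_i]}=\frac{1}{n}\sum_{i=1}^{n}\Pr_{x\sim\uniform}[D_i(E(x))=x_i],
\end{equation}
so changing only $D_i$ affects only the $i$-th summand. Performing the swap for every index at which $\pure$ uses a constant decoder produces a pure strategy with no constant decoding functions whose average success probability is at least that of the original, and therefore still optimal, which is exactly the statement of the lemma.

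I do not anticipate any genuine obstacle here: the single-bit decoder set has only four elements, the central inequality reduces to the trivial identity $\Pr[E(x)=x_i]+\Pr[E(x)\ne x_i]=1$, and separability is immediate because in a pure strategy Bob's output for bit $i$ depends only on the received bit and does not influence the decoding of any other bit. The only mild care needed is in the tie case, when $\Pr[E(x)=x_i]=\Pr[E(x)\ne x_i]=\tfrac{1}{2}$; then either non-constant swap is admissible, and the lemma only asserts the existence of one such optimal strategy, so this ambiguity is harmless.
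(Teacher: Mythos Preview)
Your proof is correct and follows essentially the same idea as the paper: show that a constant decoder on bit $i$ contributes exactly $\tfrac12$ to the average, and that at least one of the two non-constant decoders does at least as well, so the swap preserves optimality. The paper phrases the same computation through the counts $N_i^b(k)$ and the optimal-decoding rule~(\ref{eq:OptimalClassicalDecoding}), concluding that if a constant is optimal then all four decoders are tied; your direct probability calculation is a slightly more self-contained route to the same conclusion.
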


\begin{proof}
We will show that if there exists an optimal strategy that contains constant decoding functions for some bits, then there also exists an optimal strategy that does not. Let us assume that there is an optimal strategy with constant decoding function $0$ for the \mbox{$i$th} bit (the same argument goes through for $1$ as well). Then according to equation (\ref{eq:OptimalClassicalDecoding}) we have $N_i^0(0) \geq N_i^0(1)$ and $N_i^1(0) \geq N_i^1(1)$. Note that $N_i^0(0) + N_i^1(0) = N_i^0(1) + N_i^1(1) = 2^{n-1}$, because $x_i=0$ in exactly half of all $2^n$ strings. This means that actually $N_i^0(0) = N_i^0(1)$ and $N_i^1(0) = N_i^1(1)$. If we take a look at (\ref{eq:OptimalClassicalDecoding}) again, we see that in such situation any decoding strategy is optimal and we can use any non-constant strategy instead.
\end{proof}

\begin{lemma}
For any $n$ there exists an optimal pure classical \rac{n} RAC that does not use decoding function $\NOT b$ for any bits.
\label{lm:NoNot}
\end{lemma}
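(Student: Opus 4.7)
My plan is to mirror the approach used for Lemma~\ref{lm:NoConst}: starting from any optimal pure strategy that uses the decoding function $\NOT b$ on some bits, construct an equivalent optimal pure strategy that uses it on none. The key observation is that a $\NOT$ decoder on bit $i$ can be absorbed into the encoding by precomposing $E$ with the bit-flip $x \mapsto x \oplus \vc{e}_i$, where $\vc{e}_i \in \set{0,1}^n$ denotes the string with a single $1$ in position $i$. Because $(x \oplus \vc{e}_i)_j = x_j$ for every $j \neq i$, this precomposition does not change the success count on any other bit; and because $(x \oplus \vc{e}_i)_i = \NOT x_i$, it flips $x_i$ in exactly the way required to compensate for replacing the $\NOT$ decoder of bit $i$ by the identity.

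Concretely, given an optimal pure strategy $\pure = (E, D_1, \dotsc, D_n)$, let $S = \set{i : D_i = \NOT b}$ and $\vc{c} = \bigoplus_{i \in S} \vc{e}_i$. Define a new strategy $\pure' = (E', D'_1, \dotsc, D'_n)$ by $E'(x) = E(x \oplus \vc{c})$, with $D'_i(b) = b$ for $i \in S$ and $D'_i = D_i$ otherwise. The main computation to carry out is showing that for every input pair $(x, k)$, writing $y = x \oplus \vc{c}$, the new strategy succeeds on $(x, k)$ if and only if $\pure$ succeeds on $(y, k)$. For $k \in S$ this holds because $D'_k(E'(x)) = E(y)$ while $x_k = \NOT y_k$, so success reduces to $\NOT E(y) = y_k$, which is exactly the original condition $D_k(E(y)) = y_k$; for $k \notin S$ we simply have $D'_k(E'(x)) = D_k(E(y))$ and $x_k = y_k$.

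Since $x \mapsto x \oplus \vc{c}$ is a bijection on $\set{0,1}^n$, summing the above equivalence over all $(x, k)$ shows that $\pure'$ and $\pure$ have the same average success probability under the uniform input distribution, so $\pure'$ is optimal by Theorem~\ref{thm:SRtoPure}. By construction $\pure'$ has no $\NOT b$ decoders. The main thing to check carefully is that the single global precomposition by $\vc{c}$ really handles all the $\NOT$-valued decoders simultaneously without interference: this works because the bit-flips at distinct positions commute and each affects only its own coordinate, so absorbing every $\NOT$ decoder at once is no harder than absorbing one, and the decoders left unchanged (including any constants allowed by Lemma~\ref{lm:NoConst}) remain optimal for the modified encoding $E'$.
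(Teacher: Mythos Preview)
Your proof is correct and follows essentially the same idea as the paper: absorb each $\NOT b$ decoder into the encoding by precomposing $E$ with the corresponding bit-flip, then observe that the bijection $x \mapsto x \oplus \vc{c}$ preserves the average success probability. The only difference is that the paper treats one offending bit at a time (setting $E'(x)=E(\NOT_i x)$ and $D_i'=\NOT D_i$), whereas you handle all of $S$ in a single shot; your reference to Theorem~\ref{thm:SRtoPure} for optimality is unnecessary, since equality of average success probability with the optimal $\pure$ already suffices.
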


\begin{proof}
We will show that for each pure strategy $\pure$ that uses negation as the decoding function for the \mbox{$i$th} bit, there exists a pure strategy $\pure'$ with the same average case success probability that does not. If $\pure$ consists of encoding function $E$ and decoding functions $D_j$, then $\pure'$ can be obtained from $\pure$ by inverting the \mbox{$i$th} bit before encoding and after decoding:
\begin{align}
  E'(x) &= E(\NOT_i x), \\
  D_j'(b) &= 
    \begin{cases}
      \NOT D_j(b)& \text{if $j=i$,} \\
      D_j(b)& \text{otherwise,}
    \end{cases}
\end{align}
where $\NOT_i$ inverts the \mbox{$i$th} bit of string. It is obvious that $\pure$ and $\pure'$ have the same average success probabilities, because if $\pure$ gives the correct answer for input $(x,i)$ then $\pure'$ gives the correct answer for input $(\NOT_i x, i)$. The same holds for wrong answers.
\end{proof}

\begin{theorem}
The pure classical \rac{n} RAC with identity decoding functions and majority encoding function is optimal.
\label{thm:OptimalClassical}
\end{theorem}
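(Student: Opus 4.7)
The plan is to combine Theorem~\ref{thm:SRtoPure} with Lemmas~\ref{lm:NoConst} and~\ref{lm:NoNot} to reduce the problem to choosing the best encoding function when all decoding functions are the identity $b \mapsto b$, and then to show that the majority function is optimal in that reduced problem.

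First I would verify that the two reduction lemmas can be applied in sequence without conflict. Starting from any optimal pure strategy, Lemma~\ref{lm:NoConst} gives an equally good strategy whose decoders all lie in $\set{b, \NOT b}$. I would then iterate Lemma~\ref{lm:NoNot} over the positions $i$ with $D_i = \NOT b$: the transformation in its proof only rewrites $D_i$ (compensating at the encoder by negating the $i$th input bit) and leaves every other $D_j$ untouched. Thus each iteration strictly decreases the number of $\NOT$-decoders, never reintroduces a constant decoder, and never disturbs positions already normalized. The result is an optimal pure strategy with $D_1 = \dotsb = D_n = \mathrm{id}$.

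With all identity decoders, success on input $(x,i)$ reduces to the condition $E(x) = x_i$, so the uniform-average success probability becomes
\begin{equation}
  {\textstyle\Pr_{\uniform}[\pure(x,i) = x_i]} = \frac{1}{n\cdot 2^n} \sum_{x \in \set{0,1}^n} \#\set{i : x_i = E(x)},
\end{equation}
and for each fixed $x$ the inner count equals $\abs{x}$ if $E(x)=1$ and $n - \abs{x}$ if $E(x)=0$. This sum is therefore maximized term-by-term by taking $E(x)$ to be the majority bit of $x$, with ties (possible only for even $n$) broken arbitrarily without affecting the total. Applying Theorem~\ref{thm:SRtoPure} then upgrades this optimal pure strategy (via input randomization) to an optimal strategy with shared randomness.

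The main obstacle will be the compatibility check for the two lemmas---specifically, ensuring that the repeated removal of $\NOT$-decoders neither reintroduces constant decoders nor undoes previously normalized positions. Once that bookkeeping is in place, the rest of the argument is a direct term-wise optimization and a citation of Theorem~\ref{thm:SRtoPure}.
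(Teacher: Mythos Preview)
Your proof is correct and follows essentially the same approach as the paper: reduce via Lemmas~\ref{lm:NoConst} and~\ref{lm:NoNot} to identity decoders, then observe that majority encoding maximizes the uniform-average success term by term. Your explicit check that iterating Lemma~\ref{lm:NoNot} neither reintroduces constant decoders nor disturbs already-normalized positions is more careful than the paper's one-line invocation of both lemmas, and your final appeal to Theorem~\ref{thm:SRtoPure} goes slightly beyond what the theorem statement asks (it concerns optimality among pure strategies) but is a harmless and natural addendum.
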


\begin{proof}
According to Lemma~\ref{lm:NoConst} and Lemma~\ref{lm:NoNot}, there exists an optimal pure classical \rac{n} RAC with identity decoding function for all bits. Now we must consider the other part---finding an optimal encoding given a particular (identity) decoding function. It is obvious that in our case optimal encoding must return the majority of bits:
\begin{equation}
  E'(x) =
    \begin{cases}
      0& \text{if $\abs{x} < n/2$,} \\
      1& \text{otherwise,}
    \end{cases}
\end{equation}
where $\abs{x}$ is the Hamming weight of string $x$ (the number of ones in it).
\end{proof}

%...............................%
\subsubsection{Asymptotic bounds} \label{sect:ClassicalBound}
%...............................%

Let us find the exact value of the average success probability for the optimal pure RAC suggested in Theorem~\ref{thm:OptimalClassical}. We will separately consider the even and odd cases.

In the odd case ($n=2m+1$) the average success probability is given by
\begin{equation}
  p(2m+1) = \frac{1}{(2m+1) \cdot 2^{2m+1}}
  \Biggl( 2 \sum_{i=m+1}^{2m+1} i \binom{2m+1}{i} \Biggr),
  \label{eq:Odd}
\end{equation}
where the factor $2$ stands for either zeros or ones being the majority, and $\binom{2m+1}{i}$ stands for the number of strings where the given symbol dominates and appears exactly $i$ times.

In the even case ($n=2m$) there are a lot of strings with the same number of zeros and ones. These strings are bad, because with majority encoding and identity decoding it is not possible to give the correct answer for more than half of all bits. The corresponding average success probability is given by
\begin{equation}
  p(2m) = \frac{1}{2m \cdot 2^{2m}}
  \Biggl( 2 \sum_{i=m+1}^{2m} i \binom{2m}{i} + m \binom{2m}{m} \Biggr),
  \label{eq:Even}
\end{equation}
where the last term stands for the bad strings.

In Appendix~\ref{app:MagicFormulas} we give a combinatorial interpretation of the sums in (\ref{eq:Odd}) and (\ref{eq:Even}). Equations (\ref{eq:Magic1}) and (\ref{eq:Magic2}) derived in Appendix~\ref{app:MagicFormulas} can be used to simplify $p(2m+1)$ and $p(2m)$, respectively. It turns out that both probabilities are equal:
\begin{equation}
  p(2m) = p(2m+1) = \frac{1}{2} + \frac{1}{2^{2m+1}} \binom{2m}{m}.
  \label{eq:Exact}
\end{equation}
These two expressions can be combined as follows:
\begin{equation}
  p(n) = \frac{1}{2} + \frac{1}{2^n} \binom{n-1}{\floor{\frac{n-1}{2}}}.
\end{equation}

\image{0.9}{PlotClassical2}{Success probability for optimal pure classical \rac{n} RAC}{Exact probability of success $p(n)$ for optimal pure classical \rac{n} RAC (black dots) according to (\ref{eq:Exact}) and its approximate value (dashed line) according to (\ref{eq:Approx}). Dotted lines show upper and lower bounds of $p(n)$ for odd and even $n$ according to inequalities (\ref{eq:ApproxOdd}) and (\ref{eq:ApproxEven}).}{fig:Classical}

We can apply Stirling's approximation \cite{Stirling} $m! \approx \left(\frac{m}{e}\right)^m \sqrt{2 \pi m}$ to (\ref{eq:Exact}) and obtain
\begin{equation}
  p(2m) = p(2m+1) \approx \frac{1}{2} + \frac{1}{2 \sqrt{\pi m}}.
  \label{eq:ApproxEvenOdd}
\end{equation}
If we put $m \approx \frac{n}{2}$, then (\ref{eq:ApproxEvenOdd}) turns to
\begin{equation}
  p(n) \approx \frac{1}{2} + \frac{1}{\sqrt{2 \pi n}}.
  \label{eq:Approx}
\end{equation}

We see that the value of (\ref{eq:Approx}) approaches $1/2$ as $n$ increases. Thus the obtained codes are not very good for large $n$, since $p = 1/2$ can be obtained by guessing. We will observe a similar (but slightly better) behavior also in the quantum case. The exact probability (\ref{eq:Exact}) and its approximation (\ref{eq:Approx}) are shown in Fig.~\ref{fig:Classical}.

For odd and even cases asymptotic upper and lower bounds on $p(n)$ can be obtained using the following inequality \cite{Stirling}:
\begin{equation}
  \sqrt{2 \pi n} \left(\frac{n}{e}\right)^n e^\frac{1}{12n+1} < n! <
  \sqrt{2 \pi n} \left(\frac{n}{e}\right)^n e^\frac{1}{12n}.
\end{equation}
For the odd case we have
\begin{equation}
  \frac{\exp\left(\frac{1}{12n-11}-\frac{2}{6n-6}\right)}{\sqrt{2 \pi (n-1)}}
  < p(n) - \frac{1}{2} <
  \frac{\exp\left(\frac{1}{12n-12}-\frac{2}{6n-5}\right)}{\sqrt{2 \pi (n-1)}},
  \label{eq:ApproxOdd}
\end{equation}
but for the even case
\begin{equation}
  \frac{\exp\left(\frac{1}{12n}-\frac{2}{6n+1}\right)}{\sqrt{2 \pi n}}
  < p(n) - \frac{1}{2} <
  \frac{\exp\left(\frac{1}{12n+1}-\frac{2}{6n}\right)}{\sqrt{2 \pi n}}.
  \label{eq:ApproxEven}
\end{equation}
All four bounds are shown in Fig.~\ref{fig:Classical}.

%%%%%%%%%%%%%%%%%%%%%%%%%%%%%%%%%%%%%
\section{Quantum random access codes} \label{sect:QuantumRACs}
%%%%%%%%%%%%%%%%%%%%%%%%%%%%%%%%%%%%%

%------------------------------%
\subsection{Visualizing a qubit}
%------------------------------%

When dealing with quantum random access codes (at least in the qubit case), it is a good idea to try to visualize them. We provide two ways.

%.........................................%
\subsubsection{Bloch sphere representation} \label{sect:BlochSphere}
%.........................................%

A \emph{pure qubit state} is a column vector $\ket{\psi} \in \C^2$. It can be expressed as a linear combination $\ket{\psi} = \alpha \ket{0} + \beta \ket{1}$, where $\ket{0} = \smx{1\\0}$ and $\ket{1} = \smx{0\\1}$. The coefficients $\alpha, \beta \in \C$ must satisfy $|\alpha|^2 + |\beta|^2 = 1$. Since the physical state is not affected by the \emph{phase factor} (i.e., $\ket{\psi}$ and $e^{i\phi}\ket{\psi}$ are the same states for any $\phi\in\R$), without the loss of generality one can write
\begin{equation}
  \ket{\psi} =
  \mx{
    \cos{\frac{\theta}{2}} \\
    e^{i\varphi}\sin{\frac{\theta}{2}}
  },
  \label{eq:Psi}
\end{equation}
where $0\leq\theta\leq\pi$ and $0\leq\varphi<2\pi$ (the factor $1/2$ for $\theta$ in (\ref{eq:Psi}) is chosen so that these ranges resemble the ones for \emph{spherical coordinates} in $\R^3$).

For almost all states $\ket{\psi}$ there is a unique way to assign the parameters $\theta$ and $\varphi$. The only exceptions are states $\ket{0}$ and $\ket{1}$, that correspond to $\theta=0$ and $\theta=\pi$, respectively. In both cases $\varphi$ does not affect the physical state. Note that the spherical coordinates with \emph{latitude} $\theta$ and \emph{longitude} $\varphi$ have the same property, namely---the longitude is not defined at poles. This suggests that the state space of a single qubit is topologically a sphere.

Indeed, there is a \mbox{one-to-one} correspondence between pure qubit states and the points on a unit sphere in $\R^3$. This is called the \emph{Bloch sphere representation} of a qubit state. The \emph{Bloch vector} for state (\ref{eq:Psi}) is $\vc{r} = (x,y,z)$, where the coordinates (see Fig.~\ref{fig:UnitVector}) are given by
\begin{equation}
  \left\{
  \begin{aligned}
    x &= \sin{\theta}\cos{\varphi}, \\
    y &= \sin{\theta}\sin{\varphi}, \\
    z &= \cos{\theta}.
  \end{aligned}
  \right.
  \label{eq:UnitVector}
\end{equation}
Given the Bloch vector $\vc{r} = (x,y,z)$, the coefficients of the corresponding state $\ket{\psi} = \alpha \ket{0} + \beta \ket{1}$ can be found as follows \cite[pp.~102]{PrinciplesOfQ}:
\begin{equation}
  \alpha = \sqrt{\frac{z+1}{2}}, \quad \beta = \frac{x+iy}{\sqrt{2(z+1)}}
  \label{eq:AlphaBeta}
\end{equation}
with the convention that $(0,0,-1)$ corresponds to $\alpha=0$ and $\beta=1$.

\doubleimage
{UnitVector}{Angles $\theta$ and $\varphi$ of the Bloch vector}{Angles $\theta$ and $\varphi$ of the Bloch vector corresponding to state $\ket{\psi}$.}{fig:UnitVector}
{BlochSphere}{Geometric interpretation of orthogonal measurement}{Geometric interpretation of orthogonal measurement.}{fig:BlochSphere}

The \emph{density matrix} of a pure state $\ket{\psi}$ is defined as $\rho = \ket{\psi}\!\bra{\psi}$. For the state $\ket{\psi}$ in (\ref{eq:Psi}) we have
\begin{equation}
  \rho
  = \frac{1}{2} \mx{
      1 + \cos{\theta} & e^{-i \varphi} \sin{\theta} \\
      e^{ i \varphi} \sin{\theta} & 1 - \cos{\theta}}
  = \frac{1}{2} \left( I + x \sigma_x + y \sigma_y + z \sigma_z \right),
  \label{eq:LongRho}
\end{equation}
where $(x,y,z)$ are the coordinates of the Bloch vector $\vc{r}$ given in (\ref{eq:UnitVector}) and
\begin{equation}
         I = \mx{1& 0\\0& 1}, \quad
  \sigma_x = \mx{0& 1\\1& 0}, \quad
  \sigma_y = \mx{0&-i\\i& 0}, \quad
  \sigma_z = \mx{1& 0\\0&-1}
  \label{eq:Pauli}
\end{equation}
are called \emph{Pauli matrices}. We can write (\ref{eq:LongRho}) more concisely as
\begin{equation}
  \rho = \frac{1}{2} \left( I + \vc{r} \cdot \vc{\sigma} \right)
  \label{eq:Rho}
\end{equation}
where $\vc{r} = (x, y, z)$ and $\vc{\sigma} = (\sigma_x, \sigma_y, \sigma_z)$.

If $\vc{r}_1$ and $\vc{r}_2$ are the Bloch vectors of two pure states $\ket{\psi_1}$ and $\ket{\psi_2}$, then
\begin{equation}
  \abs{\braket{\psi_1}{\psi_2}}^2
    = \tr (\rho_1 \rho_2)
    = \frac{1}{2} (1 + \vc{r}_1 \cdot \vc{r}_2).
  \label{eq:ScalarProduct}
\end{equation}
This relates the inner product in $\C^2$ to the one in $\R^3$. Since $\vc{r}_1$ and $\vc{r}_2$ are unit vectors, $\vc{r}_1 \cdot \vc{r}_2 = \cos \alpha$, where $\alpha$ is the angle between $\vc{r}_1$ and $\vc{r}_2$.

An \emph{orthogonal measurement} $M$ on a qubit can be specified by a set of two orthonormal states: $M = \set{\ket{\psi_0}, \ket{\psi_1}}$. Orthonormality means that \mbox{$\braket{\psi_i}{\psi_j} = \delta_{ij}$}. If we measure a qubit that is in state $\ket{\psi}$ with measurement $M$ then the \emph{outcome} will be either $0$ or $1$ and the state will ``collapse'' to $\ket{\psi_0}$ or $\ket{\psi_1}$ with probabilities $\abs{\braket{\psi_0}{\psi}}^2$ and $\abs{\braket{\psi_1}{\psi}}^2$, respectively. Observe that for orthogonal states equation (\ref{eq:ScalarProduct}) implies $\vc{r}_1 \cdot \vc{r}_2 = -1$, therefore they correspond to antipodal points on the Bloch sphere. If we denote the angle between the Bloch vectors of $\ket{\psi}$ and $\ket{\psi_0}$ by $\alpha$, then according to (\ref{eq:ScalarProduct}) the probabilities of the outcomes are
\begin{equation}
  \left\{
  \begin{aligned}
    p_0 &= \frac{1}{2}(1 + \cos{\alpha}), \\
    p_1 &= \frac{1}{2}(1 - \cos{\alpha}).
  \end{aligned}
  \right.
  \label{eq:Projections}
\end{equation}
There is a nice geometrical interpretation of these probabilities. If we project the Bloch vector corresponding to $\ket{\psi}$ on the axes spanned by the Bloch vectors of $\ket{\psi_0}$ and $\ket{\psi_1}$ (see Fig.~\ref{fig:BlochSphere}), then $p_0 = d_1 / 2$ and $p_1 = d_0 / 2$ (note the different indices), where $d_0$ is the distance between the projection and $\ket{\psi_0}$, but $d_1$ is the distance between the projection and $\ket{\psi_1}$. Observe that vectors on the upper hemisphere have greater probability to collapse to $\ket{\psi_0}$, but on lower hemisphere, to $\ket{\psi_1}$. On the equator both probabilities are equal to $\frac{1}{2}$.

%......................................%
\subsubsection{Unit disk representation} \label{sect:UnitDisk}
%......................................%

\image{0.9}{Beta}{Hadamard transformation in the unit disk representation}{Curves of constant $\theta$ and $\varphi$ before (on the left) and after the Hadamard transformation (on the right). Initially the curves of constant $\theta$ are concentric circles, but after the transformation they appear as deformed circles around both poles. The curves of constant $\varphi$ transform form radial rays to ``field lines'' connecting both poles. The image on the left appears to have only the North pole $\ket{0}$, since the Bloch sphere is punctured at the South pole $\ket{1}$ which must be identified with the boundary of the unit disk. The ``left pole'' and ``right pole'' in the image on the right correspond to the states $\ket{1}$ and $\ket{0}$, respectively.}{fig:Beta}

There is another way of visualizing a qubit. Unlike the Bloch sphere representation, this way of representing a qubit is not known to have appeared elsewhere. The idea is to use only one complex number to specify a pure qubit state $\ket{\psi} = \smx{\alpha \\ \beta} \in \C^2$. This is possible since $\ket{\psi}$ can be written in the form (\ref{eq:Psi}), which is completely determined by its second component
\begin{equation}
  \beta = e^{i\varphi}\sin{\frac{\theta}{2}}.
\end{equation}
The first component is just $\sqrt{1 - \abs{\beta}^2} = \alpha$. As $\abs{\beta} \leq 1$, the set of all possible qubit states can be identified with a unit disk in the complex plane (the polar coordinates assigned to $\ket{\psi}$ are $(r,\varphi)$, where $r = \sin{\frac{\theta}{2}}$). The origin $\beta = 0$ corresponds to $\ket{\psi}=\ket{0}$, and all points on the unit circle $\abs{\beta}=1$ are identified with $\ket{\psi}=\ket{1}$, since $e^{i \varphi} \ket{1}$ corresponds to the same quantum state for all $\varphi \in \R$.

The relation between the unit disk representation and the Bloch sphere representation can be visualized as follows:
\begin{itemize}
  \item the unit disk is obtained by puncturing the Bloch sphere at its South pole and flattening it,
  \item the Bloch sphere is obtained by gluing together the boundary of the unit disk.
\end{itemize}

It is much harder to visualize how a unitary transformation acts in the unit disk representation. Let us consider a simple example.
\begin{example}
Let us consider the action of the \emph{Hadamard gate} $H=\frac{1}{\sqrt{2}}\smx{1&1\\1&-1}$ in the unit disk representation. Note that $H^2 = I$ thus $H$ is an involution (self-inverse). It acts on the standard basis states as follows:
\begin{align}
  H \ket{0} &= \tfrac{1}{\sqrt{2}} \ket{0} + \tfrac{1}{\sqrt{2}} \ket{1} = \ket{+}, \label{eq:H0} \\
  H \ket{1} &= \tfrac{1}{\sqrt{2}} \ket{0} - \tfrac{1}{\sqrt{2}} \ket{1} = \ket{-}. \label{eq:H1}
\end{align}
%After this transformation the origin corresponds to $\ket{+}$, but the unit circle to $\ket{-}$ state. The states $\ket{1}$ and $\ket{0}$ correspond to the ``left pole'' and ``right pole'', respectively.
The way $H$ transforms the curves of constant $\theta$ and $\varphi$ is shown in Fig.~\ref{fig:Beta}. From equation (\ref{eq:H0}) we see that the origin $\beta = 0$ corresponding to $\ket{0}$ is mapped to the ``right pole'' $\beta = \frac{1}{\sqrt{2}}$ corresponding to $\ket{+}$ (and vice versa). Recall that all points on the boundary of the unit disk in Fig.~\ref{fig:Beta} (on the left) are identified with $\ket{1}$. Thus equation (\ref{eq:H1}) tells us that the unit circle $\abs{\beta} = 1$ is mapped to the ``left pole'' $\beta = -\frac{1}{\sqrt{2}}$ in Fig.~\ref{fig:Beta} (on the right) corresponding to $\ket{-}$ (and vice versa). This means that $\ket{-}$ is mapped to the boundary of the unit disk in Fig.~\ref{fig:Beta} (on the right).
\end{example}

Since we use only one complex number $\beta$ to represent a quantum state, a finite set of quantum states $\set{\beta_1, \beta_2, \dotsc, \beta_n}$ can be represented by a polynomial
\begin{equation}
  c \, (\beta - \beta_1) (\beta - \beta_2) \dotsb (\beta - \beta_n)
\end{equation}
whose roots are $\beta_i$ (here $c \neq 0$ is arbitrary). We will use this representation in Sects.~\ref{sect:KnownQRACs} and \ref{sect:Numerical} to describe the qubit states whose Bloch vectors are the vertices of certain polyhedra. It is surprising that for those states the values of $c$ can be chosen so that the resulting polynomials have integer coefficients.

%--------------------------------------------------------%
\subsection{Types of quantum encoding-decoding strategies} \label{sect:TypesOfQRACs}
%--------------------------------------------------------%

Let us now consider the quantum analogue of a pure strategy.

\begin{definition}
A \emph{pure quantum \rac{n} encoding-decoding strategy} is an ordered tuple $(E,M_1,\dotsc,M_n)$ that consists of encoding function $E: \set{0,1}^n \mapsto \C^2$ and $n$ orthogonal measurements: $M_i = \set{\ket{\psi^i_0}, \ket{\psi^i_1}}$.
\end{definition}

If Alice encodes the string $x$ with function $E$, she obtains a pure qubit state $\ket{\psi} = E(x)$. When Bob receives $\ket{\psi}$ and is asked to recover the \mbox{$i$th} bit of $x$, he performs the measurement $M_i$. The probability that Bob recovers $x_i$ correctly is equal to
\begin{equation}
  p(x,i) = \abs{\braket{\psi^i_{x_i}\big}{\psi}}^2.
  \label{eq:pxi}
\end{equation}

As in the classical setting, we can allow Alice and Bob to have probabilistic quantum strategies without cooperation. Though we will not need it, mixed quantum strategies can be defined in complete analogy with mixed classical strategies.

\begin{definition}
A \emph{mixed quantum \rac{n} encoding-decoding strategy} is an ordered tuple $(P_E,P_{M_1},\dotsc,P_{M_n})$ of probability distributions. $P_E$ is a distribution over encoding functions $E$ and $P_{M_i}$ are probability distributions over orthogonal measurements of qubit.
\end{definition}

The main objects of our research are quantum strategies with cooperation, i.e., with shared randomness. They are defined in complete analogy with the classical ones.

\begin{definition}
A \emph{quantum \rac{n} encoding-decoding strategy with shared randomness} is a probability distribution over pure quantum strategies.
\end{definition}

We would like to point out two very important things about quantum strategies with shared randomness. The first thing is that all statements about classical strategies with SR in Sect.~\ref{sect:Yao} are valid for quantum strategies as well (the only difference is that \emph{``pure strategy''} now means \emph{``pure quantum strategy''} instead of \emph{``pure classical strategy''} and \emph{``strategy with SR''} means \emph{``quantum strategy with SR''} instead of \emph{``classical strategy with SR''}). The most important consequence of this observation is that Theorem~\ref{thm:SRtoPure} is valid also for quantum strategies with SR. This means that the same technique of obtaining the upper bound can be used in the quantum case, i.e., we can consider the average success probability of a pure quantum strategy instead of the worst case success probability of the quantum strategy with SR.

The second thing is that the quantum strategy with SR is the most powerful quantum encoding-decoding strategy when both kinds of classical randomness (local and shared) is allowed. However, it is not the most general strategy, since it cannot be used to simulate certain classical strategies, e.g., the ones with fixed output. However, it turns out that the ability to simulate such strategies does not give any advantage (see Sect.~\ref{sect:GeneralUpperBound} and Appendix~\ref{app:POVMs}).

%-----------------------------%
\subsection{Known quantum RACs} \label{sect:KnownQRACs}
%-----------------------------%

In \cite{DenseCoding1,DenseCoding2} it has been shown that for \rac{2} classical RACs in the mixed setting the decoding party cannot do better than guessing, i.e., the worst case success probability cannot exceed $1/2$. However, if quantum states can be transmitted, there are pure quantum \rac{2} and \rac{3} schemes \cite{DenseCoding1,DenseCoding2}. This clearly indicates the advantages of quantum RACs. On the other hand, a quantum \rac{4} scheme cannot exist \cite{No41}. We will review these results in the next three sections.

\doubleimage
{KnownQRAC2}{Bloch sphere representation of \rac{2} QRAC}{Bloch sphere representation of encoding for \rac{2} quantum random access code.}{fig:KnownQRAC2}
{KnownQRAC3}{Bloch sphere representation of \rac{3} QRAC}{Bloch sphere representation of encoding for \rac{3} quantum random access code.}{fig:KnownQRAC3}

%..............................%
\subsubsection{The \rac{2} QRAC} \label{sect:KnownQRAC2}
%..............................%

The \rac{2} QRAC is described in \cite{DenseCoding1,DenseCoding2,No41}. The main idea is to use two mutually orthogonal pairs of antipodal Bloch vectors for measurement bases. For example, let $M_1$ and $M_2$ be the measurements along the $x$ and $y$ axes, respectively. The corresponding Bloch vectors are $\vc{v}_1=(\pm 1,0,0)$ and $\vc{v}_2=(0,\pm 1,0)$. The measurement bases are
\begin{align}
  M_1&=\set{\frac{1}{\sqrt{2}}\mx{1\\1},\frac{1}{\sqrt{2}}\mx{1\\-1}}, \label{eq:M1} \\
  M_2&=\set{\frac{1}{\sqrt{2}}\mx{1\\i},\frac{1}{\sqrt{2}}\mx{1\\-i}}. \label{eq:M2}
\end{align}
The planes orthogonal to the $x$ and $y$ axes cut the Bloch sphere into four parts. Note that in each part only one definite string can be encoded (otherwise the worst case success probability will be less than $\frac{1}{2}$). According to (\ref{eq:Projections}), all encoding points must be as far from both planes as possible in order to maximize the worst case success probability (recall the geometrical interpretation of the measurement shown in Fig.~\ref{fig:BlochSphere}). In our case the best encoding states are the vertices of a square $\frac{1}{\sqrt{2}}(\pm 1,\pm 1,0)$ inscribed in the unit circle on the $xy$ plane (see Fig.~\ref{fig:KnownQRAC2}). Given a string $x = x_1 x_2$, the Bloch vector of the encoding state can be found as follows:
\begin{equation}
  \vc{r}(x) = \frac{1}{\sqrt{2}}
  \mx{
    (-1)^{x_1} \\
    (-1)^{x_2} \\
    0
  }.
\end{equation}
The corresponding encoding function is
\begin{equation}
  E(x_1,x_2) = \frac{1}{\sqrt{2}} \ket{0} + \frac{(-1)^{x_1}+i(-1)^{x_2}}{2} \ket{1}.
  \label{eq:QRAC2Encoding}
\end{equation}
The success probability is the same for all input strings and all bits to be recovered:
\begin{equation}
  p = \frac{1}{2}\left(1+\cos{\frac{\pi}{4}}\right) = \frac{1}{2}+\frac{1}{2\sqrt{2}} \approx \p{2}.
  \label{eq:p2}
\end{equation}

%..............................%
\subsubsection{The \rac{3} QRAC} \label{sect:KnownQRAC3}
%..............................%

It is not hard to generalize the \rac{2} QRAC to a \rac{3} code---just take three mutually orthogonal pairs of antipodal Bloch vectors, i.e., the vertices of an \emph{octahedron} \cite{No41,Severini}. The third pair is $\vc{v}_3=(0,0,\pm 1)$ and the corresponding measurement basis is
\begin{equation}
  M_3=\set{\mx{1\\0},\mx{0\\1}}.
  \label{eq:M3}
\end{equation}
In this case we have three orthogonal planes that cut the sphere into eight parts and only one string can be encoded into each part. In this case the optimal encoding states correspond to the vertices of a cube $\frac{1}{\sqrt{3}}(\pm 1,\pm 1,\pm 1)$ inscribed in the Bloch sphere (see Fig.~\ref{fig:KnownQRAC3}). The Bloch vector of the encoding state of string $x = x_1 x_2 x_3$ is
\begin{equation}
  \vc{r}(x) = \frac{1}{\sqrt{3}}
  \mx{
    (-1)^{x_1} \\
    (-1)^{x_2} \\
    (-1)^{x_3}
  }.
\end{equation}
The corresponding encoding function is $E(x_1,x_2,x_3)=\alpha\ket{0}+\beta\ket{1}$ with coefficients $\alpha$ and $\beta$ explicitly given by
\begin{equation}
  \left\{
  \begin{aligned}
    \alpha &= \sqrt{\frac{1}{2}+\frac{(-1)^{x_3}}{2\sqrt{3}}}, \\
    \beta  &= \frac{(-1)^{x_1}+i(-1)^{x_2}}{\sqrt{6+2\sqrt{3}(-1)^{x_3}}}.
%ALTERNATIVE:
%  \beta  &= \sqrt[4]{\frac{1}{3}-\frac{(-1)^{x_3}}{2\sqrt{3}}} \cdot \frac{(-1)^{x_1}+i(-1)^{x_2}}{\sqrt{2}}.
  \end{aligned}
  \right.
\end{equation}
In fact, the coefficients $\beta$ are exactly the eight roots of the polynomial\footnote{The unit disk representation of a quantum state and the representation of a finite set of quantum states using a polynomial was discussed in Sect.~\ref{sect:UnitDisk}.}
\begin{equation}
  36 \beta^8 + 24 \beta^4 + 1
\end{equation}
This code also has the same success probability in all cases:
\begin{equation}
  p = \frac{1}{2}+\frac{1}{2\sqrt{3}} \approx \p{3}.
  \label{eq:p3}
\end{equation}

%...............................................%
\subsubsection{Impossibility of the \rac{4} QRAC} \label{sect:noQRAC4}
%...............................................%

Hayashi et al. \cite{No41} have shown that \rac{2} and \rac{3} codes discussed above cannot be generalized for $4$ (and hence more) encoded bits. The reason is simple---it is not possible to cut the Bloch sphere into $16$ parts with $4$ great circles (see the proof below). Thus the number of strings will exceed the number of parts, hence at least two strings must be encoded in the same part. This makes the worst case success probability drop below $\frac{1}{2}$.

\image{0.6}{Gnomonic}{Gnomonic projection}{Gnomonic projection transforms great circles to lines and vice versa.}{fig:Gnomonic}

Let us consider how many parts can be obtained by cutting a sphere with $4$ great circles. Without loss of generality we can assume that the first great circle coincides with the equator. We use the \emph{gnomonic projection} (from the center of the sphere) to project the remaining three circles to a plane tangent to the South pole. Note that great circles are transformed into lines and vice versa (see Fig.~\ref{fig:Gnomonic}), thus we will obtain three lines. Also note that each region in the plane corresponds to two (diametrically opposite) regions on the sphere. It is simple to verify that three lines cannot cut the plane into more than $7$ parts (see Fig.~\ref{fig:CuttingPlane}). Thus the sphere cannot be cut into more than $14$ parts with four great circles.\footnote{In general, if we have $n$ great circles on the sphere, the maximal number of parts we can obtain is twice what we can obtain by cutting the plane with $n-1$ lines. If each line we draw intersects all previous lines and no three lines intersect at the same point, the sphere is cut into $n(n-1)+2$ parts after the inverse gnomonic projection.} An example achieving the upper bound is shown in see Fig.~\ref{fig:CuttingSphere} (see also Figs.~\ref{fig:QRAC4[sym]} and \ref{fig:Tetrahedron}). Using essentially the same argument for generalized Bloch vectors Hayashi et al. \cite{No41} show that \racm{2^{2m}}{m} QRACs with $p>1/2$ do not exist for all $m \geq 1$. The generalized Bloch vector will be briefly introduced in Sect.~\ref{sect:Generalizations}.

\doubleimage
{CuttingPlane}{Cutting the plane with $3$ lines into $7$ parts}{Cutting the plane with $3$ lines into $7$ parts.}{fig:CuttingPlane}
{CuttingSphere}{Cutting the sphere with $4$ great circles into $14$ parts}{Cutting the sphere with $4$ great circles into $14$ parts (seven diametrically opposite parts are equal).}{fig:CuttingSphere}

%-------------------------------------------------------%
\subsection{Optimal encoding for given decoding strategy} \label{sect:OptimalEncoding}
%-------------------------------------------------------%

We just reviewed the known results on pure \rac{n} quantum random access codes. From now on we will consider QRACs with shared randomness. In this section we will show how to find the optimal encoding strategy for a given decoding strategy. More precisely, we will show that the measurement directions of a QRAC with SR determine the corresponding optimal encoding states in a simple way.

An orthogonal measurement for the \mbox{$i$th} bit is specified by antipodal points on the Bloch sphere: $M_i = \set{\vc{v}_i, -\vc{v}_i}$. Let $\vc{r}_x$ be the Bloch vector that corresponds to the quantum state in which string $x \in \set{0,1}^n$ is encoded. According to equations in (\ref{eq:Projections}) the success probability for input $(x,i)$ is
\begin{equation}
  p(x,i) = \frac{1}{2} \bigl(1 + (-1)^{x_i} \vc{v}_i \cdot \vc{r}_x \bigr)
\end{equation}
and the average success probability is given by
\begin{equation}
  \begin{split}
    p &= \frac{1}{2^n \cdot n} \sum_{x \in \set{0,1}^n} \sum_{i=1}^n
         \frac{1}{2} \bigl(1 + (-1)^{x_i} \vc{v}_i \cdot \vc{r}_x \bigr) \\
      &= \frac{1}{2} \Biggl(1 + \frac{1}{2^n \cdot n}
         \underbrace{\sum_{x \in \set{0,1}^n} \vc{r}_x \cdot \sum_{i=1}^n (-1)^{x_i} \vc{v}_i}_{\Svr} \Biggr).
  \end{split}
  \label{eq:AverageProbability}
\end{equation}
In order to maximize the probability $p$, we only need to maximize $\Svr$ in equation (\ref{eq:AverageProbability}) over all possible measurements $\vc{v}_i$ and encodings $\vc{r}_x$ (in total $n + 2^n$ unit vectors in $\R^3$). We will denote the maximum of $\Svr$ by $S(n)$:
\begin{equation}
  S(n) \spacer
    = \max_{\vis, \rxs} \Svr \spacer
    = \spacer \max_\vis \sum_{x \in \set{0,1}^n}
      \max_{\vc{r}_x} \spacer \vc{r}_x \cdot \sum_{i=1}^n (-1)^{x_i} \vc{v}_i.
  \label{eq:rxvx}
\end{equation}
If we define
\begin{equation}
  \vc{v}_x = \sum_{i=1}^n (-1)^{x_i} \vc{v}_i,
  \label{eq:vx}
\end{equation}
then it is obvious that the scalar product $\vc{r}_x \cdot \vc{v}_x$ in (\ref{eq:rxvx}) will be maximized when $\vc{r}_x$ is chosen along the same direction as $\vc{v}_x$, i.e. $\vc{r}_x = \vc{v}_x / \norm{\vc{v}_x}$ when $\norm{\vc{v}_x} \neq 0$. In this case we have $\vc{r}_x \cdot \vc{v}_x = \norm{\vc{v}_x}$ and
\begin{equation}
  S(n) = \max_\vis \sum_{x \in \set{0,1}^n} \norm{\sum_{i=1}^n (-1)^{x_i} \vc{v}_i}.
  \label{eq:Sn}
\end{equation}
Therefore we only need to maximize over all possible measurements succinctly represented by $n$ unit vectors $\vc{v}_i \in \R^3$, because the optimal encoding is already determined by measurements (see Sect.~\ref{sect:Numerical} for some numerical results obtained in this way). When the value of $S(n)$ is found, then according to (\ref{eq:AverageProbability}) the corresponding probability is
\begin{equation}
  p(n) = \frac{1}{2} \left(1 + \frac{S(n)}{2^n \cdot n}\right).
  \label{eq:ProbabilitySn}
\end{equation}

We can observe a connection between quantum and classical RACs with SR. Assume that Marge and Homer\footnote{In this scenario it is more convenient to replace Alice and Bob with Marge and Homer from \textit{The Simpsons}.} have to implement \rac{n} QRAC with SR and are deciding what strategies to use---Homer is responsible for choosing the measurements, but Marge has to choose how to encode the input string. Once they have decided, they have to follow the agreement and cannot cheat. Unfortunately, Homer is foolish and he proposes to measure all bits in the same basis. Luckily Marge is clever enough to choose the optimal encoding for Homer's measurements. According to the discussion above, she has to use the majority encoding function. Thus the obtained QRAC is as good as an optimal classical RAC discussed in Sect.~\ref{sect:OptimalClassical}, Theorem~\ref{thm:OptimalClassical}.

It looks plausible that using the same measurement for all bits is the worst decoding strategy. However, we have not proved this, so we leave it as a conjecture:

\begin{conjecture}
For any choice of measurements there is an encoding such that the resulting \rac{n} quantum RAC with SR is at least as good as the optimal \rac{n} classical one.
\end{conjecture}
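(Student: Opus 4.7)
The plan is to translate the conjecture into a single geometric inequality on unit vectors in $\mathbb{R}^3$ and then attack it by averaging or concavity. From (\ref{eq:AverageProbability})--(\ref{eq:ProbabilitySn}) of Sect.~\ref{sect:OptimalEncoding}, for any fixed measurement directions $\vc{v}_1,\ldots,\vc{v}_n\in S^2$, the optimal encoding with shared randomness attains average success probability $\tfrac{1}{2}\bigl(1+\Phi(\vc{v})/(n\cdot 2^n)\bigr)$, where
\begin{equation*}
  \Phi(\vc{v}_1,\ldots,\vc{v}_n)\;:=\;\sum_{x\in\{0,1\}^n}\norm{\sum_{i=1}^{n}(-1)^{x_i}\vc{v}_i}.
\end{equation*}
Combined with the classical value (\ref{eq:Exact}) and the elementary identity $\sum_x\abs{n-2\abs{x}}=2n\binom{n-1}{\lfloor(n-1)/2\rfloor}$ (derivable in the spirit of Appendix~\ref{app:MagicFormulas}), the conjecture becomes the purely geometric inequality
\begin{equation*}
  \Phi(\vc{v}_1,\ldots,\vc{v}_n)\;\geq\;\sum_{x\in\{0,1\}^n}\abs{n-2\abs{x}}
  \qquad\text{for all unit } \vc{v}_i\in\mathbb{R}^3.
\end{equation*}
Since $\sum_i(-1)^{x_i}\vc{v}=(n-2\abs{x})\vc{v}$, the right-hand side equals $\Phi(\vc{v},\ldots,\vc{v})$ for every unit $\vc{v}$, so the conjecture asserts precisely that $\Phi$ is minimised on $(S^2)^n$ at the ``all measurements coincide'' configuration, matching the informal discussion preceding the conjecture.

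The base cases $n\leq 3$ can be dispatched directly; e.g.\ for $n=2$, $\Phi=4\bigl(\cos\tfrac{\theta}{2}+\sin\tfrac{\theta}{2}\bigr)\geq 4$, where $\theta=\angle(\vc{v}_1,\vc{v}_2)$. For general $n$, the natural tool is the Cauchy-type identity $\norm{\vc{w}}=2\int_{S^2}\abs{\hat{\vc{n}}\cdot \vc{w}}\,d\mu(\hat{\vc{n}})$ (with $\mu$ uniform on $S^2$), which rewrites
\begin{equation*}
  \Phi(\vc{v})=2\int_{S^2} T\bigl(\hat{\vc{n}}\cdot \vc{v}_1,\ldots,\hat{\vc{n}}\cdot \vc{v}_n\bigr)\,d\mu(\hat{\vc{n}}),\qquad
  T(a_1,\ldots,a_n):=\sum_{\epsilon\in\{\pm1\}^n}\abs{\sum_i\epsilon_i a_i},
\end{equation*}
reducing the problem to the one-dimensional inequality $\int T\,d\mu\geq\tfrac12\,T(1,\ldots,1)$; this is tight in the parallel case since $T$ is positively $1$-homogeneous and $\int\abs{\hat{\vc{n}}\cdot \vc{v}}\,d\mu=\tfrac12$ for unit $\vc{v}$. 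A complementary structural attack regards $\Phi$ as a function of the Gram matrix $G_{ij}=\vc{v}_i\cdot \vc{v}_j$: one has $\Phi(G)=\sum_{\epsilon}\sqrt{\sum_{i,j}\epsilon_i\epsilon_j G_{ij}}$, which is concave in $G$ on the elliptope $\mathcal{E}=\{G\succeq 0:G_{ii}=1\}$ (each summand is $\sqrt{\cdot}$ of a linear form in $G$), so its minimum is attained at an extreme point of $\mathcal{E}$; the rank-one extreme points $G=\vc{e}\,\vc{e}^{T}$ with $\vc{e}\in\{\pm1\}^n$ are exactly the collinear configurations and, after the sign relabelling $\epsilon_i\mapsto\epsilon_i(\vc{e})_i$, all give $\Phi=T(1,\ldots,1)$.

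The main obstacle is the residue of either route. Jensen yields only $\int T\,d\mu\geq T(\vc{0})=0$, while the sharp Khintchine--Kahane lower bound in a Hilbert space gives $\mathbb{E}_{\epsilon}\norm{\sum_i\epsilon_i\vc{v}_i}\geq\sqrt{n/2}$, which is strictly \emph{weaker} than the target $\mathbb{E}_{\epsilon}\abs{\sum_i\epsilon_i}\sim\sqrt{2n/\pi}$ for large $n$; thus the conjecture amounts to a genuine strengthening of Khintchine--Kahane in Hilbert space. On the elliptope side, classifying and evaluating $\Phi$ at the higher-rank extreme points of $\mathcal{E}$ that are realisable in $\mathbb{R}^3$ is also delicate. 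The cleanest potential completion is a smoothing/exchange step showing that replacing a pair $(\vc{v}_i,\vc{v}_j)$ by two copies of $(\vc{v}_i+\vc{v}_j)/\norm{\vc{v}_i+\vc{v}_j}$ never increases $\Phi$, which would iterate any configuration down to the all-parallel one; the difficulty is that convexity of the norm is insufficient here because of the non-convex unit-sphere constraint, so the proof will likely require genuine combinatorial input from the hypercube $\{0,1\}^n$ or a more refined geometric argument on the Bloch sphere.
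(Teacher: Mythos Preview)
The paper does not prove this statement: it is explicitly labelled a \emph{conjecture} at the end of Sect.~\ref{sect:OptimalEncoding} and reappears in Sect.~\ref{sect:OpenProblems} as the open ``Homer conjecture''. There is therefore no paper proof to compare against.

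Your reformulation is correct and matches the paper's informal set-up. Using (\ref{eq:AverageProbability})--(\ref{eq:ProbabilitySn}) together with the exact classical value (\ref{eq:Exact}), the conjecture is equivalent to
\[
  \Phi(\vc{v}_1,\ldots,\vc{v}_n)=\sum_{x\in\{0,1\}^n}\norm{\sum_{i}(-1)^{x_i}\vc{v}_i}\;\geq\;\sum_{x\in\{0,1\}^n}\abs{n-2\abs{x}}
\]
for all unit $\vc{v}_i\in\R^3$, i.e.\ $\Phi$ is minimised at the collinear configuration. Your base case $n=2$ and the integral representation $\norm{\vc{w}}=2\int_{S^2}\abs{\hat{\vc{n}}\cdot\vc{w}}\,d\mu$ are both correct, and your observation that the sharp scalar Khintchine constant $1/\sqrt{2}$ gives only $\E_\epsilon\norm{\sum_i\epsilon_i\vc{v}_i}\geq\sqrt{n/2}$, which is asymptotically \emph{weaker} than the target $\E_\epsilon\abs{\sum_i\epsilon_i}\sim\sqrt{2n/\pi}$, is the right diagnosis of why off-the-shelf tools fail.

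Your proposal is honest about not closing the gap, and that is accurate: neither route you describe is complete. One point worth sharpening in the elliptope approach: the set $\{G\in\mathcal{E}:\mathrm{rank}\,G\leq 3\}$ relevant to qubits is \emph{not convex}, so concavity of $\Phi(G)$ only helps if you work on the full elliptope $\mathcal{E}$ (arbitrary dimension). That would prove a strictly stronger, dimension-free statement, but then you genuinely must handle the higher-rank extreme points of $\mathcal{E}$, which exist for $n\geq 4$ and are not easy to enumerate; showing each gives $\Phi\geq\Phi(\vc{e}\vc{e}^{T})$ is essentially as hard as the original problem. The smoothing/exchange idea you mention at the end is appealing, but note that replacing $(\vc{v}_i,\vc{v}_j)$ by two copies of $(\vc{v}_i+\vc{v}_j)/\norm{\vc{v}_i+\vc{v}_j}$ already fails to be monotone for $n=2$ (take $\vc{v}_1\perp\vc{v}_2$: $\Phi=4\sqrt{2}$ drops to $4$, which is the right direction, but the argument must show the decrease is always nonnegative and one needs to control the interaction with the remaining $\vc{v}_k$). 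In short: the problem remains open, and your write-up is a fair survey of why.
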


%-------------------------------------------------------------------%
\subsection{Relation to a random walk in \texorpdfstring{$\R^3$}{R3}} \label{sect:RandomWalk}
%-------------------------------------------------------------------%

QRACs with shared randomness are related to random walks in $\R^3$. This relation can be seen by suitably interpreting equations (\ref{eq:Sn}) and (\ref{eq:ProbabilitySn}). Let us consider an \rac{n} QRAC with SR whose measurement directions are given by unit vectors $\set{\vc{v}_i}$ and let us assume that the corresponding optimal encoding for these measurements is used as described in the previous section. Then we can write the success probability $p(\vc{v}_1, \dotsc, \vc{v}_n)$ of this QRAC in the following suggestive form:
\begin{equation}
  p(\vc{v}_1, \dotsc, \vc{v}_n) = \frac{1}{2} \Bigl( 1 + \frac{1}{n} \; d(\vc{v}_1, \dotsc, \vc{v}_n) \Bigr),
  \label{eq:pv}
\end{equation}
where
\begin{equation}
  d(\vc{v}_1, \dotsc, \vc{v}_n) = \frac{1}{2^n} \sum_{a \in \set{1,-1}^n} \norm{\sum_{i=1}^n a_i \vc{v}_i}
  \label{eq:dv}
\end{equation}
is the average distance traveled by a random walk whose \mbox{$i$th} step is $\vc{v}_i$ or $-\vc{v}_i$, each with probability $1/2$. For example, $\vc{v}_1 = \vc{v}_2 = \dotsb = \vc{v}_n$ corresponds to a random walk on a line and $d(\vc{v}_1, \dotsc, \vc{v}_n)$ is the average distance traveled after $n$ steps of this walk. Recall from the previous section that this choice of $\set{\vc{v}_i}$ corresponds to the optimal classical RAC and we conjecture that this is the worst possible choice. Similarly, if we choose roughly one third of vectors $\set{\vc{v}_i}$ along each coordinate axis, we obtain a random walk in a cubic lattice and $d(\vc{v}_1, \dotsc, \vc{v}_n)$ is the average distance traveled when roughly $n/3$ steps are performed along each coordinate axis (see Sect.~\ref{sect:Lower2}).

In Sects.~\ref{sect:LowerBounds} we will use this relation between random access codes and random walks to prove a lower bound for the success probability of \rac{n} QRACs with SR.

%----------------------%
\subsection{Upper bound} \label{sect:UpperBound}
%----------------------%

In this section we will derive an upper bound for $S(n)$. For this purpose we rewrite the equation (\ref{eq:Sn}) in the following form:
\begin{equation}
  S(n) = \max_\vis \Sv
  \label{eq:MaxSv}
\end{equation}
where
\begin{equation}
  \Sv \spacer = \sum_{a \in \set{1,-1}^n} \norm{\sum_{i=1}^n a_i \vc{v}_i}
  \label{eq:Sv}
\end{equation}
(for convenience we take the sum over the set $\set{1,-1}^n$ instead of $\set{0,1}^n$).

\begin{lemma}
For any unit vectors $\vc{v}_1,\dotsc,\vc{v}_n$ we have
\begin{equation}
  \sum_{a_1, \dotsc, a_n \in \set{1,-1}} \norm{a_1 \vc{v}_1 + \dotsb + a_n \vc{v}_n}^2 = n \cdot 2^n.
  \label{eq:NormSquareSum}
\end{equation}
\label{lm:NormSquares}
\end{lemma}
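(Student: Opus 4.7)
The plan is to expand the squared norm into a double sum over pairs $(i,j)$ using the bilinearity of the inner product, and then show that after summing over $a \in \set{1,-1}^n$ only the diagonal terms survive.

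First I would write
\begin{equation}
  \norm{a_1 \vc{v}_1 + \dotsb + a_n \vc{v}_n}^2
  = \sum_{i=1}^n \sum_{j=1}^n a_i a_j (\vc{v}_i \cdot \vc{v}_j)
  = \sum_{i=1}^n \norm{\vc{v}_i}^2 + \sum_{i \neq j} a_i a_j (\vc{v}_i \cdot \vc{v}_j),
\end{equation}
using $\norm{\vc{v}_i} = 1$ to reduce the diagonal contribution to $n$. Now I would sum over $a \in \set{1,-1}^n$. The diagonal part contributes $n \cdot 2^n$ since it does not depend on $a$. For the off-diagonal part, I would observe that for any fixed pair $i \neq j$, the sum $\sum_{a \in \set{1,-1}^n} a_i a_j$ factors as $2^{n-2} \bigl(\sum_{a_i \in \set{1,-1}} a_i\bigr)\bigl(\sum_{a_j \in \set{1,-1}} a_j\bigr) = 2^{n-2} \cdot 0 \cdot 0 = 0$, so every cross term vanishes.

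Combining the two contributions yields
\begin{equation}
  \sum_{a \in \set{1,-1}^n} \norm{a_1 \vc{v}_1 + \dotsb + a_n \vc{v}_n}^2 = n \cdot 2^n,
\end{equation}
as claimed.

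There is no real obstacle here: the identity is a clean consequence of orthogonality of the characters $a \mapsto a_i a_j$ on $\set{1,-1}^n$ (equivalently, of the symmetry $a \leftrightarrow -a$ applied in a single coordinate), combined with the unit-norm hypothesis. The only thing to be careful about is that the argument does not use any property of $\R^3$: it works in any inner product space, which is reassuring for the use of this lemma toward bounding $S(n)$ via Cauchy--Schwarz in the subsequent upper bound derivation.
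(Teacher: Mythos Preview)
Your proof is correct. It differs from the paper's argument: the paper proceeds by induction on $n$, handling the inductive step via the parallelogram identity
\[
  \norm{\vc{u}_1 + \vc{u}_2}^2 + \norm{\vc{u}_1 - \vc{u}_2}^2 = 2\bigl(\norm{\vc{u}_1}^2 + \norm{\vc{u}_2}^2\bigr)
\]
applied with $\vc{u}_1 = a_1 \vc{v}_1 + \dotsb + a_k \vc{v}_k$ and $\vc{u}_2 = \vc{v}_{k+1}$. Your direct expansion into $\sum_{i,j} a_i a_j (\vc{v}_i \cdot \vc{v}_j)$ followed by the observation $\sum_{a} a_i a_j = 0$ for $i \neq j$ is a cleaner one-shot argument that avoids induction altogether; it is essentially the orthogonality of the Rademacher functions. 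Both routes are equally rigorous and yield the result in any inner product space, but yours is shorter and makes the role of the unit-norm hypothesis (only the diagonal survives, and each diagonal term equals $1$) more transparent.
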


\begin{proof}
For $n = 1$ we have
\begin{equation}
  \sum_{a_1 \in \set{1,-1}} \norm{a_1 \vc{v}_1}^2 = \norm{\vc{v}_1}^2 + \norm{-\vc{v}_1}^2 = 2.
\end{equation}
Let us assume that equation (\ref{eq:NormSquareSum}) holds for $n = k$. Then for $n = k + 1$ we have
\begin{equation}
    \sum_{a_1, \dotsc, a_{k}, a_{k+1} \in \set{1,-1}}
      \norm{a_1 \vc{v}_1 + \dotsb + a_k \vc{v}_k + a_{k+1} \vc{v}_{k+1}}^2.
\end{equation}
If we write out the sum over $a_{k+1}$ explicitly, we obtain
\begin{equation}
  \sum_{a_1, \dotsc, a_{k} \in \set{1,-1}}
    \left(
      \norm{a_1 \vc{v}_1 + \dotsb + a_k \vc{v}_k + \vc{v}_{k+1}}^2 +
      \norm{a_1 \vc{v}_1 + \dotsb + a_k \vc{v}_k - \vc{v}_{k+1}}^2
    \right).
\end{equation}
We can use the parallelogram identity
\begin{equation}
  \norm{\vc{u}_1 + \vc{u}_2}^2 + \norm{\vc{u}_1 - \vc{u}_2}^2 = 2 \left(\norm{\vc{u}_1}^2 + \norm{\vc{u}_2}^2\right),
\end{equation}
which holds for any two vectors $\vc{u}_1$ and $\vc{u}_2$, to simplify the sum as follows:
\begin{equation}
  \sum_{a_1, \dotsc, a_{k} \in \set{1,-1}}
    2 \left( \norm{a_1 \vc{v}_1 + \dotsb + a_k \vc{v}_k}^2 + \norm{\vc{v}_{k+1}}^2 \right).
  \label{eq:AlmostDone}
\end{equation}
We know that $\vc{v}_{k+1}$ is a unit vector and we have assumed that (\ref{eq:NormSquareSum}) holds for $n = k$; therefore (\ref{eq:AlmostDone}) simplifies to $2 \left( k \cdot 2^k + 2^k \right) = (k + 1) \cdot 2^{k + 1}$.
\end{proof}

We will use the previous lemma to obtain an upper bound for $\Sv^2$ defined in (\ref{eq:Sv}). According to (\ref{eq:MaxSv}) this will give us an upper bound for $S(n)$ as well.

\begin{lemma}
For any set of unit vectors $\set{\vc{v}_i}_{i=1}^n$, the inequality $\Sv \leq \sqrt{n} \cdot 2^n$ holds.
\label{lm:Sv}
\end{lemma}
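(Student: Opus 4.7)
The plan is to combine Lemma~\ref{lm:NormSquares} with the Cauchy--Schwarz inequality (equivalently, the quadratic--arithmetic mean inequality). The quantity $\Sv$ in (\ref{eq:Sv}) is a sum of $2^n$ nonnegative real numbers $\norm{\sum_i a_i \vc{v}_i}$, indexed by $a \in \set{1,-1}^n$, and Lemma~\ref{lm:NormSquares} tells us the sum of the squares of these numbers is exactly $n \cdot 2^n$. So I would bound the sum by the root-mean-square times $\sqrt{2^n}$.

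Concretely, first I would apply Cauchy--Schwarz in the form
\begin{equation}
  \Biggl(\sum_{a \in \set{1,-1}^n} \norm{\sum_{i=1}^n a_i \vc{v}_i}\Biggr)^2
  \leq \Biggl(\sum_{a \in \set{1,-1}^n} 1\Biggr) \cdot \Biggl(\sum_{a \in \set{1,-1}^n} \norm{\sum_{i=1}^n a_i \vc{v}_i}^2\Biggr),
\end{equation}
with the first factor on the right being $2^n$. Then I would invoke Lemma~\ref{lm:NormSquares} to replace the second factor by $n \cdot 2^n$, giving $\Sv^2 \leq n \cdot 2^{2n}$, and conclude by taking square roots.

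There is no real obstacle here: the heavy lifting was done in Lemma~\ref{lm:NormSquares}, and Cauchy--Schwarz is the obvious (and tight) way to pass from an $\ell^2$ bound to an $\ell^1$ bound over $2^n$ summands. The only thing worth remarking on is when equality holds: it requires all the lengths $\norm{\sum_i a_i \vc{v}_i}$ to be equal, i.e., $\sqrt{n}$ for every sign pattern $a$. This happens, for example, when $n \leq 3$ and the $\vc{v}_i$ are mutually orthogonal (recovering the known \rac{2} and \rac{3} QRACs from Sect.~\ref{sect:KnownQRACs}); for larger $n$ one cannot have more than three mutually orthogonal unit vectors in $\R^3$, which is why the upper bound is not expected to be tight for $n \geq 4$, consistent with the discussion in Sect.~\ref{sect:noQRAC4}.
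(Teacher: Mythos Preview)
Your proposal is correct and essentially identical to the paper's proof: the paper also interprets $\Sv$ as an inner product with the all-ones vector in $\R^{2^n}$, applies Cauchy--Schwarz, and then invokes Lemma~\ref{lm:NormSquares} to evaluate the $\ell^2$ norm. Your additional remarks about the equality case are accurate and consistent with the paper's discussion following Theorem~\ref{thm:UpperBound}.
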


\begin{proof}
We can interpret the first sum in equation (\ref{eq:Sv}) as an inner product with $(1, \dotsc, 1) \in \R^{2^n}$. Then the Cauchy-Schwarz inequality $\vc{x} \cdot \vc{y} \leq \norm{\vc{x}} \norm{\vc{y}}$ says that
\begin{equation}
  \Sv \leq \sqrt{2^n} \; \sqrt{\sum_{a \in \set{1,-1}^n} \norm{\sum_{i=1}^n a_i \vc{v}_i}^2}
         = \sqrt{2^n} \; \sqrt{n \cdot 2^n} = \sqrt{n} \cdot 2^n,
\end{equation}
where Lemma~\ref{lm:NormSquares} was used to obtain the first equality.
%According to (\ref{eq:Sv}) we have
%\begin{equation}
%  \Sv^2 \spacer =
%    \sum_{a,a'\in\set{1,-1}^n}
%    \norm{\sum_{i=1}^n a_i \vc{v}_i} \cdot
%    \norm{\sum_{i=1}^n a'_i \vc{v}_i}.
%  \label{eq:Sv2}
%\end{equation}
%For any real $x$ and $y$ we have $(x-y)^2 \geq 0$, so $x y \leq \frac{1}{2}(x^2 + y^2)$. If we apply this inequality to each term in (\ref{eq:Sv2}), we obtain
%\begin{equation}
%  \Sv^2 \spacer \leq
%    \sum_{a,a'\in\set{1,-1}^n}
%    \frac{1}{2}
%    \left(
%      \norm{\sum_{i=1}^n a_i \vc{v}_i}^2 +
%      \norm{\sum_{i=1}^n a'_i \vc{v}_i}^2
%    \right).
%  \label{eq:Sv2Inequality}
%\end{equation}
%We can replace the double sum in (\ref{eq:Sv2Inequality}) with two single sums, because the first norm depends only on $a$, and the second only on $a'$:
%\begin{equation}
%  \Sv^2 \spacer \leq \spacer
%    2^n \spacer \frac{1}{2} \sum_{a\in\set{1,-1}^n}
%    \norm{\sum_{i=1}^n a_i \vc{v}_i}^2 +
%    \spacer
%    2^n \spacer \frac{1}{2} \sum_{a'\in\set{1,-1}^n}
%    \norm{\sum_{i=1}^n a'_i \vc{v}_i}^2.
%\end{equation}
%If we apply Lemma~\ref{lm:NormSquares} to the sums over $a$ and $a'$, we obtain
%\begin{equation}
%  \Sv^2 \leq 2^n \frac{1}{2} (n \cdot 2^n) + 2^n \frac{1}{2} (n \cdot 2^n) = n \cdot 2^{2n}.
%\end{equation}
%After taking the square root of both sides, we get $\Sv \leq \sqrt{n} \cdot 2^n$.
\end{proof}

\begin{theorem}
For any \racp{n} QRAC with shared randomness, $p \leq \frac{1}{2} + \frac{1}{2 \sqrt{n}}$.
\label{thm:UpperBound}
\end{theorem}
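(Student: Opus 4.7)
The plan is to just chain together the ingredients that have already been assembled. By the quantum version of Theorem~\ref{thm:SRtoPure} (valid as noted in Sect.~\ref{sect:TypesOfQRACs}), the worst-case success probability of any \rac{n} QRAC with shared randomness is at most the average success probability (under uniform input) of the best pure quantum strategy. So I only need to upper bound this average over pure quantum strategies.

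For any pure quantum strategy, Sect.~\ref{sect:OptimalEncoding} showed that once the measurement directions $\set{\vc{v}_i}$ are fixed, the optimal encoding is forced: $\vc{r}_x = \vc{v}_x / \norm{\vc{v}_x}$ where $\vc{v}_x = \sum_i (-1)^{x_i} \vc{v}_i$. The resulting average probability is given by equation (\ref{eq:ProbabilitySn}), namely $p = \frac{1}{2}\bigl(1 + \frac{S(n)}{n \cdot 2^n}\bigr)$ with $S(n) = \max_\vis \Sv$. Since any pure strategy attaining the optimum must in particular use the optimal encoding for its measurements, maximizing over pure strategies is equivalent to maximizing $\Sv$ over unit vectors $\set{\vc{v}_i}$.

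Now Lemma~\ref{lm:Sv} bounds $\Sv \leq \sqrt{n}\cdot 2^n$ uniformly in the choice of $\set{\vc{v}_i}$, which gives $S(n) \leq \sqrt{n} \cdot 2^n$. Substituting into (\ref{eq:ProbabilitySn}),
\begin{equation}
  p \;\leq\; \frac{1}{2}\left(1 + \frac{\sqrt{n} \cdot 2^n}{n \cdot 2^n}\right) \;=\; \frac{1}{2} + \frac{1}{2\sqrt{n}},
\end{equation}
which is the claimed bound.

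In short, there is no real obstacle: all the work has already been done in Lemmas~\ref{lm:NormSquares} and \ref{lm:Sv} (which together exploit the parallelogram identity and Cauchy--Schwarz) and in the Yao-style reduction to pure strategies. The theorem is essentially a one-line corollary obtained by feeding the Cauchy--Schwarz estimate on $\Sv$ into the formula (\ref{eq:ProbabilitySn}) relating $S(n)$ to the success probability.
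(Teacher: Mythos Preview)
Your proof is correct and follows the same approach as the paper: apply Lemma~\ref{lm:Sv} to bound $S(n) \leq \sqrt{n}\cdot 2^n$ and plug into (\ref{eq:ProbabilitySn}). If anything, you are slightly more explicit than the paper in invoking the Yao-style reduction (Theorem~\ref{thm:SRtoPure}) to pass from SR strategies to pure ones, which the paper leaves implicit.
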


\begin{proof}
From Lemma~\ref{lm:Sv} we have $\Sv \leq \sqrt{n} \cdot 2^n$. From equation (\ref{eq:MaxSv}) we see that the same upper bound applies to $S(n)$. Putting this into (\ref{eq:ProbabilitySn}) we get
\begin{equation*}
  p \leq \frac{1}{2} + \frac{1}{2\sqrt{n}}. \qedhere
%  \label{eq:UpperBound}
\end{equation*}
\end{proof}

In particular, this means that the known \rac{2} and \rac{3} QRACs discussed in Sect.~\ref{sect:KnownQRACs} cannot be improved even if shared randomness is allowed.

The intuition behind this upper bound is as follows. If instead of $\R^3$ the Bloch vector of a qubit state would be in $\R^n$, we could choose all $n$ measurements to be mutually orthogonal. For example, we could choose the vectors forming measurement bases to be the vertices of the \emph{cross polytope}, i.e., all permutations of $(\pm 1, 0, \dotsc, 0)$. The optimal encoding corresponding to this choice are the vertices of the \emph{hypercube}, i.e., points $(\pm 1, \pm 1, \dotsc, \pm 1)$, thus all terms in equation (\ref{eq:Sv}) are equal to $\sqrt{n}$ and sum to $2^n \sqrt{n}$, so the probability (\ref{eq:ProbabilitySn}) is $\frac{1}{2} (1 + \frac{2^n \sqrt{n}}{2^n n}) = \frac{1}{2} + \frac{1}{2\sqrt{n}}$. Since we have only three dimensions, the actual probability should not be larger.

%------------------------------%
\subsection{General upper bound} \label{sect:GeneralUpperBound}
%------------------------------%

Let us prove an analogue of Theorem~\ref{thm:UpperBound} for a more general model, because quantum mechanics allows us to consider more general quantum states and measurements. Namely, Alice can encode her message into a \emph{mixed state} instead of a pure state and Bob can use a POVM measurement instead of an orthogonal measurement to recover information. A mixed state is just a probability distribution over pure states, so it does not provide a more general encoding model. In contrast, a POVM measurement provides a more general decoding model. In fact, there is another reason to extend the model.

\begin{example}
It is not possible to construct a pure QRAC (as defined in Sect.~\ref{sect:TypesOfQRACs}) that simulates the following pure classical \rac{2} RAC:
\begin{itemize}
  \item \emph{encoding:} encode the first bit,
  \item \emph{decoding:} if the first bit is requested, output the received bit; if the second one is requested---output $0$ no matter what is received.
\end{itemize}
To recover the first bit with certainty, Alice and Bob have to agree on two antipodal points on the Bloch sphere, where the information is encoded. Unfortunately the second bit will cause a problem---it is not possible to choose an orthogonal measurement of a qubit in an unknown state, so that the result is always the same.
\end{example}

This example suggests that the model of pure quantum encoding-decoding strategies introduced in Sect.~\ref{sect:TypesOfQRACs} should be extended in one way or the other. It is obvious that a \emph{constant decoding function} ($0$ or $1$) can be implemented using a single-outcome POVM measurement. However, it turns out that in the qubit case a two-outcome POVM measurement can be replaced by a probability distribution over orthogonal measurements and constant decoding functions (see appendix~\ref{app:POVMs}). This means that both extensions are equivalent. For simplicity we choose to extend the model by allowing constant decoding functions, thus Bob can either perform an orthogonal measurement or use a constant decoding function. The goal of this section is to show that constant decoding functions do not give any advantage.

\begin{definition}
An \emph{enhanced orthogonal measurement} is either an orthogonal measurement or one that always gives the same answer.
\end{definition}

\begin{definition}
An \emph{enhanced pure quantum \rac{n} encoding-decoding strategy} is given by an ordered tuple $(E,M_1,\dotsc,M_n)$ consisting of encoding function $E: \set{0,1}^n \mapsto \C^2$ and $n$ decoding functions $M_i$ that are enhanced orthogonal measurements.
\end{definition}

\begin{definition}
An \emph{enhanced quantum encoding-decoding strategy with SR} is a probability distribution over enhanced pure quantum strategies.
\end{definition}

Now it is straightforward to construct a pure quantum RAC for the previous example. In fact, now any classical RAC (either pure, mixed or with SR) can be simulated by the corresponding type of a quantum RAC.

There is no need to further extend the model of enhanced QRACs with SR by adding other types of classical randomness. For example, a probabilistic combination of POVMs does not provide a more general measurement, because it can be simulated by a probabilistic combination of enhanced orthogonal measurements. The same holds for probabilistic \mbox{post-processing} of the measurement results (which can be simulated by a probabilistic combination of enhanced orthogonal measurements as shown in Appendix~\ref{app:POVMs}). Therefore enhanced QRACs with SR constitute the most general model when any kind of classical randomness is allowed.

One might suspect that the upper bound obtained in Theorem~\ref{thm:UpperBound} does not hold for this model, but this is not the case.

\begin{theorem}
For any \racp{n} enhanced QRAC with SR, $p \leq \frac{1}{2} + \frac{1}{2 \sqrt{n}}$.
\label{thm:GeneralUpperBound}
\end{theorem}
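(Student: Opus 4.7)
The plan is to reduce to the setting of Theorem~\ref{thm:UpperBound} by isolating the bits that are decoded by honest orthogonal measurements from those decoded by constant functions. As noted in Section~\ref{sect:TypesOfQRACs}, Yao's principle applies to quantum strategies just as to classical ones, so it suffices to upper bound the uniform-average success probability of a pure enhanced strategy.

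Fix such a strategy and let $T\subseteq\{1,\dotsc,n\}$ be the set of bit indices for which Bob uses a genuine orthogonal measurement, with unit Bloch vectors $\{\vc{v}_i\}_{i\in T}$; write $k=\abs{T}$. The remaining $n-k$ bits are decoded by constant functions. Because the input is uniform, the success probability on each constant-decoded bit is exactly $1/2$, so these bits contribute exactly $(n-k)/2$ to $\sum_i p_i$ and cannot beat guessing.

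For $i\in T$ the analysis of Section~\ref{sect:OptimalEncoding} applies almost verbatim. The optimal encoding $\vc{r}_x$ is aligned with the restricted vector $\vc{v}_x^{(T)} = \sum_{i\in T}(-1)^{x_i}\vc{v}_i$, so
\begin{equation*}
\sum_{i\in T} p_i \;\le\; \frac{k}{2} + \frac{1}{2\cdot 2^n}\sum_{x\in\set{0,1}^n}\norm{\vc{v}_x^{(T)}}.
\end{equation*}
Since $\vc{v}_x^{(T)}$ depends only on $x|_T$, each of the $2^k$ sign patterns appears $2^{n-k}$ times, and applying Lemma~\ref{lm:Sv} to the $k$ unit vectors $\{\vc{v}_i\}_{i\in T}$ bounds the inner sum by $2^{n-k}\cdot\sqrt{k}\cdot 2^k = \sqrt{k}\cdot 2^n$. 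Combining the two groups,
\begin{equation*}
p \;=\; \frac{1}{n}\sum_{i=1}^{n} p_i \;\le\; \frac{1}{n}\Bigl(\frac{k+\sqrt{k}}{2}+\frac{n-k}{2}\Bigr) \;=\; \frac{1}{2}+\frac{\sqrt{k}}{2n} \;\le\; \frac{1}{2}+\frac{1}{2\sqrt{n}},
\end{equation*}
using $k\le n$.

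The only real obstacle is bookkeeping: checking that the contribution from each constant-decoded bit is exactly $1/2$ (this uses uniformity of the input, which Yao's principle supplies) and that Lemma~\ref{lm:Sv} may be applied to a proper subset of the measurement vectors. The conceptual point is that constant decoders can only weaken the bound; in particular the extreme case $k=n$ already saturates $\frac{1}{2}+\frac{1}{2\sqrt{n}}$, so the enhancement offers no advantage on average.
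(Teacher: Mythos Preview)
Your argument is correct. Both proofs rest on the same idea---constant decoders contribute exactly $\tfrac{1}{2}$ under the uniform input distribution supplied by Yao's principle, so the bound reduces to Lemma~\ref{lm:Sv} applied to the bits with genuine orthogonal measurements---but the organization differs. The paper proceeds by induction on $n$, peeling off one constant-decoded bit at a time: if the $k$th bit uses a constant decoder, it writes $p(k)=\frac{k-1}{k}\,p(k-1)+\frac{1}{2k}$ and applies the inductive hypothesis to the remaining $(k-1)$-bit code, obtaining $p(k)\le \frac{1}{2}+\frac{\sqrt{k-1}}{2k}$. You instead separate all $n-k$ constant-decoded bits in one stroke and invoke Lemma~\ref{lm:Sv} directly on the $k$ honest measurements, arriving at $p\le\frac{1}{2}+\frac{\sqrt{k}}{2n}$. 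Your route is a bit more economical (no induction), and it makes the dependence on $k$ explicit at once; the paper's induction, on the other hand, highlights that each constant decoder strictly worsens the bound. The two arrive at the same inequality and the same conclusion.
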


\begin{proof}
According to Yao's principle and Theorem~\ref{thm:SRtoPure}, we can consider the average success probability of pure enhanced QRACs instead. It suffices to rule out the constant decoding functions. More precisely, we have to show that QRACs having a constant decoding function for some bit give a smaller upper bound than those without it. In fact, we are proving a quantum analogue of Lemma~\ref{lm:NoConst} from Sect.~\ref{sect:OptimalClassical}.

We will use induction on $n$. The case $n=1$ is trivial---a pure enhanced QRAC with a constant decoding function has average success probability $\frac{1}{2} < 1$. Let us assume that for some $n = k - 1 \geq 1$ the constant decoding functions do not give any benefit. We now prove that the same holds for $n = k$. Let us assume that the constant decoding function $0$ is used for the \mbox{$k$th} bit. The average case success probability is
\begin{equation}
  p(k) = \frac{1}{2^k \cdot k} \sum_{x \in \set{0,1}^k}
         \left( \sum_{i=1}^{k-1} p(x,i) + \delta_{0,x_k} \right),
\end{equation}
where $p(x,i)$ is the success probability (\ref{eq:pxi}) for the input $(x,i)$ where $i \leq k-1$ and $\delta_{0,x_k}$ is the probability that the decoding function $0$ gives a correct answer for the \mbox{$k$th} bit. The last bit can be ignored during the encoding and decoding of other bits:
\begin{align}
  p(k) &= \left( \frac{1}{2^k \cdot k} \sum_{\spacer x \in \set{0,1}^{k-1}} 2 \sum_{i=1}^{k-1} p(x,i) \right)
          + \frac{1}{2k} \\
       &= \frac{k-1}{k} \left( \frac{1}{2^{k-1} \cdot (k-1)} \sum_{\spacer x \in \set{0,1}^{k-1}}
          \sum_{i=1}^{k-1} p(x,i) \right) + \frac{1}{2k}. \label{eq:pk}
\end{align}
Note that the bracketed expression in (\ref{eq:pk}) is the success probability $p(k-1)$ of a shorter QRAC. Therefore
\begin{equation}
  p(k) = \frac{k-1}{k} \cdot p(k-1) + \frac{1}{2k}.
\end{equation}
Now we can apply the inductive hypothesis:
\begin{equation}
  p(k) \leq \frac{k-1}{k} \left( \frac{1}{2} + \frac{1}{2\sqrt{k-1}} \right) + \frac{1}{2k}
          = \frac{1}{2} + \frac{\sqrt{k-1}}{2k}
          < \frac{1}{2} + \frac{1}{2\sqrt{k}},
  \label{eq:pk2}
\end{equation}
completing the proof. Thus the upper bound obtained in Theorem~\ref{thm:UpperBound} holds for the general model as well.
\end{proof}

Observe again that for $n=2$ and $n=3$ this upper bound matches equations (\ref{eq:p2}) and (\ref{eq:p3}), respectively. This means that the known \rac{2} and \rac{3} quantum random access codes with pure encoding-decoding strategies (see Sects.~\ref{sect:KnownQRAC2} and \ref{sect:KnownQRAC3}, respectively) are optimal even among enhanced strategies with SR. For $n=4$ we get $p\leq\frac{3}{4}$.

A similar upper bound was recently obtained by \mbox{Ben-Aroya} et al. \cite{Hypercontractive} for \racm{n}{m} QRACs, where $k$ bits must be recovered. They allow randomized strategies without shared randomness. In particular, they show that for any $\eta > 2 \ln 2$ there exists a constant $C_\eta$ such that for $n \gg k$
\begin{equation}
  p \leq C_\eta \left( \frac{1}{2} + \frac{1}{2} \sqrt{\frac{\eta m}{n}} \right)^k.
  \label{eq:HypercontractiveUpperBound}
\end{equation}
It might be possible to generalize our upper bound (\ref{eq:pk2}) to obtain something similar to (\ref{eq:HypercontractiveUpperBound}).

%-----------------------%
\subsection{Lower bounds} \label{sect:LowerBounds}
%-----------------------%

In the next two sections we will describe two constructions of \rac{n} QRAC with SR for all $n \geq 1$. These constructions provide a lower bound on the success probability. They use random and orthogonal measurements, respectively. In the first case it is hard to compute the exact success probability even for small values of $n$, but we will obtain an asymptotic expression. However, in the second case we do not know the asymptotic success probability, but can easily compute the exact success probability for small $n$.

%................................................%
\subsubsection{Lower bound by random measurements} \label{sect:Lower1}
%................................................%

We now turn to lower bound for $p$. A lower bound for QRACs with shared randomness can be obtained by randomized encoding. Alice and Bob can use the shared random string to agree on some random orthogonal measurement for each bit. Each of these measurement bases can be specified by antipodal points on the Bloch sphere (see Sect.~\ref{sect:BlochSphere}). These points can be sampled by using some sphere point picking method \cite{SpherePointPicking}, near uniformly given enough shared randomness. The chosen measurements determine the optimal encoding scheme (see Sect.~\ref{sect:OptimalEncoding}) which is known to both sides.

The expected success probability of randomized \rac{n} QRAC similarly to (\ref{eq:pv}) is given by
\begin{equation}
  \E(p) = \frac{1}{2} \left( 1 + \frac{1}{n} \E_\vis d(\vc{v}_1, \cdots, \vc{v}_n) \right)
  \label{eq:Eofp}
\end{equation}
where according to equation (\ref{eq:dv})
\begin{align}
  \E_\vis d(\vc{v}_1, \cdots, \vc{v}_n)
  &= \E_\vis \left( \frac{1}{2^n} \sum_{a \in \set{1,-1}^n} \norm{\sum_{i=1}^n a_i \vc{v}_i} \right) \\
  &= \frac{1}{2^n} \sum_{a \in \set{1,-1}^n} \E_\vis \norm{\sum_{i=1}^n a_i \vc{v}_i}. \label{eq:Eofav}
\end{align}
Each $a \in \set{1,-1}^n$ influences the direction of some vectors $\vc{v}_i$, but the resulting set $\set{a_i \vc{v}_i}$ is still uniformly distributed. Therefore the expected value in equation (\ref{eq:Eofav}) does not depend on $a$ and we have
\begin{equation}
  \E_\vis d(\vc{v}_1, \cdots, \vc{v}_n) = \E_\vis \norm{\sum_{i=1}^n \vc{v}_i}.
  \label{eq:Eofd}
\end{equation}
This expression has a very nice geometrical interpretation---it is the average distance traveled by a particle that performs $n$ steps of unit length each in a random direction. This distance can be found by evaluating the following integral:
\begin{equation}
  \frac{1}{(4\pi)^n}
  \int_{\theta_1=0}^{\pi}
  \int_{\varphi_1=0}^{2\pi}
  \dotsi
  \int_{\theta_n=0}^{\pi}
  \int_{\varphi_n=0}^{2\pi}
  \norm{
    \sum_{i=1}^n
    \mx{ \sin \theta_i \cos \varphi_i \\
         \sin \theta_i \sin \varphi_i \\
         \cos \theta_i }
  }
  \prod_{i=1}^n \sin \theta_i \, d \theta_i \, d \varphi_i.
\end{equation}
Unfortunately it is very hard to evaluate it even numerically, since the integrand is highly oscillatory. An alternative approach is to directly simulate a random walk by sampling points uniformly from the sphere \cite{SpherePointPicking}. For small values of $n$ the success probability (\ref{eq:Eofp}) averaged over $10^6$ simulations is given in Table~\ref{tab:LowerBounds}. Luckily, we have the following asymptotic result:

\begin{theorem}[Chandrasekhar \protect{\cite[pp.~14]{Chandrasekhar}}, Hughes \protect{\cite[pp.~91]{Hughes}}]
The probability density to arrive at point $\vc{R}$ after performing $n \gg 1$ steps of random walk is
\begin{equation}
  W(\vc{R}) \approx \left(\frac{3}{2 \pi n}\right)^{3/2} \exp{\left(-\frac{3\norm{\vc{R}}^2}{2n}\right)}.
  \label{eq:Chandrasekhar}
\end{equation}
\end{theorem}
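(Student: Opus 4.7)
The plan is to derive the three-dimensional Gaussian limit via the characteristic-function (Fourier) method, which is the standard route to a local central limit theorem for sums of i.i.d.\ isotropic steps. First I would observe that each step $\vc{v}_i$ is uniformly distributed on the unit sphere in $\R^3$, so its characteristic function $\phi(\vc{k}) := \E\bigl[e^{i \vc{k} \cdot \vc{v}_i}\bigr]$ depends only on $k := \norm{\vc{k}}$. Aligning the polar axis with $\vc{k}$ and computing the elementary surface integral yields $\phi(k) = \sin(k)/k$, which is smooth and has the small-$k$ expansion $\phi(k) = 1 - k^2/6 + O(k^4)$.

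Second, since the steps are i.i.d., the characteristic function of $\vc{R} = \sum_{i=1}^{n} \vc{v}_i$ is $\phi(k)^n$, and Fourier inversion gives
\begin{equation}
W(\vc{R}) = \frac{1}{(2\pi)^3} \int_{\R^3} e^{-i \vc{k} \cdot \vc{R}} \, \phi(k)^n \, d^3 k.
\end{equation}
For large $n$ the factor $\phi(k)^n$ is strongly concentrated near $\vc{k} = 0$: in a small neighborhood of the origin, $\phi(k)^n = \exp\bigl(n \log \phi(k)\bigr) \approx \exp(-n k^2 / 6)$, while for $\abs{k}$ bounded away from zero $\abs{\phi(k)} \leq 1 - c$ for some $c > 0$ (since $\abs{\sin(k)/k} < 1$ for $k \neq 0$ and tends to $0$ at infinity), so the corresponding tail of the integral is exponentially small in $n$.

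Third, I would replace $\phi(k)^n$ by the Gaussian surrogate $e^{-n k^2 / 6}$ and carry out the resulting elementary three-dimensional Gaussian integral:
\begin{equation}
W(\vc{R}) \approx \frac{1}{(2\pi)^3} \int_{\R^3} e^{-i \vc{k} \cdot \vc{R} - n k^2 / 6} \, d^3 k = \left(\frac{3}{2 \pi n}\right)^{3/2} \exp\!\left( -\frac{3 \norm{\vc{R}}^2}{2 n} \right),
\end{equation}
which is the claimed asymptotic. A sanity check: each coordinate of $\vc{v}_i$ has variance $1/3$, so each coordinate of $\vc{R}$ has variance $n/3$, matching the exponent above.

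The main technical obstacle is justifying the replacement of $\phi(k)^n$ by $e^{-n k^2 / 6}$: one must show that both the $O(k^4)$ correction in the small-$k$ expansion of $\log \phi$ and the contribution from $\abs{k}$ away from the origin produce only subleading corrections to $n^{-3/2}$. The standard remedy is to split the integral at a scale such as $\abs{k} \sim n^{-1/3}$, bound the large-$k$ region by $\bigl(\sup_{\abs{k} \geq n^{-1/3}} \abs{\phi(k)}\bigr)^n$ times a polynomial factor in $n$, and apply Taylor's theorem with remainder in the small-$k$ region. This is routine but notationally heavy; once it is in place, rotational symmetry collapses the three-dimensional Gaussian integral into a product of one-dimensional integrals and the stated formula follows.
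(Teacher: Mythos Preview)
Your argument is correct and is precisely the classical Markoff--Chandrasekhar derivation via the characteristic function $\phi(k)=\sin(k)/k$ and Fourier inversion; this is in fact the method used in the cited references. Note, however, that the paper does not supply its own proof of this statement at all: it is quoted as a known result with pointers to \cite[pp.~14]{Chandrasekhar} and \cite[pp.~91]{Hughes}, and is then used as a black box in the subsequent lower-bound theorem. So there is nothing in the paper to compare against beyond the citation, and your sketch matches what those sources do.
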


\begin{theorem}
For every $n \gg 1$ there exists an \racp{n} QRAC with expected success probability $p \approx \frac{1}{2} + \sqrt{\frac{2}{3 \pi n}}$.
\label{thm:Lower}
\end{theorem}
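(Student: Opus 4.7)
The plan is to use the randomized construction described just before the theorem statement, where the measurement directions $\vc{v}_1, \dotsc, \vc{v}_n$ are drawn independently and uniformly from the unit sphere (via shared randomness), and Alice uses the optimal encoding for these measurements derived in Sect.~\ref{sect:OptimalEncoding}. By equations (\ref{eq:Eofp}) and (\ref{eq:Eofd}) the expected success probability averaged over this randomized choice of measurements equals
\begin{equation}
  \E(p) = \frac{1}{2}\Bigl(1 + \frac{1}{n}\, \E_{\vis} \norm{\textstyle\sum_{i=1}^n \vc{v}_i}\Bigr),
\end{equation}
which is precisely the expected endpoint distance of an $n$-step uniform random walk on $\R^3$ with unit step sizes. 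Hence it suffices to show that this expected distance is asymptotically $\sqrt{8n/(3\pi)}$.

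The second step is to invoke the Chandrasekhar density (\ref{eq:Chandrasekhar}) to compute the expected radial distance. Using spherical coordinates, the density of $R = \norm{\vc{R}}$ is $f(R) = 4\pi R^2 W(\vc{R})$, so
\begin{equation}
  \E_{\vis}\norm{\textstyle\sum_{i=1}^n \vc{v}_i} \approx \int_0^\infty R \cdot 4\pi R^2 \Bigl(\frac{3}{2\pi n}\Bigr)^{3/2} \exp\Bigl(-\frac{3R^2}{2n}\Bigr)\, dR.
\end{equation}
The substitution $u = 3R^2/(2n)$ reduces this to a standard Gamma integral $\int_0^\infty u\, e^{-u}\, du = 1$, yielding after routine simplification
\begin{equation}
  \E_{\vis}\norm{\textstyle\sum_{i=1}^n \vc{v}_i} \approx \sqrt{\frac{8n}{3\pi}}.
\end{equation}

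Substituting this back gives $\E(p) \approx \frac{1}{2} + \frac{1}{n}\sqrt{\frac{8n}{3\pi}}/2 = \frac{1}{2} + \sqrt{\frac{2}{3\pi n}}$. Since this is an average over the distribution of measurement directions, by the probabilistic method there must exist a particular choice of $\vc{v}_1,\dotsc,\vc{v}_n$ (and hence a pure QRAC with the corresponding optimal encoding) whose success probability is at least this value; equivalently, the randomized construction itself is an \rac{n} QRAC with shared randomness achieving the claimed probability.

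The main obstacle is not any of the above steps individually (the Gaussian moment computation is elementary) but rather ensuring that the approximate equality from the central limit type statement of Chandrasekhar is being applied legitimately to the unbounded functional $R \mapsto R$. In principle one should verify that the tails of the true endpoint distribution decay fast enough (uniformly in $n$) that replacing it with the Gaussian envelope introduces only $o(\sqrt{n})$ error in $\E[R]$; for the present theorem, which only asserts an approximate equality for $n \gg 1$, invoking the cited Chandrasekhar asymptotic directly is sufficient.
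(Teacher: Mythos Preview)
Your proposal is correct and follows essentially the same approach as the paper: invoke the Chandrasekhar Gaussian approximation (\ref{eq:Chandrasekhar}) for the radial density of the random walk endpoint, compute the first radial moment $\int_0^\infty R\cdot 4\pi R^2 W(R)\,dR = 2\sqrt{2n/(3\pi)}$, and plug into (\ref{eq:Eofp}) via (\ref{eq:Eofd}). Your extra remarks (the explicit Gamma substitution, the probabilistic-method aside, and the caveat about tails) are all sound but go slightly beyond what the paper itself records; note that the theorem is about the \emph{expected} success probability of the randomized QRAC with SR, so the existence-of-a-good-pure-strategy observation, while true, is not actually needed for the statement as phrased.
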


\begin{proof}
Because of the spherical symmetry of the probability density in formula (\ref{eq:Chandrasekhar}), the average distance traveled after $n \gg 1$ steps of random walk is given by
\begin{equation}
  \E_\vis \norm{\sum_{i=1}^n \vc{v}_i} \approx
    \int_0^\infty R \cdot W(R) \cdot 4 \pi R^2 \, dR
  = 2 \sqrt{\frac{2n}{3\pi}}.
  \label{eq:ApproxDistance}
\end{equation}
From (\ref{eq:Eofd}) and (\ref{eq:Eofp}) we obtain
\begin{equation}
  \E(p) \approx \frac{1}{2} + \sqrt{\frac{2}{3 \pi n}},
  \label{eq:LowerBound}
\end{equation}
which gives the desired lower bound.
\end{proof}

Formally this lower bound holds only for large $n$. However, if one estimates the actual value of (\ref{eq:ApproxDistance}) by random sampling one can see that the asymptotic expression (\ref{eq:LowerBound}) is indeed smaller than the actual value (see Table.~\ref{tab:LowerBounds}).

\begin{table}[!hb]
  \centering
  \begin{tabular}{r|c|c|c|l|}
    & \multicolumn{2}{|c|}{Random measurements}
    & \multicolumn{2}{|c|}{Orthogonal measurements} \\ \cline{2-5}
    $n$ & Asymptotic & Sampling & Numerical  & \multicolumn{1}{|c|}{Exact} \\
    \hline
    $2$ & $0.825735$ & $0.8333$ & $\p{2}$    & $\frac{1}{2} + \frac{1}{2\sqrt{2}}$                                 \\
    $3$ & $0.765962$ & $0.7708$ & $\p{3}$    & $\frac{1}{2} + \frac{1}{2\sqrt{3}}$                                 \\
    $4$ & $0.730329$ & $0.7333$ & $0.741481$ & $\frac{1}{2} + \frac{1+\sqrt{3}}{8\sqrt{2}}$                        \\
    $5$ & $0.706013$ & $0.7082$ & $0.711803$ & $\frac{1}{2} + \frac{2+\sqrt{5}}{20}$                               \\
    $6$ & $0.688063$ & $0.6897$ & $0.686973$ & $\frac{1}{2} + \frac{1+\sqrt{3}+\sqrt{6}}{16\sqrt{3}}$              \\
    $7$ & $0.674113$ & $0.6754$ & $0.677458$ & $\frac{1}{2} + \frac{15+6\sqrt{5}+2\sqrt{13}+\sqrt{17}}{224}$       \\
    $8$ & $0.662868$ & $0.6638$ & $0.666270$ & $\frac{1}{2} + \frac{12+9\sqrt{3}+6\sqrt{5}+6\sqrt{7}+\sqrt{11}}{256\sqrt{2}}$ \\
    $9$ & $0.653553$ & $0.6544$ & $0.656893$ & $\frac{1}{2} + \frac{10\sqrt{3}+9\sqrt{11}+3\sqrt{19}}{384}$
  \end{tabular}
  \caption[Comparison of both lower bounds]{Comparison of \rac{n} QRACs with SR that use random and orthogonal measurements, respectively. For the first code we give the success probability according to the asymptotic expression (\ref{eq:LowerBound}) and a numerical value obtained by $10^6$ random samples. For the second code we give both the numerical and the exact value of the success probability according to equation (\ref{eq:CubicDistance}).}
  \label{tab:LowerBounds}
\end{table}

%....................................................%
\subsubsection{Lower bound by orthogonal measurements} \label{sect:Lower2}
%....................................................%

According to the upper bound obtained in Sect.~\ref{sect:UpperBound} the known \rac{2} and \rac{3} QRACs (see Sect.~\ref{sect:KnownQRACs}) are optimal. This suggests that orthogonal measurements can be used to construct good codes. Unfortunately this idea cannot be directly applied when $n > 3$, since in $\R^3$ there are only three mutually orthogonal directions. However, if we choose roughly one third of all measurements along each coordinate axis, we will get quite a lot of mutually orthogonal measurement pairs.

Let $\vc{v}_1 = (1,0,0)$, $\vc{v}_2 = (0,1,0)$, $\vc{v}_3 = (0,0,1)$, and $\forall i: \vc{v}_{i+3} \equiv \vc{v}_i$. According to equation (\ref{eq:pv}) in Sect.~\ref{sect:RandomWalk} the success probability of this \rac{n} QRAC with SR is related to the average distance (\ref{eq:dv}) traveled by a random walk. For our choice of measurement directions $\vc{v}_i$ the random walk takes place in a cubic lattice and consists of roughly $n/3$ steps along each coordinate axis. Thus we can simplify the equation (\ref{eq:dv}) for the average distance traveled to avoid having an exponential number of terms in it:
\begin{equation}
  d(\vc{v}_1, \dotsc, \vc{v}_n) =
  \frac{1}{2^n}
  \sum_{i=0}^{x}
  \sum_{j=0}^{y}
  \sum_{k=0}^{z}
  \binom{x}{i}
  \binom{y}{j}
  \binom{z}{k}
  \sqrt{(x - 2i)^2 + (y - 2j)^2 + (z - 2k)^2},
  \label{eq:CubicDistance}
\end{equation}
where $x + y + z = n$ and each of $x$, $y$, $z$ is roughly $n/3$. The corresponding success probability can be obtained by plugging this expression in equation (\ref{eq:pv}).

This lower bound is better than the one obtained in the previous section using random measurements and it also requires less shared randomness. The difference of both lower bounds is shown in Fig.~\ref{fig:OrthBound}. The periodic pattern of length $6$ in this picture can be explained as follows. When $n$ is a multiple of $3$, the same number of steps of a random walk is performed along each coordinate axis (this explains the factor $3$). To explain the factor $2$, let us consider a random walk on a line, i.e., one of the three coordinate axis. The distinction between odd an even number of steps of such a walk is that the probability distribution after an even number of steps is peaked at the origin, but this peak has no contribution whatsoever to the average distance traveled. This intuition suggests that it should be especially hard to beat this lower bound when $n$ is of the form $6k+3$.

\image{0.9}{PlotQuantum2}{The difference between two lower bounds for QRACs with SR}{The \emph{difference} of both lower bounds for QRACs with SR. Black squares correspond to the bound obtained using measurements along coordinate axes and the horizontal line corresponds to the asymptotic bound (\ref{eq:LowerBound}) using random measurements (see Sects.~\ref{sect:Lower1} and \ref{sect:Lower2}, respectively). The first bound is better, except for $n=6$ (notice a periodic pattern of length $6$).}{fig:OrthBound}

%%%%%%%%%%%%%%%%%%%%%%%%%%%%%%%%%%%%%%%%
\section{Constructions of QRACs with SR} \label{sect:Constructions}
%%%%%%%%%%%%%%%%%%%%%%%%%%%%%%%%%%%%%%%%

It is plausible that one can do better than the lower bound obtained above, which used random measurements. In this section we will consider several constructions of quantum random access codes with shared randomness for some particular values of $n$. First, in Sect.~\ref{sect:Numerical} we will describe numerically obtained QRACs. Then, in Sect.~\ref{sect:Symmetric} we will construct new QRACs with high degree of symmetry. In Sect.~\ref{sect:Discussion} we will compare both kinds of codes and draw some conclusions.

%----------------------------%
\subsection{Numerical results} \label{sect:Numerical}
%----------------------------%

\begin{table}[!hb]
  \centering
  \begin{tabular}{r|c|c}
    $n$ & Section                & Probability  \\
    \hline
    $2$ & \ref{sect:QRAC2,QRAC3} & $\p{2}$        \\
    $3$ & \ref{sect:QRAC2,QRAC3} & $\p{3}$        \\
    $4$ & \ref{sect:QRAC4}       & $\p{4}$        \\
    $5$ & \ref{sect:QRAC5}       & $\p{5}$        \\
    $6$ & \ref{sect:QRAC6}       & $\p{6}$        \\
    $7$ &                        & $0.678638$ \\ % 0.6786376322
    $8$ &                        & $0.666633$ \\ % 0.6666334980
    $9$ & \ref{sect:QRAC9}       & $\p{9}$        \\
   $10$ &                        & $0.648200$ \\ % 0.6482001458
   $11$ &                        & $0.641051$ \\ % 0.6410507577
   $12$ &                        & $0.634871$    % 0.6348711533
  \end{tabular}
  \caption[The success probabilities of numerical \rac{n} QRACs]{The success probabilities of numerical \rac{n} QRACs.}
  \label{tab:NumericalProbabilities}
\end{table}

In this section we will discuss some particular \rac{n} QRACs with shared randomness for several small values of $n$. These codes were obtained using numerical optimization. The optimization must be performed only over all possible measurements, because in Sect.~\ref{sect:OptimalEncoding} we showed that the choice of measurements determines the optimal encoding in a simple way. Each measurement is specified by a unit vector $\vc{v}_i\in\R^3$. For \rac{n} QRAC there are $n$ such vectors and one needs two angles to specify each of them. Without loss of generality we can assume that $\vc{v}_1=(0,0,1)$ due to the rotational symmetry of the Bloch sphere. Thus only $2(n-1)$ real parameters are required to specify all $\vc{v}_i$ and therefore an \rac{n} QRAC. To find the best configuration of measurements $\vc{v}_i$, one needs to maximize $\Sv$ given by (\ref{eq:Sv}). According to (\ref{eq:ProbabilitySn}) the success probability of the corresponding QRAC is given by
\begin{equation}
  p_\vc{v} = \frac{1}{2} \left(1 + \frac{\Sv}{2^n \cdot n}\right).
  \label{eq:SuccessProbability}
\end{equation}
This is not a convex optimization problem, since the feasible set (given by $\norm{\vc{v}_i}=1$ for all $1 \leq i \leq n$) is not convex. Note that it is not convex even if we relax equalities $\norm{\vc{v}_i}=1$ to inequalities $\norm{\vc{v}_i} \leq 1$. We used the \textit{Mathematica}'s general-purpose built-in function \texttt{NMaximize} to solve this problem.

Once the measurements $\vc{v}_i$ are found, one can easily obtain the Bloch vector $\vc{r}_x$ of the qubit state that must be used to optimally encode the string $x$. We showed (see Sect.~\ref{sect:OptimalEncoding}) that $\vc{r}_x$ is a unit vector in direction $\vc{v}_x$, where $\vc{v}_x$ is given by (\ref{eq:vx}). For almost all QRACs that we have found using numerical optimization, the points $\vc{r}_x$ form a symmetric pattern on the surface of the Bloch ball. Thus we were able to guess the exact values of $\vc{r}_x$ and $\vc{v}_i$. However, as in any numerical optimization, optimality of the resulting codes is not guaranteed.

In order to make the resulting codes more understandable, we depict them in three dimensions using the following conventions:
\begin{itemize}
  \item each \emph{red point} encodes the string indicated,
  \item each \emph{blue point} defines the axis of the measurement when the indicated bit is to be output, and
  \item for each measurement there is a corresponding (unlabeled) \emph{blue great circle} containing states yielding $0$ and $1$ equiprobably.
\end{itemize}
More precisely, the blue point with label $i$ defines the basis vector $\ket{\psi^i_0}$ corresponding to the outcome $0$ of the \mbox{$i$th} measurement (see Sect.~\ref{sect:TypesOfQRACs}). Note that the blue circles and blue points come in pairs---the vector $\ket{\psi^i_0}$ defined by the blue point is orthogonal to the corresponding circle. As a cautionary note, occasionally, the blue point for one measurement falls on the great circle of a different measurement (for example, blue points $1$ and $2$ in Fig.~\ref{fig:QRAC2} lie on one another's corresponding circles). If there are too many red points, we omit the string labels for clarity.

Usually the codes have some symmetry; for example, the encoding points may be the vertices of a polyhedron. In such cases we show the corresponding polyhedron instead of the Bloch sphere. We do not discuss \rac{7} and \rac{8} QRACs since the best numerical results have almost no discernible symmetry. We also do not discuss the numerical results for $n \geq 10$ (see Table~\ref{tab:NumericalProbabilities} for success probabilities). The numerically obtained \rac{10} code is symmetric and resembles \rac{6} code discussed in Sect.~\ref{sect:QRAC6}, but the \rac{11} and \rac{12} codes again have almost no discernible symmetry. Success probabilities of all numerical \rac{n} QRACs with SR are summarized in Table~\ref{tab:NumericalProbabilities} and Fig.~\ref{fig:NumericalResults}.

\image{0.9}{PlotQuantum1}{Success probabilities of numerical \rac{n} QRACs with SR}{Success probabilities $p(n)$ of numerical \rac{n} QRACs with SR from Table~\ref{tab:NumericalProbabilities}. The upper bound $\frac{1}{2} + \frac{1}{2 \sqrt{n}}$ and the lower bound $\frac{1}{2} + \sqrt{\frac{2}{3 \pi n}}$ are indicated by dashed lines (see Sects.~\ref{sect:GeneralUpperBound} and \ref{sect:Lower1}, respectively).}{fig:NumericalResults}

%...................................................%
\subsubsection{The \rac{2} and \rac{3} QRACs with SR} \label{sect:QRAC2,QRAC3}
%...................................................%

\doubleimage
{Exact2}{The \rac{2} QRAC with SR}{The \rac{2} QRAC with SR.\protect\footnote{For those who are using a black-and-white printout: this is how \textbf{\textcolor{red}{red}} and \textbf{\textcolor{blue}{blue}} looks like.}}{fig:QRAC2}
{Exact3}{The \rac{3} QRAC with SR}{The \rac{3} QRAC with SR.}{fig:QRAC3}

We used numerical optimization as described above to find \rac{2} and \rac{3} QRACs with shared randomness and obtained the optimal codes discussed in Sects.~\ref{sect:KnownQRAC2} and \ref{sect:KnownQRAC3}.

The codes are shown in Fig.~\ref{fig:QRAC2} and \ref{fig:QRAC3}, respectively. In the first case the encoding points are the vertices of a square and the success probability is
\begin{equation}
  p = \frac{1}{2} + \frac{1}{2\sqrt{2}} \approx \p{2}.
\end{equation}
In the second case they are the vertices of a cube. The success probability is
\begin{equation}
  p = \frac{1}{2} + \frac{1}{2\sqrt{3}} \approx \p{3}.
\end{equation}

%......................................%
\subsubsection{The \rac{4} QRAC with SR} \label{sect:QRAC4}
%......................................%

\image{0.6}{Exact4}{The \rac{4} QRAC with SR}{The \rac{4} QRAC with SR.}{fig:QRAC4}

In Sect.~\ref{sect:noQRAC4} we discussed the impossibility of a \rac{4} QRAC when Alice and Bob are not allowed to cooperate. However, a \rac{4} QRAC can be obtained if they have shared randomness. The particular \rac{4} QRAC with SR discussed here was found by a numerical optimization. It is a hybrid of the \rac{2} and \rac{3} codes discussed in Sects.~\ref{sect:KnownQRAC2} and \ref{sect:KnownQRAC3}, respectively.

The measurements are performed in the bases $(M_1, M_2, M_3, M_3)$, where $M_1$, $M_2$, and $M_3$ are the same as in the \rac{3} case (note that the last two bits are measured in the same basis, namely $M_3$). These bases are given by (\ref{eq:M1}), (\ref{eq:M2}), and (\ref{eq:M3}), respectively. The points that correspond to an optimal encoding for these bases are the vertices of a regular square $\frac{1}{\sqrt{2}}(\pm 1,\pm 1,0)$ in the $xy$ plane and a cube $\frac{1}{\sqrt{6}}(\pm 1,\pm 1,\pm 2)$ that is stretched in the $z$ direction (see the Bloch sphere in Fig.~\ref{fig:QRAC4}). The Bloch vector for the string $x = x_1 x_2 x_3 x_4$ is explicitly given by
\begin{equation}
  \vc{r}(x) = \frac{1}{\sqrt{6}}
  \mx{
    (-1)^{x_1}\bigl(1-(1-\sqrt{3})\abs{x_3-x_4}\bigr) \\
    (-1)^{x_2}\bigl(1-(1-\sqrt{3})\abs{x_3-x_4}\bigr) \\
    (-1)^{x_3}+(-1)^{x_4}
  }.
\end{equation}

The encoding function can be described as follows:
\begin{itemize}
  \item if $x_3=x_4$, use the usual \rac{3} QRAC with an emphasis on $x_3$ to encode the string $x_1 x_2 x_3$,
  \item if $x_3 \neq x_4$---encode only $x_1 x_2$ using the usual \rac{2} QRAC.
\end{itemize}
In the \rac{3} scheme the probability to recover $x_3$ must be increased by stretching the cube along the $z$ axis, because $x_3$ equals $x_4$ and therefore it is of greater value than $x_1$ or $x_2$.

This \rac{4} QRAC can also be seen as a combination of two \rac{3} QRACs: the string $x_1 x_2 x_3$ is encoded into the vertices of a smaller cube inscribed in a half of the Bloch ball (the vertices that lie within the sphere are projected to its surface). The last bit $x_4$ indicates in which half the smaller cube lies (the upper and lower hemispheres correspond to $x_4 = 0$ and $1$, respectively).

The qubit state is explicitly given by $E(x_1,x_2,x_3,x_4)=\alpha\ket{0}+\beta\ket{1}$, where
\begin{equation}
  \left\{
  \begin{aligned}
    \alpha &= \sqrt{\frac{1}{2}+\frac{(-1)^{x_3}+(-1)^{x_4}}{2\sqrt{6}}}, \\
    \beta  &= \frac{(-1)^{x_1}+i(-1)^{x_2}}
                   {\sqrt{4\bigl(3-2\abs{x_3-x_4}\bigr)+2\sqrt{6}\bigl((-1)^{x_3}+(-1)^{x_4}\bigr)}}.
%ALTERNATIVE:
%    \beta  &= \sqrt[4]{\frac{5}{12}-\frac{(-1)^{x_3}+(-1)^{x_4}}{2\sqrt{6}}-\frac{\abs{x_3-x_4}}{6}} \cdot
%              \frac{(-1)^{x_1}+i(-1)^{x_2}}{\sqrt{2}}.
  \end{aligned}
  \right.
\end{equation}
The $16$ values for $\beta$ are exactly the sixteen roots of the polynomial (recall Sect.~\ref{sect:UnitDisk})
\begin{equation}
  2304 \beta^{16} + 3072 \beta^{12} + 1120 \beta^8 + 128 \beta^ 4 + 1.
\end{equation}

If a shared random string is not available, the worst case success probability of this QRAC is $\frac{1}{2}$. However, if shared randomness is available, input randomization (as in Lemma~\ref{lem:Randomization}) can be used and we will get the same success probability for all inputs, namely
\begin{equation}
  p = \frac{1}{2} + \frac{1+\sqrt{3}}{8\sqrt{2}} \approx \p{4}.
\end{equation}
We do not know if this \rac{4} QRAC with SR is optimal.

%......................................%
\subsubsection{The \rac{5} QRAC with SR} \label{sect:QRAC5}
%......................................%

\image{0.5}{Exact5}{The \rac{5} QRAC with SR}{The \rac{5} QRAC with SR.}{fig:QRAC5}

To obtain a \rac{5} QRAC, we take the bases $M_1$, $M_2$, and $M_3$, given by (\ref{eq:M1}), (\ref{eq:M2}), and (\ref{eq:M3}), respectively, and also
\begin{align}
  M_4&=\set{\frac{1}{2}\mx{ \sqrt{2}\\ i+1},
            \frac{1}{2}\mx{-\sqrt{2}\\ i+1}}, \label{eq:M4} \\
  M_5&=\set{\frac{1}{2}\mx{ \sqrt{2}\\ i-1},
            \frac{1}{2}\mx{-\sqrt{2}\\ i-1}}. \label{eq:M5}
\end{align}
The Bloch vectors $\vc{v}_3=(0,0,\pm1)$ for the basis $M_3$ are along the $z$ axis, but the Bloch vectors of the other four bases form a regular octagon in the $xy$ plane (shown in Fig.~\ref{fig:QRAC5}): $\vc{v}_1=(\pm1,0,0)$, $\vc{v}_2=(0,\pm1,0)$, $\vc{v}_4=\pm\frac{1}{\sqrt{2}}(1,1,0)$, $\vc{v}_5=\pm\frac{1}{\sqrt{2}}(-1,1,0)$. The Bloch vector encoding the string $x = x_1 x_2 x_3 x_4 x_5$ is
\begin{equation}
  \vc{r}(x) = \frac{1}{\sqrt{10+s(x)4\sqrt{2}}}
  \mx{
    \sqrt{2}(-1)^{x_1}+(-1)^{x_4}-(-1)^{x_5} \\
    \sqrt{2}(-1)^{x_2}+(-1)^{x_4}+(-1)^{x_5} \\
    \sqrt{2}(-1)^{x_3}
  },
\end{equation}
where $s(x) \in \set{-1,1}$ and is given by
\begin{equation}
  s(x) = \frac{(-1)^{x_1}+(-1)^{x_2}}{2} (-1)^{x_4} - \frac{(-1)^{x_1}-(-1)^{x_2}}{2} (-1)^{x_5}.
\end{equation}

The great circles with equiprobable outcomes of the measurements partition the Bloch sphere into $16$ equal spherical triangles. There are two strings encoded into each triangle. The idea for how to locate the correct point for the given string $x$ is as follows. Observe that the strings with $x_3=0$ and $x_3=1$ are encoded into the upper and lower hemisphere, respectively (this means that for all strings the probability that the measurement $M_3$ gives the correct value of $x_3$ is greater than $\frac{1}{2}$). Next observe that half of all strings have $s(x)=1$, but the other half have $s(x)=-1$ (in fact, the two strings in the same triangle have distinct values of $s$).

Let us first consider the case $s(x)=1$. We call such string \emph{compatible} with the measurements, because it can be encoded in such a way that every measurement gives the correct value of the corresponding bit with probability greater than $\frac{1}{2}$. For the \mbox{$i$th} bit of $x$ we can define the ``preferable region'' on the Bloch sphere as the hemisphere where $M_i$ recovers $x_i$ with probability greater than $\frac{1}{2}$. The intersection of these five regions is one sixteenth of the Bloch sphere---the triangle where $x$ must be encoded. The point with the smallest absolute value of the $z$ coordinate in this triangle must be chosen (it has smaller probability than other points in the triangle to recover $x_3$ correctly, but the probabilities for the other four bits are larger).

\image{0.3}{Slices}{The ``preferable regions'' of the measurement $M_2$}{The ``preferable regions'' of the measurement $M_2$ (only the upper hemisphere is shown, the other half is symmetric). For each of the measurements the direction of the Bloch vector $\ket{\psi_0}$ is indicated by the corresponding number. The white triangles correspond to $x_2=0$, but the gray ones to $x_2=1$.}{fig:Slices}

If $s(x)=-1$, the string $x$ is \emph{incompatible} with the measurements, because the intersection of the ``preferable regions'' is empty. Thus, no matter where the string is encoded, at least one bit will differ from the most probable outcome of the corresponding measurement. We can take this into account and modify the definition of the ``preferable region'' for the \mbox{$i$th} bit ($i \neq 3$). It is a union of eight triangles: four triangles where the most probable outcome of $M_i$ equals $x_i$, and four triangles where it does not equal $x_i$ (in either case the triangles with maximal probability of correct outcome of $M_i$ must be taken). For example, the ``preferable regions'' for $x_2$ are shown in Fig.~\ref{fig:Slices}. The regions for $x_3$ remain the same as in the previous case. The intersection of all five regions for the given string $x$ is the triangle where the string must be encoded. The point in the triangle with the largest absolute value of the $z$ coordinate must be chosen. As a result, three of the measurements will give the correct value of the corresponding bit of the string $x$ with probability greater than $\frac{1}{2}$.

The corresponding qubit state is given by $E(x_1,x_2,x_3,x_4,x_5)=\alpha\ket{0}+\beta\ket{1}$ with coefficients $\alpha$ and $\beta$ defined as follows:
\begin{equation}
  \left\{
  \begin{aligned}
    \alpha &= \sqrt{\frac{1}{2} + \frac{(-1)^{x_3}}{2 \sqrt{5 + s(x) 2 \sqrt{2}}}}, \\
    \beta  &= \frac{(-1)^{x_1} + i (-1)^{x_2} + \frac{i + 1}{\sqrt{2}} (-1)^{x_4} + \frac{i - 1}{\sqrt{2}} (-1)^{x_5}}
                   {\sqrt{10 + s(x) 4 \sqrt{2} + 2 (-1)^{x_3} \sqrt{5 + s(x) 2 \sqrt{2}}}}.
  \end{aligned}
  \right.
\end{equation}
The coefficients $\beta$ are the roots of the polynomial
\begin{equation}
  1336336 \beta^{32} + 961792 \beta^{24} + 151432 \beta^{16} + 1600 \beta^8 + 1.
\end{equation}

Again, using input randomization we obtain the same success probability for any input, namely
\begin{equation}
  p = \frac{1}{2} + \frac{1}{20} \sqrt{2(5+\sqrt{17})} \approx \p{5}.
\end{equation}

%......................................%
\subsubsection{The \rac{6} QRAC with SR} \label{sect:QRAC6}
%......................................%

\doubleimage
{Cube}{The measurements for the \rac{6} QRAC}{The measurements for the \rac{6} QRAC shown on the right.}{fig:Cube}
{Exact6}{The \rac{6} QRAC with SR}{The \rac{6} QRAC with SR.}{fig:QRAC6}

The Bloch vectors corresponding to the $6$ measurements are as follows:
\begin{equation}
  \begin{aligned}
    \vc{v}_1 &= \pm ( 0,+1,+1) / \sqrt{2}, \\
    \vc{v}_2 &= \pm ( 0,-1,+1) / \sqrt{2}, \\
    \vc{v}_3 &= \pm (+1, 0,+1) / \sqrt{2}, \\
    \vc{v}_4 &= \pm (+1, 0,-1) / \sqrt{2}, \\
    \vc{v}_5 &= \pm (+1,+1, 0) / \sqrt{2}, \\
    \vc{v}_6 &= \pm (-1,+1, 0) / \sqrt{2}.
  \end{aligned}
  \label{eq:v6}
\end{equation}
They correspond to the $12$ vertices of the \emph{cuboctahedron} (or the midpoints of the $12$ edges of the cube) and are shown in Fig.~\ref{fig:Cube}. The great circles orthogonal to these vectors form the projection of the edges of a normalized\footnote{The vertices of the \emph{tetrakis hexahedron} are not all at the same distance from the origin (the ones forming an octahedron are $2/\sqrt{3}$ times closer than those forming a cube). So the polyhedron has to be \emph{normalized} to fit inside the Bloch sphere (the vectors pointing to the vertices have to be rescaled to have a unit norm).} \emph{tetrakis hexahedron} and partition the Bloch sphere into $24$ parts (see Fig.~\ref{fig:QRAC6}). Each of these parts contains one vertex of a \emph{truncated octahedron}---the dual of tetrakis hexahedron. It is inscribed in the Bloch sphere shown in Fig.~\ref{fig:QRAC6}.

The measurement bases corresponding to $\vc{v}_i$ can be found using (\ref{eq:AlphaBeta}):
\begin{equation}
  \begin{aligned}
    M_1 &= \left\{ \frac{1}{2} \mx{ \sqrt{2+\sqrt{2}} \\  i \sqrt{2-\sqrt{2}} },
                   \frac{1}{2} \mx{ \sqrt{2-\sqrt{2}} \\ -i \sqrt{2+\sqrt{2}} } \right\}, \\
    M_2 &= \left\{ \frac{1}{2} \mx{ \sqrt{2+\sqrt{2}} \\ -i \sqrt{2-\sqrt{2}} },
                   \frac{1}{2} \mx{ \sqrt{2-\sqrt{2}} \\  i \sqrt{2+\sqrt{2}} } \right\}, \\
    M_3 &= \left\{ \frac{1}{2} \mx{ \sqrt{2+\sqrt{2}} \\    \sqrt{2-\sqrt{2}} },
                   \frac{1}{2} \mx{ \sqrt{2-\sqrt{2}} \\ -  \sqrt{2+\sqrt{2}} } \right\}, \\
    M_4 &= \left\{ \frac{1}{2} \mx{ \sqrt{2-\sqrt{2}} \\    \sqrt{2+\sqrt{2}} },
                   \frac{1}{2} \mx{ \sqrt{2+\sqrt{2}} \\ -  \sqrt{2-\sqrt{2}} } \right\}, \\
    M_5 &= \left\{ \frac{1}{2} \mx{ \sqrt{2} \\  i+1 },
                   \frac{1}{2} \mx{ \sqrt{2} \\ -i-1 } \right\}, \\
    M_6 &= \left\{ \frac{1}{2} \mx{ \sqrt{2} \\  i-1 },
                   \frac{1}{2} \mx{ \sqrt{2} \\ -i+1 } \right\}.
  \end{aligned}
\end{equation}
Note that $M_5$ and $M_6$ are the same as (\ref{eq:M4}) and (\ref{eq:M5}) for the \rac{5} QRAC described in the previous section. Another way to describe these $6$ bases is to consider the $\beta$ coefficients for the $12$ vectors that form them. It turns out that these coefficients are exactly the roots of the polynomial
\begin{equation}
  256 \beta^{12} - 128 \beta^8 - 44 \beta^4 + 1.
\end{equation}

Let us consider how to determine the point where a given string should be encoded. According to (\ref{eq:vx}) we have to find the sum of the vectors $\vc{v}_i$ defined in (\ref{eq:v6}), each taken with either a plus or a minus sign. These vectors correspond to six pairs of opposite edges of a cube and the signs determine which edge from each pair we are taking (see Fig.~\ref{fig:Cube}). There are only three distinct ways of doing this (see Fig.~\ref{fig:Cubes}). Regardless of which way it is, for each of the chosen edges there is exactly one other that shares a common face and is parallel to it. Thus we can partition the chosen edges into three pairs (in Fig.~\ref{fig:Cubes} such pairs are joined with a thick blue line). The sum of the vectors $\vc{v}_i$ for edges in a pair is always parallel to one of the axes and its direction is indicated with an arrow in Fig.~\ref{fig:Cubes}. From these arrows one can see where the encoding point should lie.

Now let us classify all $2^6 = 64$ strings of length $6$ into $3$ types according to the location of the encoding point on the Bloch sphere. Each type of string is encoded into a vertex of a specific polyhedron (see Fig.~\ref{fig:Polyhedra6}). These polyhedra are the \emph{cube}, the \emph{truncated octahedron}, and the \emph{octahedron} and the number of strings of each type are $16$, $24$, and $24$, respectively. Let us consider them case by case:

\image{1.0}{Cubes}{Three distinct ways of choosing one edge from each pair of opposite edges of a cube}{Three distinct ways of choosing one edge from each pair of opposite edges of a cube. The chosen edges are marked with blue points. Points lying on opposite edges of the same face are connected and the direction of the sum of the corresponding vectors is indicated with an arrow. The corresponding encoding point is shown in red. The red points obtained from all possible choices of the same kind are the vertices of a cube, a truncated octahedron, and an octahedron, respectively (see Fig.~\ref{fig:Polyhedra6}).}{fig:Cubes}

\begin{table}[!hb]
  \centering
  \begin{tabular}{c|c}
    Truncated  & \multirow{2}{*}{Octahedron} \\
    octahedron & \\
    \hline
      $**1110$ & $**1101$ \\
      $**0001$ & $**0010$ \\
      $10**11$ & $01**11$ \\
      $01**00$ & $10**00$ \\
      $1110**$ & $1101**$ \\
      $0001**$ & $0010**$ \\
  \end{tabular}
  \caption[Patterns of strings corresponding to the vertices of truncated octahedron and octahedron]{Patterns of strings corresponding to the vertices of truncated octahedron and octahedron (``$*$'' stands for any value).}
  \label{tab:Patterns}
\end{table}

\begin{itemize}
  \item The \emph{cube} has $8$ vertices:
    \begin{equation}
      \frac{1}{\sqrt{3}}(\pm 1, \pm 1, \pm 1)
    \end{equation}
and there are $2$ strings encoded into each vertex. These $16$ strings are exactly those $x_1 x_2 \ldots x_6 \in \set{0,1}^6$ that satisfy
    \begin{equation}
      \abs{x_1 - x_2} + \abs{x_3 - x_4} + \abs{x_5 - x_6} \in \set{0,3}.
    \end{equation}
This condition ensures that the three arrows in Fig.~\ref{fig:Cubes} are orthogonal.
  \item The \emph{truncated octahedron} has $24$ vertices. Their coordinates are obtained by all permutations of the components of
    \begin{equation}
      \frac{1}{\sqrt{5}}(0, \pm 1, \pm 2).
    \end{equation}
There is just $1$ string encoded into each vertex. In this case there will be two pairs of chosen edges that belong to the same face (note the ``cross'' in the Fig.~\ref{fig:Cubes} formed by pairs whose arrows are pointing outwards of the page). The third pair (with the arrow pointing up) can be rotated around this face to any of the four possible positions. This corresponds to fixing four bits of the string and choosing the remaining two bits in an arbitrary way. Since the ``cross'' can be on any of the six faces of the cube, one can easily describe all $24$ strings of this type (they are listed in the first column of Table~\ref{tab:Patterns}).
  \item The \emph{octahedron} has $6$ vertices:
    \begin{equation}
      (\pm 1, 0, 0) \cup
      (0, \pm 1, 0) \cup
      (0, 0, \pm 1)
    \end{equation}
and there are $4$ strings encoded into each vertex. In this case two arrows in Fig.~\ref{fig:Cubes} are pointing to opposite directions (up and down). If we fix these arrows, we can rotate the third one (pointing outwards) in any of four directions. Hence we can describe all $24$ strings of this type in a similar way (see the second column of Table~\ref{tab:Patterns}).
\end{itemize}

\image{1.0}{Polyhedra6}{Polyhedra corresponding to three different types of strings for \rac{6} QRAC with SR}{Three polyhedra (cube, truncated octahedron, and octahedron) corresponding to three different types of strings for \rac{6} QRAC with SR. The red points in Fig.~\ref{fig:QRAC6} are obtained by superimposing these three polyhedra.}{fig:Polyhedra6}

The coefficients $\beta$ of the encoding states are the $64$ roots of the polynomial
\begin{gather}
  \beta^4 (\beta-1)^4 (4 \beta^4-1)^4 (36 \beta^8+24 \beta^4+1)^2 \nonumber \\
  (25 \beta^8-15 \beta^4+1) (400 \beta^8-360 \beta^4+1) (400 \beta^8+56 \beta^4+25).
\end{gather}

The obtained success probability using input randomization is
\begin{equation}
  p = \frac{1}{2} + \frac{2+\sqrt{3}+\sqrt{15}}{16\sqrt{6}} \approx \p{6}.
\end{equation}

%......................................%
\subsubsection{The \rac{9} QRAC with SR} \label{sect:QRAC9}
%......................................%

\image{0.4}{Exact9}{The \rac{9} QRAC with SR}{The \rac{9} QRAC with SR.}{fig:QRAC9}

This QRAC is a combination of three \rac{3} QRACs described in Sect.~\ref{sect:KnownQRAC3}. It has three measurements along each axis:
\begin{equation}
  \begin{aligned}
    \vc{v}_1 &= \vc{v}_4 = \vc{v}_7 = \pm (1, 0, 0), \\
    \vc{v}_2 &= \vc{v}_5 = \vc{v}_8 = \pm (0, 1, 0), \\
    \vc{v}_3 &= \vc{v}_6 = \vc{v}_9 = \pm (0, 0, 1).
  \end{aligned}
\end{equation}
The measurement bases  $M_1$, $M_2$, and $M_3$ corresponding to the Bloch vectors $\vc{v}_1$, $\vc{v}_2$, and $\vc{v}_3$ are given by (\ref{eq:M1}), (\ref{eq:M2}), and (\ref{eq:M3}), respectively.

The encoding points can be characterized as a $4 \times 4 \times 4$ \emph{cubic lattice} formed by vectors (\ref{eq:vx}) projected on the surface of the Bloch ball. Note that this lattice consists of vertices of $8$ equal cubes each lying in a different octant. Then the $7$ points inside of each spherical triangle in Fig.~\ref{fig:QRAC9} are the projection of the vertices of the corresponding cube.\footnote{We get $7$ points instead of $8$ since the projections of two diagonally opposite vertices coincide.}

All $2^9=512$ strings can be classified into $3$ types. First consider a string $a_1 a_2 a_3 \in \set{0,1}^3$ and define
\begin{equation}
  s(a_1, a_2, a_3) =
    \frac{
      \abs{a_1 - a_2} +
      \abs{a_2 - a_3} +
      \abs{a_3 - a_1}
    }{2}.
\end{equation}
Notice that $s(a_1, a_2, a_3) \in \set{0,1}$. Now for $x = x_1 x_2 \ldots x_9 \in \set{0,1}^9$ define
\begin{equation}
  t(x) =
    s(x_1, x_4, x_7) +
    s(x_2, x_5, x_8) +
    s(x_3, x_6, x_9).
\end{equation}
Then the type of the string $x$ can be determined as follows:
\begin{equation}
  t(x) =
  \begin{cases}
    0, 3 & \text{cube}, \\
    1    & \text{truncated cube}, \\
    2    & \text{small rhombicuboctahedron}.
  \end{cases}
\end{equation}
These types are named after polyhedra, since each type of string is encoded into the vertices of the corresponding polyhedron (see Fig.~\ref{fig:Polyhedra9}):
\begin{itemize}
  \item The \emph{cube} has $8$ vertices and there are $28$ strings encoded into each vertex. These vertices are:
    \begin{equation}
      \frac{1}{\sqrt{3}}(\pm 1, \pm 1, \pm 1).
    \end{equation}
  \item The deformed\footnote{The edges of the \emph{truncated cube} are of the same length. In our case the eges forming triangles are $\sqrt{2}$ times longer than the other edges.} \emph{truncated cube} has $24$ vertices and there are $3$ strings encoded into each vertex. These vertices are:
    \begin{equation}
      \frac{1}{\sqrt{19}}(\pm 1, \pm 3, \pm 3) \cup
      \frac{1}{\sqrt{19}}(\pm 3, \pm 1, \pm 3) \cup
      \frac{1}{\sqrt{19}}(\pm 3, \pm 3, \pm 1).
    \end{equation}
  \item The deformed\footnote{The edges of the \emph{small rhombicuboctahedron} are also of the same length, but in our case the edges forming triangles again are $\sqrt{2}$ times longer.} \emph{small rhombicuboctahedron} also has $24$ vertices and there are $9$ strings encoded into each vertex. These vertices are:
    \begin{equation}
      \frac{1}{\sqrt{11}}(\pm 3, \pm 1, \pm 1) \cup
      \frac{1}{\sqrt{11}}(\pm 1, \pm 3, \pm 1) \cup
      \frac{1}{\sqrt{11}}(\pm 1, \pm 1, \pm 3).
    \end{equation}
\end{itemize}

\image{1.0}{Polyhedra9}{Polyhedra corresponding to three different types of strings for \rac{9} QRAC with SR}{Three polyhedra (cube, small rhombicuboctahedron, and truncated cube) corresponding to three different types of strings for \rac{9} QRAC with SR. The red points in Fig.~\ref{fig:QRAC9} are obtained by superimposing these three polyhedra.}{fig:Polyhedra9}

The coefficients $\beta$ for the corresponding qubit states $\alpha\ket{0}+\beta\ket{1}$ are the roots of the following polynomial:
\begin{gather}
  (36 \beta^8 + 24 \beta^4 + 1)^{28} (1444 \beta^8 + 760 \beta^4 + 81)^3 (484 \beta^8 + 440 \beta^4 + 1)^9 \nonumber \\
  (52128400 \beta^{16} - 21509824 \beta^{12} + 26780424 \beta^8 - 372400 \beta^4 + 15625)^3 \nonumber \\
  (5856400 \beta^{16} - 1788864 \beta^{12} + 1232264 \beta^8 - 92400 \beta^4 + 15625)^9.
\end{gather}

Using input randomization we get success probability
\begin{equation}
  p = \frac{1}{2} + \frac{192+10\sqrt{3}+9\sqrt{11}+3\sqrt{19}}{384} \approx \p{9}.
\end{equation}

%----------------------------------%
\subsection{Symmetric constructions} \label{sect:Symmetric}
%----------------------------------%

%\begin{figure}
%  \centering
%  \begin{minipage}[t]{\textwidth}
%    \centering
%    \includegraphics[width=0.3\textwidth]{}
%    \includegraphics[width=0.3\textwidth]{}
%    \includegraphics[width=0.3\textwidth]{}
%  \end{minipage}
%  \begin{minipage}[t]{\textwidth}
%    \centering
%    \includegraphics[width=0.3\textwidth]{}
%    \includegraphics[width=0.3\textwidth]{}
%    \includegraphics[width=0.3\textwidth]{}
%  \end{minipage}
%\end{figure}

In Sect.~\ref{sect:Numerical} we have discussed in great detail \rac{n} quantum random access codes with shared randomness for some particular values of $n$. Since these codes were obtained using numerical optimization, there are still some questions left open. Most importantly, are the codes for $n \geq 4$ discussed in Sect.~\ref{sect:Numerical} optimal? If this is the case, do these codes (see Figs.~\ref{fig:QRAC2}, \ref{fig:QRAC3}, \ref{fig:QRAC4}, \ref{fig:QRAC5}, \ref{fig:QRAC6}, and \ref{fig:QRAC9}) have anything in common that makes them so good?

The purpose of this section is to shed some light on these two questions. We will explore the possibility that \emph{symmetry} is the property that makes QRACs with SR good. In Sect.~\ref{sect:GreatCircles} we will explore what symmetries the codes found by numerical optimization have and what other symmetries are possible. In several subsequent sections we will use these symmetries to construct new codes and compare them with the numerical ones (the success probabilities of the obtained codes are summarized in Table~\ref{tab:SymmetricProbabilities}). In Sect.~\ref{sect:Discussion} we will conclude that symmetric codes are \emph{not} necessarily optimal and speculate about what else could potentially be used to construct good QRACs.

\begin{table}[!hb]
  \centering
  \begin{tabular}{r|c|c}
    $n$ & Section                & Probability   \\
    \hline
    $4$ & \ref{sect:QRAC4[sym]}  & $\p{4[sym]}$  \\
    $6$ & \ref{sect:QRAC6[sym]}  & $\p{6[sym]}$  \\
    $9$ & \ref{sect:QRAC9[sym]}  & $\p{9[sym]}$  \\
   $15$ & \ref{sect:QRAC15[sym]} & $\p{15[sym]}$
  \end{tabular}
  \caption[The success probabilities of symmetric \rac{n} QRACs with SR]{The success probabilities of symmetric \rac{n} QRACs with SR. See Table~\ref{tab:ComparisonOfProbabilities} for the comparison with numerically obtained codes.}
  \label{tab:SymmetricProbabilities}
\end{table}

%.................................................%
\subsubsection{Symmetric great circle arrangements} \label{sect:GreatCircles}
%.................................................%

If we want to construct a QRAC with SR that has some sort of symmetry, we have to choose the directions of measurements in a symmetric way. In other words, we have to symmetrically arrange the great circles that are orthogonal to the measurement directions.

In this section we will discuss two ways that great circles can be arranged on a sphere in a symmetric way. These arrangements come from \emph{quasiregular polyhedra} and \emph{triangular symmetry groups}, respectively. The first kind of arrangement is not directly observed in numerically obtained examples, despite its high symmetry. However, the second one is observed in almost all numerically obtained codes. Since our approach is empirical, we will not justify when an arrangement is ``symmetric enough''\footnote{Several possible criteria are: (a) any great circle can be mapped to any other by a rotation from the symmetry group of the arrangement, (b) the sphere is cut into pieces that are regular polygons, (c) the sphere is cut into pieces of the same form. However, not all examples we will give satisfy these three conditions. In fact, each condition is violated by at least one of the examples we will consider.} to be of interest. We will use the term \emph{symmetric codes} to refer to the codes constructed below. This is just to distinguish them from numerically obtained codes in Sect.~\ref{sect:Numerical}, not because they satisfy some formal criterion of ``being symmetric''.

%. . . . . . . . . . . . . . . . . . . .%
\subsubsubsection{Quasiregular polyhedra}
%. . . . . . . . . . . . . . . . . . . .%

A (convex) \emph{quasiregular polyhedron} is the intersection of a \emph{Platonic solid} with its dual. There are only three possibilities:
\begin{align}
  \text{octahedron} &= \text{tetrahedron} \cap \text{tetrahedron}, \\
  \text{cuboctahedron} &= \text{cube} \cap \text{octahedron}, \\
  \text{icosidodecahedron} &= \text{icosahedron} \cap \text{dodecahedron}.
\end{align}
The tetrahedron is self-dual thus the octahedron, which is the intersection of two tetrahedrons, has slightly different properties than the other two polyhedra (e.g., its all faces are equal). For this reason octahedron may be considered as a degenerate quasiregular polyhedron or not be considered quasiregular at all since it is Platonic. Thus there are only two (non-degenerate) convex quasiregular polyhedra (see Fig.~\ref{fig:Quasiregular}).

\image{0.6}{Quasiregular}{Quasiregular polyhedra}{Quasiregular polyhedra: \emph{cuboctahedron} and \emph{icosidodecahedron}.}{fig:Quasiregular}

These polyhedra have several nice properties. For example, all their edges are equivalent and there are exactly two types of faces (both regular polygons), each completely surrounded by the faces of the other type. The most relevant property for us is that their edges form great circles. Since the arrangements of great circles formed by the edges of cuboctahedron and icosidodecahedron do not appear in the numerical codes, we will use them in Sects.~\ref{sect:QRAC4[sym]} and \ref{sect:QRAC6[sym]} to construct new (symmetric) \rac{4} and \rac{6} QRACs with SR, respectively.

%. . . . . . . . . . . . . . . . . . . . . .%
\subsubsubsection{Triangular symmetry groups}
%. . . . . . . . . . . . . . . . . . . . . .%

Consider a spherical triangle---it is enclosed by three planes that pass through its edges and the center of the sphere. Let us imagine that these planes are mirrors that reflect our triangle. These three reflections generate a \emph{reflection group} \cite{FiniteReflectionGroups,TriangularSymmetryGroups}. For some specific choices of the triangle this group is finite and the images of the triangle under different group operations do not overlap. Hence they form a \emph{tiling} of the sphere. This tiling can also be seen as several (most likely more than three) great circles cutting the sphere into equal triangles.

We can choose any of the triangles in the tiling and repeatedly reflect it along its edges so that it moves around one of its vertices. This means that the angles of the corners that meet at any vertex of the tiling must be equal. Moreover, we do not want the triangle to intersect with any of the mirrors, so only an even number of triangles can meet at a vertex.\footnote{Fore example, if we project the edges of an icosahedron on the sphere, we obtain arcs that form a tiling with five triangles meeting at each vertex. We cannot use these arcs as mirrors, since they do not form great circles (we cannot extend any of them to a great circle, without intersecting other triangles).}

\begin{figure}
  \centering
  \begin{minipage}[t]{0.8\textwidth}
    \centering
    \includegraphics[width=0.3\textwidth]{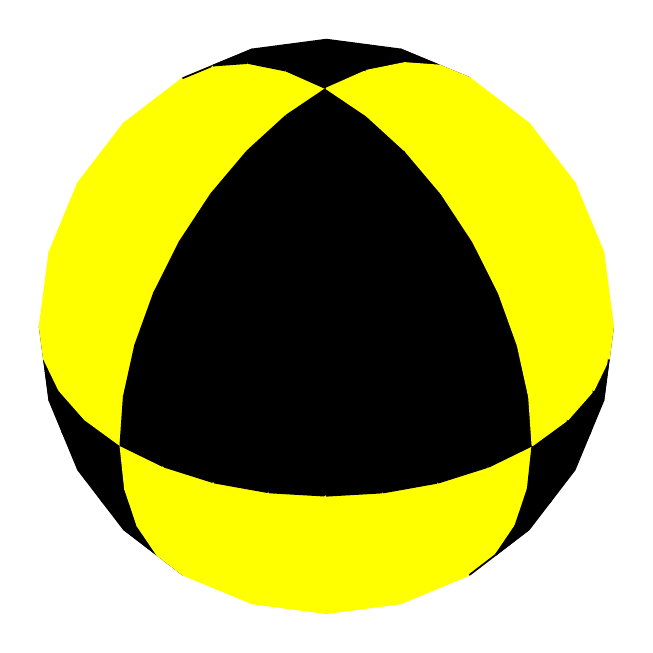}
    \includegraphics[width=0.3\textwidth]{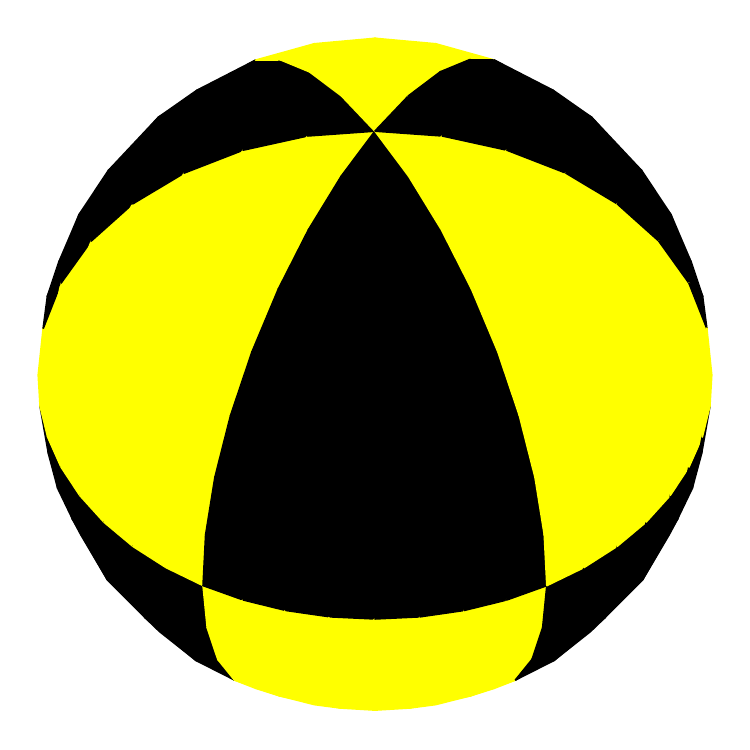}
    \includegraphics[width=0.3\textwidth]{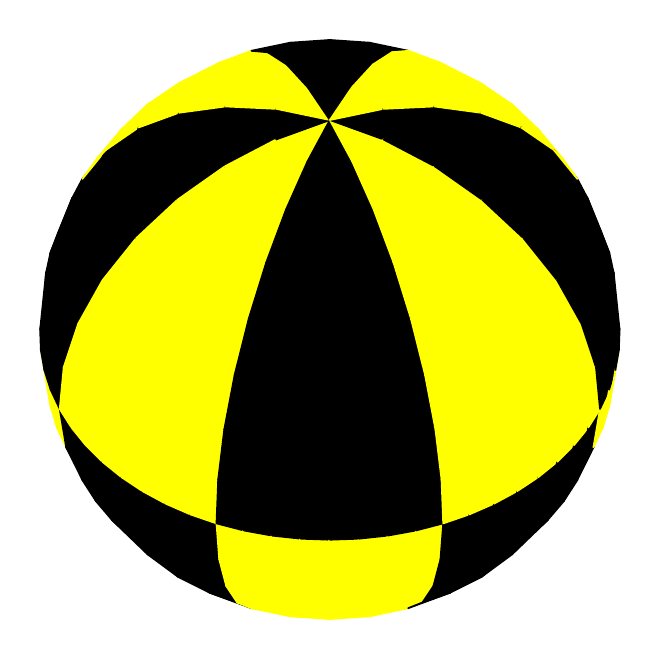}
  \end{minipage}
  \begin{minipage}[t]{0.8\textwidth}
    \centering
    \includegraphics[width=0.3\textwidth]{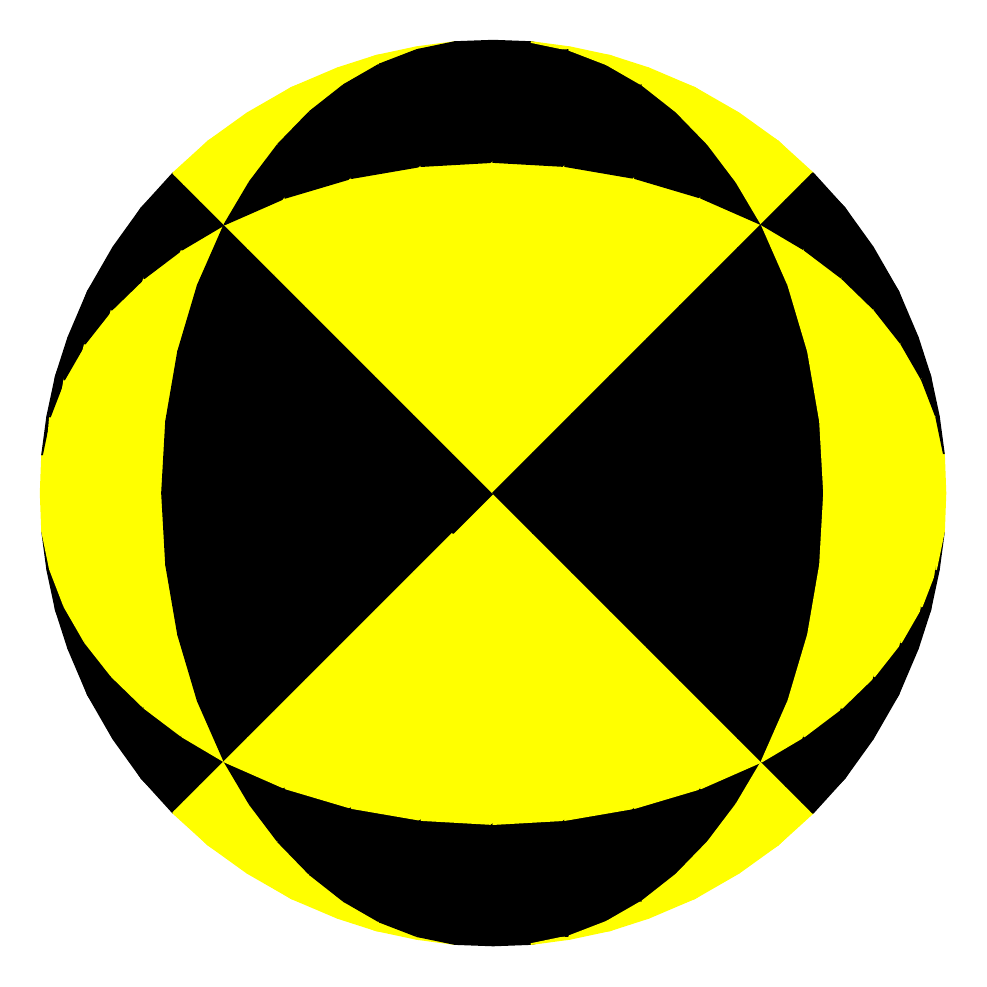}
    \includegraphics[width=0.3\textwidth]{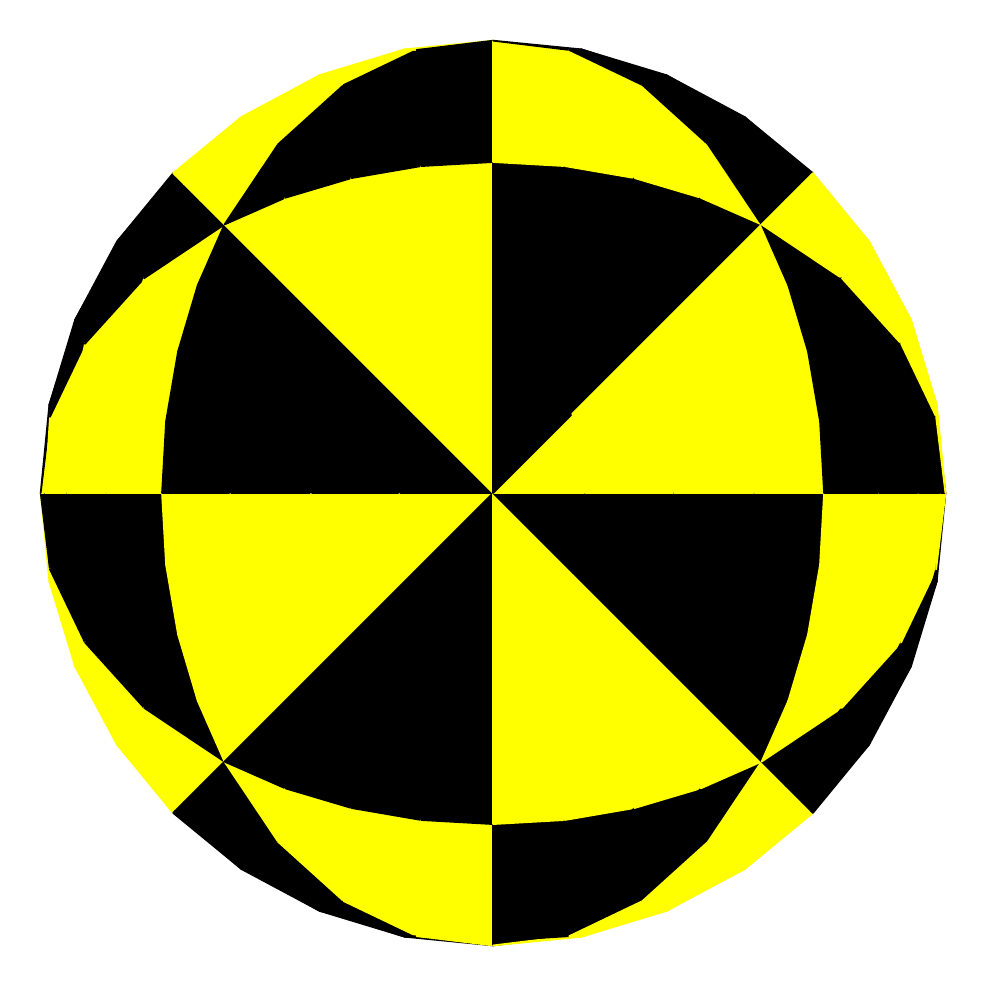}
    \includegraphics[width=0.3\textwidth]{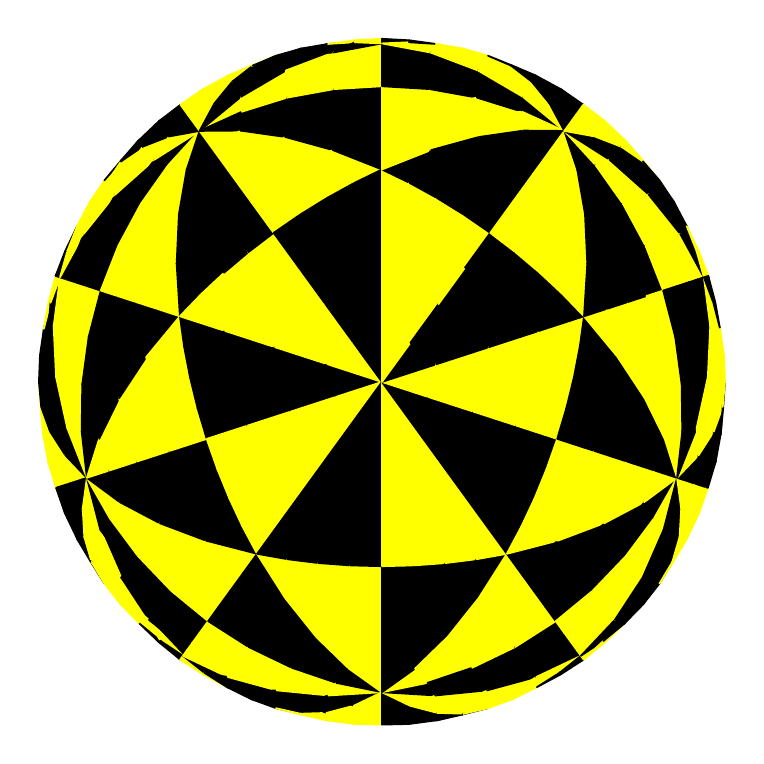}
  \end{minipage}
  \caption[Triangular symmetry groups]{Triangular symmetry groups. First row: $(2,2,2)$, $(2,2,3)$, $(2,2,4)$. Second row: $(2,3,3)$, $(2,3,4)$, $(2,3,5)$.}
  \label{fig:TriangularSymmetryGroups}
\end{figure}

Hence the angles of the spherical triangle must be $(\frac{\pi}{p}, \frac{\pi}{q}, \frac{\pi}{r})$ for some integers $p,q,r \geq 2$. The sum of the angles of a spherical triangle is at least $\pi$, so the numbers $p,q,r$ must satisfy
\begin{equation}
  \frac{1}{p} + \frac{1}{q} + \frac{1}{r} > 1.
\end{equation}
If $p \leq q \leq r$, the only solutions are: $(2,2,k)$ for any $k \geq 2$, $(2,3,3)$, $(2,3,4)$, and $(2,3,5)$. The  tilings corresponding to these solutions are shown in Fig.~\ref{fig:TriangularSymmetryGroups}. The symmetry group of such tiling is called \emph{triangular symmetry group} \cite[pp.~158]{TriangularSymmetryGroups} and is denoted by $(p,q,r)$.

We can observe these tilings in almost all numerically obtained QRACs discussed in Sect.~\ref{sect:Numerical}. They are formed when the great circles corresponding to measurements partition the Bloch sphere into equal triangles. All such cases are summarized in Table~\ref{tab:TriangularSymmetryGroups}. Tilings appearing in \rac{2} and \rac{4} QRACs that are not mentioned in the table can be seen as degenerate cases.

\begin{table}[!ht]
  \centering
  \begin{tabular}{c|c|l|l}
    $n$ & $(p,q,r)$ & Polyhedron                     & Section and figure                                 \\
    \hline
    $3$ & $(2,2,2)$ & octahedron                     & Sect.~\ref{sect:QRAC2,QRAC3}, Fig.~\ref{fig:QRAC3} \\
    $5$ & $(2,2,4)$ & normalized octagonal dipyramid & Sect.~\ref{sect:QRAC5},       Fig.~\ref{fig:QRAC5} \\
    $6$ & $(2,3,3)$ & normalized tetrakis hexahedron & Sect.~\ref{sect:QRAC6},       Fig.~\ref{fig:QRAC6} \\
    $9$ & $(2,2,2)$ & octahedron                     & Sect.~\ref{sect:QRAC9},       Fig.~\ref{fig:QRAC9} \\
  \end{tabular}
  \caption[Triangular symmetry groups of numerical \rac{n} QRACs]{Triangular symmetry groups of numerical \rac{n} QRACs.}
  \label{tab:TriangularSymmetryGroups}
\end{table}

The tilings corresponding to triangular symmetry groups $(2,3,4)$ and $(2,3,5)$ do not appear in numerically obtained codes. Thus we will use them to construct new (symmetric) \rac{9} and \rac{15} QRACs with SR in Sects.~\ref{sect:QRAC9[sym]} and \ref{sect:QRAC15[sym]}, respectively. To each tiling one can associate a corresponding polyhedron with equal triangular faces. The polyhedra corresponding to tilings $(2,3,4)$ and $(2,3,5)$ are called the normalized\footnote{\emph{Normalized} means that all vectors pointing from the origin to the vertices of the polyhedron are rescaled to have unit norm.} \emph{disdyakis dodecahedron} and the normalized \emph{disdyakis triacontahedron}, respectively.

\vertskip

Polyhedra arising from both types of symmetric great circle arrangements (qua\-si\-re\-gu\-lar polyhedra and triangular symmetry groups) are summarized in Table~\ref{tab:GreatCircleSolids}. The great circle arrangements corresponding to the four marked polyhedra do not appear in numerically obtained codes, so we will use them to construct new (symmetric) QRACs with SR.

\begin{table}[!ht]
  \centering
  \begin{tabular}{r|r|r|c|l}
    $n$ & \multicolumn{2}{|c|}{Faces} & $(p,q,r)$ & Polyhedron \\
    \hline
    $3$ &   $8$ &   $8$ & $(2,2,2)$ & octahedron                              \\
    $4$ &  $14$ &  $14$ &    QR     & cuboctahedron                        \V \\
    $6$ &  $32$ &  $32$ &    QR     & icosidodecahedron                    \V \\
    $6$ &  $24$ &  $32$ & $(2,3,3)$ & normalized tetrakis hexahedron          \\
    $9$ &  $48$ &  $74$ & $(2,3,4)$ & normalized disdyakis dodecahedron    \V \\
   $15$ & $120$ & $212$ & $(2,3,5)$ & normalized disdyakis triacontahedron \V
  \end{tabular}
  \caption[Polyhedra whose edges form great circles]{Polyhedra whose edges form great circles. The first column indicates the number of great circles. The next two indicate, respectively, the number of faces of the polyhedron and the maximal number of pieces achievable by cutting the sphere with $n$ great circles (see Sect.~\ref{sect:noQRAC4}). The fourth column indicates the triangular symmetry group (QR means quasiregular). The name of the polyhedron is given in the last column. Four marked polyhedra will be used in subsequent sections to construct symmetric QRACs with SR.}
  \label{tab:GreatCircleSolids}
\end{table}

%............................................%
\subsubsection{Symmetric \rac{4} QRAC with SR} \label{sect:QRAC4[sym]}
%............................................%

\image{0.4}{Exact4sym}{Symmetric \rac{4} QRAC with SR}{Symmetric \rac{4} QRAC with SR.}{fig:QRAC4[sym]}

\image{0.3}{Tetrahedron}{A regular tetrahedron and four great circles parallel to its faces}{A regular tetrahedron and four great circles parallel to its faces. The circles are determined by the measurements in the direction of the vertices of the tetrahedron. The numbers at the vertices indicate the Bloch vectors of basis states $\ket{\psi_0}$ of the measurements for the \rac{4} QRAC shown in Fig.~\ref{fig:QRAC4[sym]}.}{fig:Tetrahedron}

Recall that in Sect.~\ref{sect:noQRAC4} we proved that four planes passing through the center of the Bloch sphere partition its surface into at most $14$ parts. The most symmetric way to obtain $14$ parts is to use the four planes parallel to the four faces of a regular \emph{tetrahedron}. The measurements are along the four directions given by the vertices (see Fig.~\ref{fig:Tetrahedron}).

The simplest way to construct a regular tetrahedron is to choose four specific vertices of the cube, i.e., from the set $\frac{1}{\sqrt{3}} (\pm 1, \pm 1, \pm 1)$. For example, we could choose the ones with an odd number of positive coordinates. They provide us with the following pairs of antipodal Bloch vectors as the measurement bases:
\begin{equation}
  \begin{aligned}
    \vc{v}_1 &= \pm (+1,-1,-1) / \sqrt{3}, \\
    \vc{v}_2 &= \pm (-1,+1,-1) / \sqrt{3}, \\
    \vc{v}_3 &= \pm (-1,-1,+1) / \sqrt{3}, \\
    \vc{v}_4 &= \pm (+1,+1,+1) / \sqrt{3}.
  \end{aligned}
  \label{eq:SICPOVM}
\end{equation}
The qubit states corresponding to these Bloch vectors are as follows:
\begin{equation}
  \begin{aligned}
    M_1 &= M(+1,+1), \\
    M_2 &= M(+1,-1), \\
    M_3 &= M(-1,+1), \\
    M_4 &= M(-1,-1),
  \end{aligned}
\end{equation}
where
\begin{equation}
  M(s_1,s_2) = \set{
    \frac{1}{2} \sqrt{1+\frac{s_1}{\sqrt{3}}} \mx{ \sqrt{3}-s_1 \\ s_2 (s_1 - i) },
    \frac{1}{2} \sqrt{1-\frac{s_1}{\sqrt{3}}} \mx{ \sqrt{3}+s_1 \\ s_2 (i - s_1) }
  }.
\end{equation}
%ALTERNATIVE:
%\begin{equation}
%  \begin{aligned}
%    M_1&=\set{ \frac{1}{2} \sqrt{1+\frac{1}{\sqrt{3}}} \mx{ \sqrt{3}-1 \\ 1-i },
%               \frac{1}{2} \sqrt{1-\frac{1}{\sqrt{3}}} \mx{ \sqrt{3}+1 \\ i-1 } }, \\
%    M_2&=\set{ \frac{1}{2} \sqrt{1+\frac{1}{\sqrt{3}}} \mx{ \sqrt{3}-1 \\ i-1 },
%               \frac{1}{2} \sqrt{1-\frac{1}{\sqrt{3}}} \mx{ \sqrt{3}+1 \\ 1-i } }, \\
%%   M_3&=\set{ \frac{1}{2} \sqrt{1+\frac{1}{\sqrt{3}}} \mx{ i-1 \\ \sqrt{3}-1 },
%%              \frac{1}{2} \sqrt{1-\frac{1}{\sqrt{3}}} \mx{ 1-i \\ \sqrt{3}+1 } }, \\
%%   M_4&=\set{ \frac{1}{2} \sqrt{1+\frac{1}{\sqrt{3}}} \mx{ 1-i \\ \sqrt{3}-1 },
%%              \frac{1}{2} \sqrt{1-\frac{1}{\sqrt{3}}} \mx{ i-1 \\ \sqrt{3}+1 } }.
%    M_3&=\set{ \frac{1}{2} \sqrt{1-\frac{1}{\sqrt{3}}} \mx{ \sqrt{3}+1 \\ -1-i },
%               \frac{1}{2} \sqrt{1+\frac{1}{\sqrt{3}}} \mx{ \sqrt{3}-1 \\  1+i } }, \\
%    M_4&=\set{ \frac{1}{2} \sqrt{1-\frac{1}{\sqrt{3}}} \mx{ \sqrt{3}+1 \\  1+i },
%               \frac{1}{2} \sqrt{1+\frac{1}{\sqrt{3}}} \mx{ \sqrt{3}-1 \\ -1-i } }.
%  \end{aligned}
%\end{equation}

The great circles determined by these measurements partition the Bloch ball into $14$ parts. In fact, the grid formed by these circles is a projection of the edges of a \emph{cuboctahedron} (see the part on quasireglar polyhedra in Sect.~\ref{sect:GreatCircles}) on the surface of the Bloch ball (see Figs.~\ref{fig:QRAC4[sym]} and \ref{fig:Tetrahedron}). 

In each of the $14$ parts of the Bloch sphere a definite string can be encoded so that each bit can be recovered with a probability greater than $\frac{1}{2}$. Strange as it may seem, the remaining $2$ strings ($x=0000$ and $x=1111$) can be encoded anywhere without affecting the success probability of this QRAC. This is not a surprise if we recall from Sect.~\ref{sect:OptimalEncoding} that the optimal encoding $\vc{r}_x$ of the string $x$ is a unit vector in the direction of $\vc{v}_x$ given by equation (\ref{eq:vx}). In our case the Bloch vectors of the measurement bases point to the vertices of a regular tetrahedron centered at the origin. They clearly sum to zero, so $\vc{v}_{0000} = \vc{v}_{1111} = 0$. Thus the scalar product $\vc{r}_x \cdot \vc{v}_x$ in (\ref{eq:rxvx}) is also zero and the success probability does not depend on the vectors $\vc{r}_{0000}$ and $\vc{r}_{1111}$. Therefore, we will ignore these two strings in the following discussion.

\image{0.3}{Ellipses}{Strings encoded into the spherical square and the adjacent spherical triangles of the \rac{4} QRAC}{The relationship between the strings encoded into the spherical square and the adjacent spherical triangles according to the \rac{4} QRAC shown in Fig.~\ref{fig:QRAC4[sym]}.}{fig:Ellipses}

The other $14$ strings are encoded into the vertices of a normalized \emph{tetrakis hexahedron} (the \emph{convex hull} of the \emph{cube} and \emph{octahedron}). The string $x = x_1 x_2 x_3 x_4$ is encoded into the Bloch vector $\vc{r}(x) = \vc{r}_{w}(x)$, where
\begin{equation}
  w = x_1 \XOR x_2 \XOR x_3 \XOR x_4 \in \set{0,1}
\end{equation}
is the parity of the input. In the case $w=0$ the encoding points are the vertices $(\pm 1, 0, 0) \cup (0, \pm 1, 0) \cup (0, 0, \pm 1)$ of an \emph{octahedron}:
\begin{equation}
  \vc{r}_0(x) = (-1)^{x_4}
  \mx{
    1 - \abs{x_1 - x_4} \\
    1 - \abs{x_2 - x_4} \\
    1 - \abs{x_3 - x_4}
  }.
\end{equation}
But for $w=1$ we get the vertices $(\pm 1, \pm 1, \pm 1)/\sqrt{3}$ of a \emph{cube}:
\begin{equation}
  \vc{r}_1(x) = \frac{(-1)^{x_1 x_2 + x_3 x_4}}{\sqrt{3}}
  \mx{
    (-1)^{x_1 + x_4} \\
    (-1)^{x_2 + x_4} \\
    (-1)^{x_3 + x_4}
  }.
\end{equation}
Note that the Bloch vectors $\vc{r}_1(x)$ are the vertices of the same cube as the Bloch vectors of the \rac{3} QRAC discussed in Sect.~\ref{sect:KnownQRAC3}.

One can observe the following properties of this encoding. The surface of the Bloch ball is partitioned into $6$ \emph{spherical squares} and $8$ \emph{spherical triangles}. Strings with $w=0$ and $w=1$ are encoded into squares and triangles, respectively. If $w=1$ ($x=1000$ or $x=0111$ and their permutations), the string has one bit that differs from the other three. Such a string is encoded into the basis state of the corresponding measurement so that this bit can be recovered with certainty. If $w=0$, the string is encoded into a square and has the following property: each of its bits takes the value that occurs more frequently at the same position in the strings of the four neighboring triangles (see Fig.~\ref{fig:Ellipses} as an example).

The corresponding encoding function is $E(x)=\alpha_w\ket{0}+\beta_w\ket{1}$ with coefficients $\alpha_0$, $\beta_0$ and $\alpha_1$, $\beta_1$ explicitly given by
\begin{equation}
  \left\{
  \begin{aligned}
    \alpha_0 &= \sqrt{\frac{1}{2} + (-1)^{x_4} \frac{1 - \abs{x_3 - x_4}}{2}}, \\
    \beta_0  &= x_3 x_4 + (-1)^{x_4} \frac{1 - \abs{x_1 - x_4} + i \bigl(1 - \abs{x_2 - x_4}\bigr)}{\sqrt{2}},
  \end{aligned}
  \right.
\end{equation}
and
\begin{equation}
  \left\{
  \begin{aligned}
    \alpha_1 &= \sqrt{\frac{1}{2} + \frac{s(x)}{2 \sqrt{3}}}, \\
    \beta_1  &= (-1)^{x_3} s(x) \frac{(-1)^{x_1} + i (-1)^{x_2}}{\sqrt{6 + s(x) 2 \sqrt{3}}},
  \end{aligned}
  \right.
\end{equation}
where $s(x) \in \set{-1,1}$ is given by
\begin{equation}
  s(x) = (-1)^{x_1 x_2 + x_3 x_4 + x_3 + x_4}.
\end{equation}
The $14$ coefficients $\beta_0$ and $\beta_1$ are the roots of the polynomial
\begin{equation}
  \beta (\beta - 1) (4 \beta^4 - 1) (36 \beta^8 + 24 \beta^4 + 1).
\end{equation}

Using input randomization we get the same success probability for any input:
\begin{equation}
  p = \frac{1}{2} + \frac{2+\sqrt{3}}{16} \approx \p{4[sym]}.
\end{equation}
It is surprising that despite higher symmetry (compare Figs.~\ref{fig:QRAC4} and \ref{fig:QRAC4[sym]}) this QRAC has a lower success probability than the \rac{4} QRAC discussed in Sect.~\ref{sect:QRAC4}.

%............................................%
\subsubsection{Symmetric \rac{6} QRAC with SR} \label{sect:QRAC6[sym]}
%............................................%

\image{0.4}{Exact6sym}{Symmetric \rac{6} QRAC with SR}{Symmetric \rac{6} QRAC with SR.}{fig:QRAC6[sym]}

According to the discussion in Sect.~\ref{sect:noQRAC4}, six great circles can cut the sphere into at most $32$ parts. It turns out that there is a very symmetric arrangement that achieves this maximum. Observe that the \emph{dodecahedron} has $12$ faces and diametrically opposite ones are parallel. For each pair of parallel faces we can draw a plane through the origin parallel to both faces. These six planes intersect the sphere in six great circles that define our measurements. They are the projections of the edges of the \emph{icosidodecahedron} (see Fig.~\ref{fig:Quasiregular}), which is one of the quasiregular polyhedra discussed in Sect~\ref{sect:GreatCircles}.

There is another way to describe these measurements. Notice that the \emph{icosahedron} (the dual of the dodecahedron) has $12$ vertices that consist of six antipodal pairs. Our measurements are along the six directions defined by these pairs. The coordinates of the vertices of the icosahedron are as follows:
\begin{equation}
  \frac{1}{\sqrt{1+\tau^2}} (0, \pm \tau, \pm 1) \cup
  \frac{1}{\sqrt{1+\tau^2}} (\pm 1, 0, \pm \tau) \cup
  \frac{1}{\sqrt{1+\tau^2}} (\pm \tau, \pm 1, 0),
  \label{eq:Icosahedron}
\end{equation}
where $\tau = \frac{1+\sqrt{5}}{2}$ is the \emph{golden ratio} (the positive root of $x^2 = x + 1$).

Each of the $64$ strings is encoded either in a vertex of an icosahedron or dodecahedron. They have $12$ and $20$ vertices, respectively, so there are two strings encoded in each vertex. The union of the icosahedron and the dodecahedron is called the \emph{pentakis dodecahedron} (see the polyhedron in Fig.~\ref{fig:QRAC6[sym]}).

The success probability of this code is
\begin{equation}
  p = \frac{1}{2} + \frac{\sqrt{5}}{32} + \frac{1}{96} \sqrt{75 + 30 \sqrt{5}} \approx \p{6[sym]}.
\end{equation}

%............................................%
\subsubsection{Symmetric \rac{9} QRAC with SR} \label{sect:QRAC9[sym]}
%............................................%

\doubleimage
{Exact9sym}{Symmetric \rac{9} QRAC with SR}{Symmetric \rac{9} QRAC with SR.}{fig:QRAC9[sym]}
{Exact15sym}{Symmetric \rac{15} QRAC with SR}{Symmetric \rac{15} QRAC with SR.}{fig:QRAC15[sym]}

This code is based on the triangular tiling of the sphere whose symmetry group is $(2,3,4)$. The great circles corresponding to measurements coincide with the projection of the edges of the \emph{normalized disdyakis dodecahedron}. We can think of this QRAC as the union of \rac{3} and \rac{6} codes. The first three measurements are along the coordinate axis as in the \rac{3} QRAC discussed in Sect.~\ref{sect:KnownQRAC3}. The remaining six measurements are exactly the same as for the \rac{6} code discussed in Sect.~\ref{sect:QRAC6} (see Figs.~\ref{fig:Cube} and \ref{fig:QRAC6}), i.e., they are along the six antipodal pairs of $12$ vertices of the cuboctahedron shown in Fig.~\ref{fig:Quasiregular}. Note that a great circle of the first kind cannot be transformed to a great circle of the second kind via an operation from the symmetry group of the code.\footnote{For the other three symmetric codes we can transform any circle to any other in this way, i.e., the symmetry group acts transitively on the circles.}

The resulting QRAC is shown in Fig.~\ref{fig:QRAC9[sym]} and its success probability is
\begin{equation}
  p \approx \p{9[sym]}.
\end{equation}

%.............................................%
\subsubsection{Symmetric \rac{15} QRAC with SR} \label{sect:QRAC15[sym]}
%.............................................%

The triangular symmetry group of this code is $(2,3,5)$ and the great circles coincide with the projection of the edges of the \emph{normalized disdyakis triacontahedron}. To understand what the measurements are in this case, note that the \emph{icosidodecahedron} (see Fig.~\ref{fig:Quasiregular}) has $30$ vertices. Their coordinates are:
\begin{gather}
    (\pm 1, 0, 0) \cup
    (0, \pm 1, 0) \cup
    (0, 0, \pm 1), \\
    \frac{1}{2 \tau} (\pm 1, \pm \tau, \pm \tau^2) \cup
    \frac{1}{2 \tau} (\pm \tau^2, \pm 1, \pm \tau) \cup
    \frac{1}{2 \tau} (\pm \tau, \pm \tau^2, \pm 1).
  \label{eq:Icosidodecahedron}
\end{gather}
The measurement directions are given by $15$ antipodal pairs of these vertices.

The obtained QRAC is shown in Fig.~\ref{fig:QRAC15[sym]}. Its success probability is
\begin{equation}
  p \approx \p{15[sym]}.
\end{equation}

%---------------------%
\subsection{Discussion} \label{sect:Discussion}
%---------------------%

In this section we will compare and analyze the numerical and symmetric QRACs with SR described in Sects.~\ref{sect:Numerical} and \ref{sect:Symmetric}, respectively. Hopefully these observations can be used to find new \rac{n} QRACs with SR or to generalize the existing ones (see Sect.~\ref{sect:Generalizations} for possible generalizations).

%.........................................................%
%\subsubsection{Comparison of numerical and symmetric QRACs}
%.........................................................%

The success probabilities of numerical and symmetric QRACs with SR are given in Tables~\ref{tab:NumericalProbabilities} and \ref{tab:SymmetricProbabilities}, respectively (see Table~\ref{tab:ComparisonOfProbabilities} for the comparison). We see that none of the symmetric codes discussed in Sect.~\ref{sect:Symmetric} is optimal. However, the success probabilities of numerical and symmetric codes do not differ much. Moreover, recall that there are two more symmetric codes (\rac{3} and \rac{6}) that coincide with the numerically obtained ones (see Table~\ref{tab:GreatCircleSolids}). Concerning these two codes we can reach more optimistic conclusions: the \rac{3} QRAC is optimal (see Sect.~\ref{sect:UpperBound}) and possibly the \rac{6} QRAC (see Sect.~\ref{sect:QRAC6}) is as well, since we did not manage to improve it in Sect.~\ref{sect:QRAC6[sym]}.

\begin{table}[!ht]
  \centering
  \begin{tabular}{r|c|r}
    $n$                   & Section                & \multicolumn{1}{c}{Probability} \\
    \hline
    \multirow{2}{*}{$4$}  & \ref{sect:QRAC4}       &  $\p{4}$ \\
                          & \ref{sect:QRAC4[sym]}  & $>\p{4[sym]}$ \\
    \hline
    \multirow{2}{*}{$6$}  & \ref{sect:QRAC6}       &  $\p{6}$ \\
                          & \ref{sect:QRAC6[sym]}  & $>\p{6[sym]}$ \\
    \hline
    \multirow{2}{*}{$9$}  & \ref{sect:QRAC9}       &  $\p{9}$ \\
                          & \ref{sect:QRAC9[sym]}  & $>\p{9[sym]}$ \\
    \hline
    \multirow{2}{*}{$15$} &                        &  $\p{15}$ \\
                          & \ref{sect:QRAC15[sym]} & $>\p{15[sym]}$
  \end{tabular}
  \caption[Comparison of the success probabilities of \rac{n} QRACs with SR]{Comparison of the success probabilities of \rac{n} QRACs with SR. For each $n$ the first probability corresponds to a numerical code, but the second one to a symmetric code. For $n=15$ we do not have numerical results, so we just use five measurements along each coordinate. In fact, the numerical \rac{4} and \rac{9} QRACs also use measurements only along coordinate axis. The \rac{6} QRAC with two measurements along each coordinate axis has success probability $\p{6[ort]}$.}
  \label{tab:ComparisonOfProbabilities}
\end{table}

%..................................%
%\subsubsection{Special measurements}
%..................................%

We just saw that symmetric QRACs are not necessarily optimal. One could ask if there are other heuristic methods that potentially could be used to construct good QRACs with SR. We will give a few speculations in the remainder of this section. In particular, we will discuss some special kinds of measurements that could be useful. To make the discussion more general, we will not restrict ourselves to the case of a single qubit.

%. . . . . . . . . . . . . . . . . . . . %
%\subsubsubsection{Mutually unbiased bases}
%. . . . . . . . . . . . . . . . . . . . %

\begin{definition}
Two orthonormal bases $\base_1$ and $\base_2$ of $\C^d$ are called \emph{mutually unbiased bases} (MUBs) if $\abs{\braket{\psi_1}{\psi_2}}^2 = \frac{1}{d}$ for all $\ket{\psi_1} \in \base_1$ and $\ket{\psi_2} \in \base_2$. The maximal number of pairwise mutually unbiased bases in $\C^d$ is $d+1$. \cite{MUBs}
\end{definition}

When $d=2$, equation (\ref{eq:ScalarProduct}) implies that Bloch vectors corresponding to basis vectors of \emph{different} mutually unbiased bases are orthogonal\footnote{The notion of the Bloch vector can be generalized for $d \geq 2$ (see \cite{Kimura}). Then a similar duality holds as well (see equation (\ref{eq:GeneralizedScalarProduct}) in Sect.~\ref{sect:Generalizations}): mutually unbiased quantum states correspond to orthogonal Bloch vectors, but orthogonal quantum states correspond to ``mutually unbiased'' Bloch vectors, i.e., equiangular vectors pointing to the vertices of a regular simplex.}. There are three such bases in $\C^2$ and their Bloch vectors correspond to the vertices of an octahedron. For example, the bases $M_1$, $M_2$, and $M_3$ defined in Sects.~\ref{sect:KnownQRAC2} and \ref{sect:KnownQRAC3} are MUBs (they correspond to measuring along $x$, $y$, and $z$ axis).

Note that the measurements for numerical \rac{2}, \rac{3}, \rac{4}, and \rac{9} QRACs are performed entirely using MUBs and three out of five measurement bases for numerical \rac{5} QRAC are also MUBs.

%. . . . . . . . . . . . . . . . . . . . . . . . . . . . . %
%\subsubsubsection{Symmetric, informationally complete POVMs}
%. . . . . . . . . . . . . . . . . . . . . . . . . . . . . %

There is another very special measurement that appears in our QRACs.

\begin{definition}
A set of $d^2$ unit vectors $\ket{\psi_i} \in \C^d$ is called \emph{symmetric, informationally complete POVM} (\mbox{SIC-POVM}) if $\abs{\braket{\psi_i}{\psi_j}}^2 = \frac{1}{d+1}$ for any $i,j$. \cite{SIC-POVMs}
\end{definition}

For $d=2$ there are four such quantum states. Again, from equation (\ref{eq:ScalarProduct}) we see that the inner product between any two Bloch vectors corresponding to these states is $-\frac{1}{3}$. Such equiangular Bloch vectors are exactly the vertices of a tetrahedron, e.g., $\vc{v}_1$, $\vc{v}_2$, $\vc{v}_3$, $\vc{v}_4$ defined in (\ref{eq:SICPOVM}). They were used in Sect.~\ref{sect:QRAC4[sym]} to construct a symmetric \rac{4} QRAC.

%\vertskip

Let us compare numerical and symmetric \rac{4} QRACs from Sects.~\ref{sect:QRAC4} and \ref{sect:QRAC4[sym]}, respectively. The first one is based on MUBs and is not very symmetric. Moreover, it looks like we are wasting one out of four bits, since two measurements are along the same direction. However, all measurement directions in the Bloch sphere are mutually orthogonal, except the ones that coincide. The second \rac{4} code is based on a \mbox{SIC-POVM} and is very symmetric. However, it appears that in this case we are wasting two out of $16$ strings, since the way we encode them does not influence the success probability.

Now, if we compare the success probabilities of both \rac{4} codes (see Table~\ref{tab:ComparisonOfProbabilities}), we see that the first one is clearly better. Hence we conclude that
\begin{center}
  \emph{orthogonality} of the measurement Bloch vectors \\
  seems to be more important than \emph{symmetry}.
\end{center}
One can come to a similar conclusion when comparing \rac{9} and \rac{15} codes. Thus it looks like using roughly $n/3$ measurements along each coordinate axis is quite a good heuristic for constructing \rac{n} QRAC with SR (see Sect.~\ref{sect:Lower2}).

%%%%%%%%%%%%%%%%%%%%
\section{Conclusion} \label{sect:Conclusion}
%%%%%%%%%%%%%%%%%%%%

%------------------%
\subsection{Summary} \label{sect:Summary}
%------------------%

We study the \emph{worst} case success probability of random access codes with shared randomness. Yao's principle (see equation (\ref{eq:Yao}) in Sect.~\ref{sect:Yao}) and input randomization (see Theorem~\ref{thm:SRtoPure}) is applied to consider the \emph{average} case success probability instead (this works in both classical and quantum cases).

In Sect.~\ref{sect:ClassicalBound} we construct an optimal \emph{classical} \rac{n} RAC with SR as follows (see Theorem~\ref{thm:OptimalClassical}): Alice XORs the input string with $n$ random bits she shares with Bob, computes the majority and sends it to Bob; if the \mbox{$i$th} bit is requested, Bob outputs the \mbox{$i$th} bit of the shared random string XORed with the received bit. The asymptotic success probability of this code is given by equation (\ref{eq:Approx}) in Sect.~\ref{sect:ClassicalBound}:
\begin{equation}
  p(n) \approx \frac{1}{2} + \frac{1}{\sqrt{2 \pi n}}.
\end{equation}

The worst case success probability of an optimal \emph{quantum} RAC with SR satisfies the following inequalities:
\begin{equation}
  \frac{1}{2} + \sqrt{\frac{2}{3 \pi n}} \leq p(n) \leq \frac{1}{2} + \frac{1}{2 \sqrt{n}}.
\end{equation}
These upper and lower bounds are obtained in Sects.~\ref{sect:UpperBound} and \ref{sect:Lower1}, respectively.

The success probabilities of classical and quantum RACs with SR are compared in Fig.~\ref{fig:Comparison}.

\image{0.9}{PlotComparison}{Comparison of success probabilities of classical and quantum RACs}{Comparison of success probabilities of classical and quantum RACs. Black dots correspond to optimal classical RACs and the dotted line shows the asymptotic behavior. Circles correspond to numerical QRACs and dashed lines to quantum upper and lower bounds, respectively.}{fig:Comparison}

%------------------------------------------%
\subsection{Open problems for \rac{n} QRACs} \label{sect:OpenProblems}
%------------------------------------------%

\emph{Lower bound by orthogonal measurements.} The known \rac{2} and \rac{3} QRACs (see Sect.~\ref{sect:KnownQRACs}) and our numerical \rac{4} and \rac{9} QRACs with SR (see Sects.~\ref{sect:QRAC4} and \ref{sect:QRAC9}) suggest that MUBs can be used to obtain good QRACs (see Sect.~\ref{sect:Discussion}). Indeed, \rac{n} QRAC with orthogonal measurements (see Sect.~\ref{sect:Lower2}) is better than the one with random measurements (see Sect.~\ref{sect:Lower1}). However, we were not able to obtain an asymptotic expression for its success probability. This is equivalent to obtaining an asymptotic expression for (\ref{eq:CubicDistance}), i.e., the average distance traveled by a random walk with roughly $n/3$ steps along each coordinate axis.

In Fig.~\ref{fig:Optimality} we show how close both lower bounds and the success probabilities of numerical QRACs are relative to the upper bound from Sect.~\ref{sect:UpperBound}. Assume that Alice and Bob are given a point in the light gray region in Fig.~\ref{fig:Optimality} and asked to construct a QRAC with SR whose success probability is at least as good. Then they can use measurements along coordinate axis as in Sect.~\ref{sect:Lower2}. If the point is in the dark gray region, they can use one of the numerical codes from Sect.~\ref{sect:Numerical}. However, if it is in the white region, they have to solve the next open problem.

\emph{Optimality of numerical codes.} Prove the optimality of any of the numerically obtained \rac{n} QRACs with SR for $n \geq 4$ discussed in Sect.~\ref{sect:Numerical}. Are the optimal constructions unique (up to isomorphism)?

\emph{Prove the ``Homer conjecture''} that quantum RACs with SR are at least as good as their classical counterparts in the sense discussed at the end of Sect.~\ref{sect:OptimalEncoding}.

\image{1.0}{PlotQuantum3}{\mbox{Close-up} of the region between the quantum upper and lower bound}{\mbox{Close-up} of the narrow region in Fig.~\ref{fig:Comparison} between the quantum upper and lower bound (everything is shown relative to the upper bound that corresponds to the horizontal axis). Circles indicate the gap between the upper bound and numerical QRACs with SR. Black squares show the gap between the upper bound and the lower bound by measurements along coordinate axes (see Fig.~\ref{fig:OrthBound}). Dashed line corresponds to the gap between the quantum upper bound and the lower bound by random measurements.}{fig:Optimality}

%-----------------------------------%
\subsection{Possible generalizations} \label{sect:Generalizations}
%-----------------------------------%

There are several ways that random access codes with SR can be generalized, both classically and quantumly. In particular, one can consider
\begin{itemize}
  \item \racp{n} codes in base $d$, $d > 2$ (called \emph{qudits} in the quantum case),
  \item \racm{n}{m} codes with $m > 1$,
  \item \racm{n}{m} codes where any $k > 1$ bits (qubits) must be recovered.
\end{itemize}
Of course, one can consider several of these generalizations simultaneously. In the setting without shared randomness such generalizations have already appeared in the literature (see Sect.~\ref{sect:History}). We will briefly introduce the notion of the generalized Bloch vector which we believe can be useful to study such generalizations (it has been explicitly used in \cite{No41} to prove the impossibility of \racm{2^m}{m} QRAC with $p>1/2$, when SR is not allowed).

The notion of the Bloch vector introduced in Sect.~\ref{sect:BlochSphere} can be generalized for $d>2$. For example, to write down the density matrix for $d=3$ one uses eight \emph{\mbox{Gell-Mann} matrices} denoted by $\lambda_i$ instead of three Pauli matrices $\sigma_i$ defined in equation (\ref{eq:Pauli}). In general one needs $d^2-1$ matrices $\lambda_i$ that span the set of all traceless $d \times d$ Hermitian matrices. A convenient choice of $\lambda_i$ are the so called \emph{generalized Gell-Mann matrices}, also known as the \emph{generators of the Lie algebra of $SU(d)$}, given in \cite{SUGenerators}. We can use them to generalize equation (\ref{eq:Rho}):
\begin{equation}
  \rho = \frac{1}{d} \left( I + \sqrt{\frac{d(d-1)}{2}} \; \vc{r} \cdot \vc{\lambda} \right),
  \label{eq:GeneralRho}
\end{equation}
where $\vc{\lambda} = (\lambda_1, \dotsc, \lambda_{d^2-1})$ and $\vc{r} \in \R^{d^2-1}$ is the \emph{generalized Bloch vector}\footnote{Our normalization follows \cite{Positivity}, where the generalized Bloch sphere has radius $1$. Another widely used convention is to assume radius $\sqrt{2(d-1)/d}$, e.g., see \cite{Kimura, Kimura2}.\label{foo:Norm}} or \emph{coherence vector} \cite{Kimura, Positivity}. Since the $\lambda_i$ are chosen so that $\tr \lambda_i = 0$ and $\tr (\lambda_i \lambda_j) = 2 \delta_{ij}$, equation (\ref{eq:ScalarProduct}) generalizes to
\begin{equation}
  \abs{\braket{\psi_1}{\psi_2}}^2 = \tr(\rho_1 \rho_2) =
  \frac{1}{d} \bigl( 1 + (d-1) \; \vc{r}_1 \cdot \vc{r}_2 \bigr).
  \label{eq:GeneralizedScalarProduct}
\end{equation}

To recover a base $d$ digit, we perform a measurement in an orthonormal basis $\set{\ket{\psi_1}, \dotsc, \ket{\psi_d}}$ of $\C^d$. Since $\abs{\braket{\psi_i}{\psi_j}}^2 = 0$ for any pair $i \neq j$, the corresponding Bloch vectors must satisfy $\vc{r}_i \cdot \vc{r}_j = -\frac{1}{d-1}$. This means that they are the vertices of a regular simplex that belongs to a \mbox{$(d-1)$-dimensional} subspace and is centered at the origin (for $d=2$ this is just a line segment).

On the other hand, in Sect.~\ref{sect:Discussion} we observed that it might be advantageous to perform measurements along orthogonal directions in the Bloch sphere to recover different bits. Let $\vc{r_i} \perp \vc{s_j}$ be two orthogonal Bloch vectors. Then the corresponding quantum states $\ket{\psi_i}$ and $\ket{\varphi_j}$ must satisfy $\abs{\braket{\psi_i}{\varphi_j}}^2 = \frac{1}{d}$. This is exactly the case when $\ket{\psi_i}$ and $\ket{\varphi_j}$ belong to \emph{different} mutually unbiased bases (see Sect.~\ref{sect:Discussion}). This suggests that distinct bits should be recovered using mutually unbiased measurements. Note that the Bloch vectors of the states from two MUBs correspond to the vertices of two regular simplices in mutually orthogonal subspaces. In general, the Bloch vectors of the states from all $d+1$ MUBs are the vectices of the so called \emph{complementarity polytope} \cite{ComplementarityPolytope}, which is just the octahedron when $d=2$.

The conclusion of Sect.~\ref{sect:Discussion} and our discussion above suggests the use of MUBs to construct QRACs also for $d>2$. Such attempts have already been made \cite{Galvao,Severini}. Galv\~ao \cite{Galvao} gives an example of \racx{2}{0.79} QRAC for qutrits ($d=3$) and Casaccino et al. \cite{Severini} numerically investigate \rac{(d+1)} QRACs based on MUBs for \mbox{$d$-level} quantum systems. However, there is a significant difference between the qubit and qudit case. Recall that for $d=2$ the optimal way to encode the message $x$ is to use a unit vector in the direction of $\vc{v}_x$ (see equation (\ref{eq:vx}) in Sect.~\ref{sect:OptimalEncoding}). A similar expression for $\vc{v}_x$ can be obtained when $d>2$, but then the matrix $\rho$ assigned to $\vc{r} = \vc{v}_x/\norm{\vc{v}_x}$ according to equation (\ref{eq:GeneralRho}) is not necessarily \emph{positive semidefinite} and hence may not be a valid density matrix. However, it is known that for small enough values of $\norm{\vc{r}}$ (in our \mbox{case$^\text{\ref{foo:Norm}}$} $\norm{\vc{r}} \leq \frac{1}{d-1}$), \emph{all} Bloch vectors correspond to valid density matrices \cite{Kimura2}. Hence, if we cannot use the pure state corresponding to $\vc{v}_x/\norm{\vc{v}_x}$, we can always use the mixed state corresponding to $\frac{1}{d-1} \vc{v}_x/\norm{\vc{v}_x}$. If one knows more about the shape of the region corresponding to valid quantum states, one can make a better choice and use a longer vector, possibly in a slightly different direction. Unfortunately, apart from convexity, not much is known about this shape. Already for $d=3$ it is rather involved \cite{Kimura, Kimura2}. In general the conditions (in terms of the coordinates of the generalized Bloch vector $\vc{r}$) for $\rho$ to have \mbox{non-negative} eigenvalues are given in \cite{Positivity, Kimura}.

However, for proving only an upper bound, one can ignore all such details. Thus we believe it might be possible to generalize our upper bound (see Sect.~\ref{sect:UpperBound} and \ref{sect:GeneralUpperBound}) using generalized Bloch vectors. It would be interesting to compare such a result with the upper bound (\ref{eq:HypercontractiveUpperBound}) that was obtained by \mbox{Ben-Aroya} et al. in \cite{Hypercontractive}.

Finally, another way of generalizing QRACs with SR is to add other resources. A good candidate is \emph{shared entanglement}.

%--------------------------%
\subsection{Acknowledgments}
%--------------------------%

Most of these results were obtained in the summer of 2006 during the undergraduate summer research program at the University of Waterloo. We would like to thank University of Waterloo and IQC for their hospitality. We would also like to thank Andrew Childs and Ashwin Nayak for many useful comments on the preliminary version of this manuscript. In particular, the simplification of the proof of Lemma~\ref{lm:Sv} is due to Ashwin.

%   +------------------+   %
%  /                    \  %
% |                      | %
         \appendix
% |                      | %
%  \                    /  %
%   +------------------+   %

%%%%%%%%%%%%%%%%%%%%%%%%%%%%%%%%%%%%%%%%%%%%%%
\section{Combinatorial interpretation of sums} \label{app:MagicFormulas}
%%%%%%%%%%%%%%%%%%%%%%%%%%%%%%%%%%%%%%%%%%%%%%

In this appendix we give a combinatorial interpretation of the sums in equations (\ref{eq:Odd}) and (\ref{eq:Even}) from Sect.~\ref{sect:ClassicalBound}. This interpretation is formalized in the form of equations (\ref{eq:Magic1}) and (\ref{eq:Magic2}). We referred to these equations in Sect.~\ref{sect:ClassicalBound} to obtain an exact formula (\ref{eq:ApproxEvenOdd}) for the average success probability of an optimal classical RAC.

Let us consider a set of $n$ distinct elements and count \emph{the number of ways to choose more than half of $n$ elements and mark one of them as special}. There are two approaches: first choose the elements and then mark the special one or first choose the special one and then choose the others.

In the first scenario there are $i\binom{n}{i}$ ways to choose exactly $i$ elements and mark one of them as special. If we have to choose more than half, we obtain the sum $\sum_{i=m+1}^n i\binom{n}{i}$ where $m=\floor{\frac{n}{2}}$.

In the second scenario there are $n$ ways to choose the special element. Then there are $l=n-1$ elements left and at least $m$ of them must be taken to have more than half of $n$ elements in total. The number of ways to do this corresponds to the number of subsets of size at least $m$ of a set of $l$ distinct elements. Let us consider the cases when $l$ is odd and even separately.

If $n=2m$ then $l=2m-1$ is odd. To each ``large'' subset of size $i$ \mbox{($m \leq i \leq l$)} we can assign a unique ``small'' subset (the complement set) of size $l-i$ ($0 \leq l-i \leq m-1$), and vice versa. Each subset is either ``large'' or ``small'', so the number of ``large'' and ``small'' subsets is the same---it is half of the number of all subsets, i.e., $2^l/2 = 2^{2m-2}$.

If $n=2m+1$ then $l=2m$ is even. The ``large'' subsets have $m+1 \leq i \leq l$ elements, but the ``small'' ones: $0 \leq l-i \leq m-1$. Let us call the remaining $\binom{2m}{m}$ subsets of size $m$ ``balanced''. In this case the bijection between the ``large'' and ``small'' subsets holds as well, but it maps the ``balanced'' subsets to themselves. Thus the total number of all subsets is $\text{``large''} + \text{``small''} + \binom{2m}{m} = 2^l$. The number of ``large'' subsets is $\left(2^{l} + \binom{2m}{m}\right)/2 = 2^{2m-1} + \frac{1}{2} \binom{2m}{m}$.

Both counting methods must give the same results, so for odd and even $n$ we obtain, respectively:
\begin{align}
  &\sum_{i=m+1}^{2m+1} i \binom{2m+1}{i} = (2m+1) \cdot \left(2^{2m-1} + \frac{1}{2} \binom{2m}{m}\right)
  \label{eq:Magic1}, \\
  &\sum_{i=m+1}^{2m} i \binom{2m}{i} = 2m \cdot 2^{2m-2}
  \label{eq:Magic2}.
\end{align}
We would like to acknowledge Juris Smotrovs for providing this interpretation.

%%%%%%%%%%%%%%%%%%%%%%%%%%%%%%%%%%%%%%%%%%%%%%
\section{POVMs versus orthogonal measurements} \label{app:POVMs}
%%%%%%%%%%%%%%%%%%%%%%%%%%%%%%%%%%%%%%%%%%%%%%

An orthogonal (or \emph{von Neumann}) measurement is not the most general type of measurement of a quantum system. In general a POVM measurement \cite{NielsenChuang,Peres} may extract more information. In this appendix we show that in the qubit case POVMs can be simulated using a probabilistic combination of \emph{enhanced orthogonal measurements}, as defined in Sect.~\ref{sect:GeneralUpperBound} (such a measurement is either an orthogonal measurement or a constant function). To define a POVM we have to introduce the notion of a positive semidefinite matrix \cite{MatrixAnalysis}.

\begin{definition}
A complex square matrix $E$ is called \emph{positive semidefinite} if $\bra{\psi}E\ket{\psi} \geq 0$ for all $\ket{\psi}$. 
\end{definition}

An equivalent definition is that $E$ is diagonalizable and all eigenvalues of $E$ are real and non-negative. Thus $E$ is Hermitian.

\begin{definition}
A \emph{positive operator-valued measure} (POVM) is a set $\set{E_1,\dotsc,E_m}$ of positive semidefinite matrices such that $\sum_{i=1}^m E_i = I$. \cite{NielsenChuang,Peres}
\end{definition}

POVM measurements can have an arbitrary number of outcomes, but in the case of \rac{n} QRACs we have to consider only two-outcome single-qubit POVMs. Such a POVM can be specified by $\set{E_0,E_1}$, where $E_0$ is positive semidefinite and $E_1 = I - E_0$. Since $E_0$ is also Hermitian, we can find a basis $\base = \set{\ket{\psi_0}, \ket{\psi_1}}$ in which $E_0$ is diagonal, i.e., $E_0 = \smx{a&0\\0&b}$. In this basis $E_1 = \smx{1-a&0\\0&1-b}$. Since both $E_0$ and $E_1$ are positive semidefinite, $0 \leq a \leq 1$ and $0 \leq b \leq 1$. An arbitrary pure qubit state $\ket{\psi}$ in the basis $\base$ can be specified by (\ref{eq:Psi}). When $\ket{\psi}$ is measured, the probabilities of the outcomes are
\begin{equation}
  \left\{
  \begin{aligned}
    P_0 &= \bra{\psi} E_0 \ket{\psi} =   a   \cos^2 \frac{\theta}{2}  +   b   \sin^2 \frac{\theta}{2}, \\
    P_1 &= \bra{\psi} E_1 \ket{\psi} = (1-a) \cos^2 \frac{\theta}{2}  + (1-b) \sin^2 \frac{\theta}{2}.
  \end{aligned}
  \right.
  \label{eq:POVM}
\end{equation}

\image{0.8}{Simulation}{A simulation of the POVM measurement on a qubit using an orthogonal measurement and a \mbox{post-processing}}{A simulation of the POVM measurement $\set{E_0,E_1}$ on a qubit using an orthogonal measurement and a \mbox{post-processing} of the measurement result.}{fig:Simulation}

Let us consider the following process (see Fig.~\ref{fig:Simulation}) that simulates the POVM measurement $\set{E_0,E_1}$:
\begin{enumerate}
  \item perform an orthogonal measurement in the basis $\base = \set{\ket{\psi_0},\ket{\psi_1}}$,
  \item perform the following \mbox{post-processing} of the outcome of the measurement:
    \begin{itemize}
      \item \emph{if the outcome was $0$}: output $0$ with probability $a$ and output $1$ with probability $1-a$,
      \item \emph{if the outcome was $1$}: output $0$ with probability $b$ and output $1$ with probability $1-b$.
    \end{itemize}
\end{enumerate}
To see why this process is equivalent to the POVM measurement $\set{E_0,E_1}$, consider a pure qubit state $\ket{\psi}$ given by (\ref{eq:Psi}) in the basis $\base$. When $\ket{\psi}$ is measured in the basis $\base = \set{\ket{\psi_0},\ket{\psi_1}}$, the probabilities of the outcomes $0$ and $1$ are as follows (see also equation (\ref{eq:Projections}) in Sect.~\ref{sect:BlochSphere}):
\begin{equation}
  \left\{
  \begin{aligned}
    p_0 &= \abs{\braket{\psi_0}{\psi}}^2 = \cos^2{\frac{\theta}{2}}, \\
    p_1 &= \abs{\braket{\psi_1}{\psi}}^2 = \sin^2{\frac{\theta}{2}}.
  \end{aligned}
  \right.
\end{equation}
Now it is simple to verify that the process shown in Fig.~\ref{fig:Simulation} has the same outcome probabilities (\ref{eq:POVM}) as the POVM measurement. However, this process is not a probabilistic combination of enhanced orthogonal measurements, since it involves a probabilistic \mbox{post-processing} of the measurement result. To obtain the desired result, we have to modify it. The key idea is that with a certain probability the output can be produced without performing an actual measurement.

Let $\mu = \min \set{a,b}$. Whatever state is input to the process shown in Fig.~\ref{fig:Simulation}, the probability $P_0$ to output $0$ is at least $\mu$, because
\begin{equation}
  P_0 = p_0 a + p_1 b \geq (p_0 + p_1) \mu = \mu.
  \label{eq:IneqP0}
\end{equation}
Note that $\mu$ does not depend on the state being measured. This means that one can output $0$ with probability $\mu$ without performing an actual measurement. A similar lower bound holds for $P_1$ as well:
\begin{equation}
  P_1 = p_0 (1 - a) + p_1 (1 - b) \geq (p_0 + p_1) (1 - M) = 1 - M,
  \label{eq:IneqP1}
\end{equation}
where $M = \max \set{a,b} = a + b - \mu$. Let us consider the following probabilistic combination of four decoding strategies:
\begin{itemize}
  \item \emph{with probability $c_{0}$}: output $0$ without performing a measurement,
  \item \emph{with probability $c_{1}$}: output $1$ without performing a measurement,
  \item \emph{with probability $c_{01}$}: measure in the basis $\set{\ket{\psi_0}, \ket{\psi_1}}$,
  \item \emph{with probability $c_{10}$}: measure in the opposite basis $\set{\ket{\psi_1}, \ket{\psi_0}}$.
\end{itemize}
The resulting probabilities of outcomes for this process are
\begin{equation}
  \left\{
  \begin{aligned}
    P_0 &= c_0 + c_{01} p_0 + c_{10} p_1, \\
    P_1 &= c_1 + c_{01} p_1 + c_{10} p_0.
  \end{aligned}
  \right.
  \label{eq:ProbabilisticCombination}
\end{equation}
We can use the lower bounds (\ref{eq:IneqP0}) and (\ref{eq:IneqP1}) for $P_0$ and $P_1$, respectively, to assign the probabilities $c_0$, $c_1$, $c_{01}$, and $c_{10}$ in the following way:
\begin{equation}
  \left\{
  \begin{aligned}
    c_0    &= \mu, \\
    c_1    &= 1 - (a + b) + \mu, \\
    c_{01} &= a - \mu, \\
    c_{10} &= b - \mu
  \end{aligned}
  \right.
  \label{eq:Constants}
\end{equation}
(note that at least one of the probabilities $c_{01}$ or $c_{10}$ will be zero). It is not hard to verify that after the assignment (\ref{eq:Constants}) the probabilities $P_0$ and $P_1$ in (\ref{eq:ProbabilisticCombination}) will match the probabilities of outcomes (\ref{eq:POVM}) of the POVM measurement.

Thus for each qubit POVM given by $a$ and $b$ one can find a probabilistic combination of enhanced orthogonal measurements given by $c_0$, $c_1$, $c_{01}$, and $c_{10}$, such that in both cases the probabilities of outcomes are the same.

\begin{example}
For $a = b = 1/2$ we have $c_0 = c_1 = 1/2$ and $c_{01} = c_{10} = 0$, corresponding to random guessing (observe that $E_0 = E_1$ in this case).
\end{example}

\begin{example}
However, $a = 1$ and $b = 0$ corresponds to a projective measurement in basis $\set{\ket{\psi_0}, \ket{\psi_1}}$, because $c_{01} = 1$ and $c_{10} = c_0 = c_1 = 0$.
\end{example}

\begin{example}
Finally, $a = 1$ and $b = 1$ corresponds to a constant function $0$, because $c_0 = 1$ and $c_{01} = c_{10} = c_1 = 0$.
\end{example}

%%%%%%%%%%%%%%%%
% BIBLIOGRAPHY %
%%%%%%%%%%%%%%%%

\end{document}